\newcommand{\propnumber}{} % initialize
\newtheorem{proposition}{Proposition\propnumber}
\newcommand{\thmnewnumber}{} % initialize
\newtheorem{theorem}{Theorem\thmnewnumber}
\newcommand{\asmnumber}{} % initialize
\newtheorem{assumption}{Assumtption\asmnumber}
\newcommand{\defnumber}{} % initialize
\newtheorem{definition}{Definition\defnumber}
\newcommand{\asmnumberC}{C.\hspace{-0.25em}} % initialize
\newcommand{\exmnumber}{} % initialize
\DeclareMathAlphabet{\mathcalligra}{T1}{calligra}{m}{n}
\DeclareMathAlphabet{\mathlcal}{U}{dutchcal}{m}{n}
\SetMathAlphabet{\mathlcal}{bold}{U}{dutchcal}{b}{n}
\newcommand\abs[1]{\left| {#1} \right|}
\newcommand\close[1]{\left( {#1} \right)}
\newcommand\braket[1]{\left[ {#1} \right]}
\newcommand\curly[1]{\left \{{#1} \right\}}
\newcommand\myspace{\,\,\,}
\newcommand\fracBig[2]{\frac{\mathlarger{{#1}}}{\mathlarger{{#2}}}}
\newcommand\B[1]{\boldsymbol{{#1}}}
\newcommand\disturb{\xi}
\newcommand\NestedCondE[3]{\mathbb{E}_{\B{\mathrm{{#2}}}}\braket{\mathbb{E}[\mathrm{{#1}}|\B{\mathrm{{#2}}},\mathrm{{#3}}]|\mathrm{{#3}}={#3}}}
\newcommand\CondEXY[3]{\mathbb{E}[\mathrm{{#1}|{#3}}={#3}]}
\newcommand\NestedCondIndepE[4]{\mathbb{E}_{\B{\mathrm{{#3}}}}\braket{\mathbb{E}_{\mathrm{{#1}}}[\mathrm{{#1}}|\B{\mathrm{{#3}}},\mathrm{{#4}}] \mathbb{E}_{\mathrm{{#2}}}[\mathrm{{#2}}|\B{\mathrm{{#3}}},\mathrm{{#4}}]|\mathrm{{#4}}={#4}}}
\newcommand\NestedCondIndepENosubscript[4]{\mathbb{E}_{\B{\mathrm{{#3}}}}\braket{\mathbb{E}_{\mathrm{{#1}}}[\mathrm{{#1}}|\B{\mathrm{{#3}}},\mathrm{{#4}}] \mathbb{E}[\mathrm{{#2}}|\B{\mathrm{{#3}}},\mathrm{{#4}}]|\mathrm{{#4}}={#4}}}
\newcommand{\norm}[2]{\left\lVert#1\right\rVert_{#2}}
\newcommand{\normsq}[2]{\Big\lVert{#1}\Big\rVert_{#2}^{2}}
\newcommand{\normd}[1][{1}]{
	\@ifnextchar[{\normd@i[{#1}]}{\normd@i[{#1}][{#1}]} }
\def\normd@i[#1][#2]#3{
	\scaleobj{#1}{\left\|\right.} {#3} \scaleobj{#2}{\left.\right\|}
}
\def\Prob{\@ifnextchar[{\@with}{\@without}}
\def\@with[#1]#2{\myProbTwo{#2}{#1}}
\def\@without#1{\myProb{#1}}
\def\ProbS{\@ifnextchar[{\@withS}{\@withoutS}}
\def\@withS[#1]#2{\myProbSTwo{#2}{#1}}
\def\@withoutS#1{\myProbS{#1}}
\def\NonProb{\@ifnextchar[{\@withNon}{\@withoutNon}}
\def\@withNon[#1]#2{\myNonProbTwo{#2}{#1}}
\def\@withoutNon#1{\myNonProb{#1}}
\newcommand\myProb[1]{\mathrm{Pr}(\mathrm{y_{2}^*}\ge 0|\B{\mathrm{{#1}}={#1}_{g_i}})}
\newcommand\myProbTwo[2]{\mathrm{Pr}(\mathrm{y_{2}^*}\ge 0|\B{\mathrm{{#1}}={#1}_{g_i},\mathrm{{#2}}={#2}_i})}
\newcommand\myProbS[1]{\mathrm{Pr}(\mathrm{S}=1 |\B{\mathrm{{#1}}={#1}})}
\newcommand\myProbSTwo[2]{\mathrm{Pr}(\mathrm{S}=1|\B{\mathrm{{#1}}={#1}_{g_i},\mathrm{{#2}}={#2}_i})}
\newcommand\myNonProb[1]{\mathrm{Pr}(\mathrm{y_{2}^*}< 0|\B{\mathrm{{#1}}={#1}_{g_i}})}
\newcommand\myNonProbTwo[2]{\mathrm{Pr}(\mathrm{y_{2}^*}< 0|\B{\mathrm{{#1}}={#1}_{g},\mathrm{{#2}}={#2}_i})}
\begin{document}
	%This article has been accepted for publication in a future issue of this journal, but has not been fully edited. Content may change prior to  final publication.\\ Citation information: DOI 10.1109/ACCESS.2019.2929571, IEEE Access 
	\history{This article has been accepted for publication in a future issue of this journal, but has not been fully edited. Content may change prior to  final publication. Citation information: DOI 10.1109/ACCESS.2019.2929571, IEEE Access. }
	%\doi{}
%\history{\tiny{\textcopyright  $ $ 2019 IEEE.  Personal use of this material is permitted.  Permission from IEEE must be obtained for all other uses, in any current or future media, including reprinting/republishing this material for advertising or promotional purposes, creating new collective works, for resale or redistribution to servers or lists, or reuse of any copyrighted component of this work in other works.}}	
%	\history{\tiny{\textcopyright$ $ 2019 IEEE.  Personal use of this material is permitted.  Permission from IEEE must be obtained for all other uses, in any current or future media, including reprinting/republishing this material for advertising or promotional purposes, creating new collective works, for resale or redistribution to servers or lists, or reuse of any copyrighted component of this work in other works.\hfill\\ DOI: 10.1109/ACCESS.2019.2929571, IEEE Access.} }
	\doi{10.1109/ACCESS.2019.2929571}
	
	%This work is licensed under a Creative Commons Attribution 4.0 License. For more information, see https://creativecommons.org/licenses/by/4.0/.
	
	\title{S\lowercase{emiparametric} W\lowercase{avelet-based} JPEG IV E\lowercase{stimator for} e\lowercase{ndogenously} t\lowercase{runcated} d\lowercase{ata}}
	\author{\uppercase{Nir Billfeld}\authorrefmark{1},
		\uppercase{Moshe Kim}\authorrefmark{2} %
	}

	\address[1]{University of Haifa, Haifa, Israel (e-mail: nbillfeld@staff.haifa.ac.il)}
	\address[2]{University of Haifa, Haifa, Israel (e-mail: kim@econ.haifa.ac.il)}
	
	%\markboth
	%{Author \headeretal: Preparation of Papers for IEEE TRANSACTIONS and JOURNALS}%
	%{Author \headeretal: Preparation of Papers for IEEE TRANSACTIONS and JOURNALS}
	
	\markboth{}{}
%	{\chead{\small This article has been accepted for publication in a future issue of this journal, but has not been fully edited Content may change \\prior to  final publication. Citation information: DOI 10.1109/ACCESS.2019.2929571, IEEE Access }}
%	{\chead{\small This article has been accepted for publication in a future issue of this journal, but has not been fully edited Content may change \\prior to  final publication. Citation information: DOI 10.1109/ACCESS.2019.2929571, IEEE Access }}

	\corresp{Corresponding author: Moshe kim (e-mail: kim@econ.haifa.ac.il).\\ 
	This work was supported by the Research Authority, The University of Haifa.\\
	The authors would like to thank seminar participants of the faculty of Mathematics and Computer Science, The Weizmann Institute of Science and Statistic departments of Tel-Aviv and Haifa universities, for very constructive comments.}
	\begin{abstract}
A new and an enriched JPEG algorithm is provided for identifying redundancies in a sequence of irregular noisy data points which also accommodates a reference-free criterion function. Our main contribution is by formulating analytically (instead of approximating) the inverse of the transpose of JPEG-wavelet transform without involving matrices which are computationally cumbersome. The algorithm is suitable for the widely-spread situations where the original data distribution is unobservable such as in cases where there is deficient representation of the entire population in the training data (in machine learning) and thus the covariate shift assumption is violated.  The proposed estimator corrects for both biases, the one generated by endogenous truncation and the one generated by endogenous covariates. Results from utilizing 2,000,000 different distribution functions verify the applicability and high accuracy of our procedure to cases in which the disturbances are neither jointly nor marginally normally distributed.	
	\end{abstract}
	
	\begin{keywords}
		JPEG, Semiparametric, biorthogonal wavelet, causality, proximal gradient-descent, Lifting scheme, Denoising, Covariate Shift, Training data, Reference-free
		%Enter key words or phrases in alphabetical order, separated by commas. For a list of suggested keywords, send a blank e-mail to keywords@ieee.org or visit \underline {http://www.ieee.org/organizations/pubs/ani\_prod/keywrd98.txt}
	\end{keywords}
	
	\titlepgskip=-15pt
	
	\maketitle

	\IEEEpeerreviewmaketitle 
	\section{Introduction}
	\linespread{0.8}
Scientists routinely try to model and extract causal relations among covariates, rather than merely their correlations.\footnote{For a specific form of causality due to treatment effect see \cite{angrist2004semiparametric} definition.} In practice however, the presence of endogenous covariates in the model challenges the causal inference due to comovement of the random disturbance with these covariates. We distinguish between population induced comovement and training data (as in machine learning) comovement without having to rely on a covariate shift assumption since the behavioral (causal) model embedded in the training data does not necessarily describing the behavior in the entire population (see discussion in \cite{billfeld2019semiparametric}). 

The common way to overcome the aforementioned, is to generate a variation in the endogenous covariate without introducing variation in the random disturbance. This idea is achieved by employing a proper instrumental variable (IV).\footnote{Note that we deal with endogenously \textit{truncated} sample selection model to differentiate from \textit{censored} sample selection models \cite{heckman1979sample,newey2009two,powell2001semiparametric}, where there exists information pertaining to the non-participants.}

Application of  a proper instrumental variable generates variation in the endogenous covariate without  introducing variation in the random disturbance and hence is orthogonal to it. Thus, IV should contribute to exogeneity and therefore has been extremely popular in empirical work.

Once we have analytically shown that the IV estimator is no longer valid in an \textit{endogenously truncated} environment, we offer a truncation-proof estimator, which is a semiparametric wavelet-based JPEG-IV denoising algorithm.\footnote{A wavelet is a bandwidth-free estimator that is based on a multi-scale representation of the data. It is a widely used denoising technique \cite{xie2002sar}.} This denoising algorithm decomposes the random disturbance into a noise and a systemic bias part, enabling the elimination of the truncation bias. The magnitude of the this bias is captured by the size of the wavelet coefficients which quantify and measure the degree of redundancy hidden in a sequence of data points. Consequently, this algorithm nests the conventional IV estimator as a special case due to the fact that in the absence of systemic endogenous truncation, the wavelet coefficients approach zero except for the intercept which describe the coarse level of the function.\footnote{Unlike Fourier transform, the wavelet estimator preservers not only the data average (coarse) behavior but also its local behavior capturing deviations (details) from the average. This fact renders our denoising suitable for irregular-spaced data which largely depend on local behavior and play an important role in the denoising.} %Although this decomposition can also be achieved by employing the Fourier transform, the wavelet estimator has an important advantage. Our proposed wavelet-based JPEG preserves both local information as well as average behavior, while Fourier as a global estimator captures only average behavior. This fact renders our denoising suitable for irregular-spaced data largely depending on local behavior  and plays an important role in denoising.

Our main contribution to the biorthogonal wavelet estimator is by formulating analytically (instead of approximating) the inverse of the transpose of wavelet transform without involving matrices which are computationally cumbersome \cite{voronin2015compression}\footnote{``The implemented routine for the inverse transpose transform is approximate."
	%``We cannot obtain the inverse-transpose matrix $W^T$ by transposing the inverse of $W$. This is because $W$ is a very large $n\times n$ matrix and is very costly to build for large n. Hence, we instead use a routine for applying $W$ and $W^T$ to vectors...implemented routine for the inverse transpose transform is approximate."
	\cite{voronin2015compression}, p.285.} as well as the management of irregular-spaced data.\footnote{In the orthogonal wavelets design various interpolation methods are used to alleviate these irregularities \cite{hall1997interpolation} and specific methodologies can be used to extend the Haar wavelet transform to the unequally spaced case \cite{sardy1999wavelet}.} Additionally, our proposed methodology enables the combination of several penalty functions in the estimation procedure which are resolution-dependent.\footnote{It is known that soft thresholding provides smoother results relative to the hard thresholding because it is continuous. The latter, however, provides better edge preservation in comparison with the former.}

Wavelets are useful in denoising data. %signal and image denoising. %\footnote{Alternative estimators such as kernel and splines do not resolve local structures well enough.
%}$^{,}$
%\footnote{Denoising is an algorithm employed to recover the original signal (or function) from a noisy signal. The various denoising procedures are closely related to the estimation of a nonparametric regression with an additive disturbance of unknown distribution function. However, the most important aspect in recovering a signal from a noisy signal is finding the optimal thresholding (or tuning) parameter which controls the degree of denoising. This is useful to avoid over-fitting as well as under-fitting.} Denoising is affected both by the function that applies constraints on the coefficients in the estimation procedure, as well as by the tuning parameter. The former is referred to as a penalty function, while the latter determines the degree of penalization. Therefore, we must draw our attention to the choice of these factors to optimize the denoising. %\footnote{Choosing a larger (smaller) thresholding parameter than the optimal implies over (under) smoothing. }  
Several image quality assessment (IQA) measures have been introduced to choose the optimal level of denoising. %image quality assessment (IQA) measures can be employed to measure the quality of the denoised signal.%, for image quality assessment (IQA) or evaluation of image quality (EIQ) which are employed to evaluate the quality of the compressed or denoised image. 
These approaches can be classified to full-reference (FR) in cases the original image (noise-free) is observed; reduced-reference (RR) in cases where there is a partial information about the reference; reference-free (RF) in cases where the original image is not accessible \cite{carnec2003full,farah2014full}.  
As we deal with truncated distributions, the source (the complete non-truncated distribution) is intrinsically unobservable and thus, we cannot assess the success of the denoising by comparing it to the original non-truncated distribution.  Therefore, we  select both the thresholding (tuning) parameter as well as the penalty function using a reference-free criterion function.

The proposed JPEG IV is biorthogonal, thus preserving both the symmetry (the original shape of the data) and compact support (small number of coefficients) properties of the data.\footnote{Our JPEG IV is a biorthogonal wavelet as it requires two sets of vectors, which are the \textit{dual basis} and the \textit{series expansion} sets, to obtain a denoised representation of the data. The elements in the former set are orthogonal to the corresponding elements in the latter set. See \cite{cohen1992biorthogonal} for a formal definition of biorthogonality.} Importantly, the proposed methodology is easy to compute by precluding the need to find an optimal bandwidth as conventionally done.\footnote{Kernel estimation involves computational burden due to the necessity of finding the optimal bandwidth \cite{ichimura1993semiparametric}. Unlike the nonparametric case, in the semiparametric context there is no ``protocol" for finding the optimal bandwidth, as the traditional bandwidth choice methods might lead to bias estimates due to improper bandwidth choice \cite{lewbel2007nonparametric}.}  These properties make it suitable for denoising by alleviating both the problem of coefficient expansion as well as border discontinuities \cite{usevitch2001tutorial}.      %formed by using a dual basis set
%A biorthogonal wavelet is a pair of family of dual wavelets derived from two mother wavelets,
%A biorthogonal wavelet is formed by two mother functions $\psi$ and $\tilde{\psi}$ and two scale (father) functions $\phi$ and $\tilde{\phi}$, unlike the orthogonal wavelet which is formed by a single mother and single father.\footnote{The former is a generalization of the latter, in which the mother wavelets do not form an orthogonal bases. See \cite{cohen1992biorthogonal} 
%for a formal definition of biorthogonality.}
%The key difference between orthogonal and biorthogonal wavelets is the fact that in the biorthogonal design the data are decomposed by one wavelet  and scale function (the forward transform) and are reconstructed (the inverse transform) by another, while in the orthogonal design the decomposition and reconstruction coincide \cite{cohen1992biorthogonal, karoui1994families}.
The proposed algorithm corrects for both sources of bias: the endogeneity of covariates as well as the endogenous self-selection biases.% is nested within our proposed JPEG IV estimator.%The selection bias term is estimated by denoising. 

We run Monte Carlo simulations to measure the magnitude of the potential bias in the parameters' estimates under endogenous truncation, obtained by employing a conventional IV to eliminate the endogeneity bias. Our empirical implementation shows that even under mild correlation between the random disturbances, the resulting bias in the estimated  parameter of the endogenous covariate in the substantive equation can amount to almost tenfold the true parameter value. %the mean of the estimated parameter of the endogenous covariate in the substantive equation can reach ten-fold the true parameter.  %\color{blue} the mean of the estimated parameter of the endogenous covariate in the substantive equation can range from one-tenth to one-fifth of the true parameter. \color{black}
%resulting bias
%in the estimated  parameter of the endogenous covariate in the substantive equation can amount to almost $10$ times smaller than the true parameter value for 500 observations and can amount to $5$ times smaller than the true parameter value in a sample of 10,000 observations. 
Further, for sake of generality of the offered estimator, we subject it to various distributions in which the disturbances are neither jointly nor marginally normally distributed. These disturbances are constructed as realizations of non-symmetric and non-unimodal distribution functions.\footnote{Unlike the practice in some other studies applying only normally distributed disturbances.}
 
The rest of this paper is organized as follows. The methodology is presented is section \ref{Section:Methodology}. Section \ref{Section:Preliminatries} prepares the ground for the biorthogonal wavelet. Section \ref{Section:JPEG:denoising} presents our proposed JPEG algorithm. In section \ref{Section:Simulation} we employ Monte Carlo simulations to validate our estimator performance. Section \ref{Section:Conclusion} concludes.

\section{Methodology}\label{Section:Methodology}

As discussed above, the IV is based on the following basic requirements: it is correlated with the endogenous covariate, as well as orthogonal to the random disturbance. Additionally, it must satisfy the exclusion restriction, such that in the presence of the endogenous covariate, the IV must be excluded from the regression.  The IV is allowed to affect  the dependent variable only through its effect on the endogenous covariate. However, the orthogonality condition is rarely satisfied in the presence of endogenous truncation, which is very frequently the nature of data used in empirical research, and therefore the IV will not provide a solution for the endogeneity problem. In what follows, we demonstrate the shortcoming of the conventional IV estimator, as well as potential bias generated in an environment of endogenously truncated data. %a consistent estimator of the effect 

Suppose that there is a population random variable $\mathbf{\mathlcal{w}}=(\mathrm{z};\mathrm{x_{1}},\B{\mathrm{x_{-1}}};\B{\mathrm{w}})$ and that there is an independent and identically distributed sample $\curly{z_i,x_{1i},\B{x_{-1_i}}, \B{w_i}}_{i=1}^N$ drawn from this population, referred to as the complete data set consisting of $N$ observations.\footnote{Capital letters indicate random variables; lower case letters indicate realizations of these random variables.} The instrumental variable is $\mathrm{z}$, the endogenous variable is $\mathrm{x_{1}}$ and the exogenous random variables are $(\B{\mathrm{x_{-1}};\mathrm{w}})$, and where $\B{\mathrm{w}}\in \mathbb{R}^l$ is a covariate vector.

%Denoting an instrumental variables set by $\B{z_i}$, 

Let $\disturb_{1i}$, $\disturb_{2i}$ and $\mathrm{v}_i$ be jointly dependent random disturbances with the respective marginal distribution functions $F_{\disturb_1}$, $F_{\disturb_2}$ and $F_{\mathrm{v}}$. Their joint distribution function is $F_{\disturb_1, \disturb_2,\mathrm{v}}$. The model is semiparametric, as neither the marginals nor the joint distribution function are required to be specified by the researcher. 

The underlying model is composed of two parts.  The first part consists of a selection equation, while the second part consists of the substantive (of interest) equation.

%Let $\disturb_{1i}$ and $\disturb_{2i}$ be the two jointly dependent random disturbances. The underlying model consists of the latent (population) regressions \eqref{Baseline:Model} and \eqref{Baseline:Model:Selection} to follow:

%where $\B{x}_i=[x_{1i},\B{x}_{{-1}_i}^T]^T$ is a covariate vector consisting of a $p\times 1$ vector of exogenous covariates $\B{x}_{-1i}$ and an endogenous covariate $x_{1i}$ implying that $\mathbb{E}[\disturb_{1i}|x_{1i}]\ne 0$. 

The population (non-truncated) selection equation is defined as:
\begin{IEEEeqnarray}{lCr}
	\label{Nontruncated:Selection:Equation}
	y_{2i}^*=\B{w}_i^T\B{\gamma} + \disturb_{2i}
\end{IEEEeqnarray}

where $\B{\gamma}\in\mathbb{R}^l$ and $\B{w_i}\in\mathbb{R}^l$ are the selection equation's coefficients and covariates vector,  respectively. The selection equation's random disturbance is denoted by $\disturb_{2i}$.

The substantive equation and the endogenous variable equation are defined as a system of equations:
\begin{IEEEeqnarray}{lCr}
	\label{Baseline:Model}
	\scaleobj{0.75}{\color{white}\begin{cases}\color{black}\braket{\begin{matrix}
					y_{1i}^*\\ x_{1i}^*
			\end{matrix}}=\braket{\begin{matrix}
					\B{x}_i^T\\ [\B{z}_i^T,\B{x}_{-1_i}^T]
			\end{matrix}}\braket{\begin{matrix}
					\B{\beta}\\ \B{\delta}
			\end{matrix}}+\braket{\begin{matrix}
					\disturb_{1i}\\ \mathrm{v}_{i}
			\end{matrix}} \hspace{1em}\begin{matrix}\text{the substantive equation  \color{white}variable\color{black}}\\\text{the endogenous variable equation}\end{matrix}
		\end{cases}  \color{black}}%\text{the substantive system of equations}\\ %\text{\shortstack
\end{IEEEeqnarray}

where $\B{\beta}\in\mathbb{R}^{p_1}$ and $\B{\delta}\in\mathbb{R}^{p_2}$ are covariates vectors, $x_{1i}$ is an endogenous variable included in vector $\B{x}_i\in\mathbb{R}^{p_1}$, and the exogenous variables are denoted by $\B{x}_{-1_i}^T$. The substantive equation's random disturbances are denoted by $\disturb_{1i}$ and $\mathrm{v}_{1i}$.

However the variables $y_{1i}^*, y_{2i}^*, x_{1i}^*$ are latent in the truncated environment and their respective observed realizations are denoted by $y_{1i}, y_{2i}, x_{1i}$,  defined in  \eqref{selection:equation} and \eqref{Substantive:Equation} to follow.

The variable $y_{2i}^*$ is latent, while $y_{2i}$ is observed and defined as:
\begin{IEEEeqnarray}{lCr}
	\label{selection:equation}
	\scaleobj{0.80}{y_{2i} = \begin{cases}
			1 & \text{if } y_{2i}^*\ge 0\\
			\text{Unobserved} & \text{if } y_{2i}^*< 0\\
		\end{cases}, \hspace{1em}\text{the selection equation}}
\end{IEEEeqnarray}

\begin{IEEEeqnarray}{lCr}
	\scaleobj{0.80}{\label{Substantive:Equation}
		\braket{\begin{matrix}
				y_{1i}\\ 
				x_{1i}\\
		\end{matrix}}= \begin{cases}
			\braket{\begin{matrix}
					y_{1i}^*\\
					x_{1i}^*\\
			\end{matrix}} & \text{if } y_{2i}^*\ge 0\\
			\text{Unobserved} & \text{if } y_{2i}^*< 0\\
		\end{cases}, \hspace{1em}\text{the substantive equations}}
\end{IEEEeqnarray}

%\newcommand\NestedCondE[3]{\mathbb{E}_{\B{\mathrm{{#2}}}}\braket{\mathbb{E}_{\mathrm{{#1}}}[\mathrm{{#1}}|\B{\mathrm{{#2}}},\mathrm{{#3}}]|\mathrm{{#3}}={#3}}}

%We denote an indicator variable $\mathrm{s}_i$ defined as:
%\begin{IEEEeqnarray}{lCr}
%\label{Population:Selection:Var}
%\mathrm{s}_i=\begin{cases}
%1 & y_{2i}^*\ge 0 \\
%0 & y_{2i}^*< 0 \\
%\end{cases}
%\end{IEEEeqnarray}
In the next section we reformulate the substantive equation as a partially linear single index model. %attend to this issue.
\subsection{Semiparametric selectivity bias correction}
%Bias correction in truncated environment by removing the contamination in the substantive equation's random disturbance
The key difference between censored and truncated sample selection models is that in the former the entire covariate set (including the non-participants) and the selection variable are fully observed. In the latter, the entire data are truncated.  Nevertheless, in both cases, the substantive equation can be represented as a partially linear regression, in which the dependent variable is observed only for the participants, as we are about to show. 
%Thus, the substantive equation can be modeled under endogenous truncation by utilizing the approach prevailed in the censored case, which is reformulating it as a partially linear regression we are about to present.   \color{black}%This reformulation enables to control for the selectivity bias. 
Following \cite{robinson1988root}, the conditional expectation of the substantive equation in semiparametric (censored)\footnote{His approach is a generalization of the well-known inverse-mills ratio estimator  introduced by \cite{heckman1979sample} for the substantive equation's bias term $\mathbb{E}\braket{\disturb_{1i}|\disturb_{2i} > -\B{w}_i^T\B{\gamma}}$ in the case of a censored sample selection model. Note the difference between censored data and truncated data, which is the case we deal with.} sample selection models is some generally unknown function $\mathcal{M}_1(.)$ (to be estimated) of the selection equation's covariates variables $\B{w_i}$:
\begin{IEEEeqnarray}{lCr}
	\label{Robinson:Equation}
	\mathbb{E}\braket{\disturb_{1i}|\disturb_{2i} > -\B{w}_i^T\B{\gamma}}=\mathcal{M}_1(\B{w}_i^T\B{\gamma})
\end{IEEEeqnarray}

such that $\B{\gamma}$ is the selection equation's coefficient vector. Since $y_{1i}$ is observed only if $i$ is a participant, the substantive equation's dependent variable obtains the following functional form:
\begin{IEEEeqnarray}{lCr}
	\label{Y:Bias:Term}
	y_{1i}=\B{x}_i^T\B{\beta}\hspace{0.1em} + \underbrace{\mathcal{M}_1(\B{w}_i^T\B{\gamma})}_\text{the bias term}\hspace{1em}+ \hspace{0.1em}\underbrace{\tilde{\epsilon}_{1i}}_{\text{white noise}}
\end{IEEEeqnarray}

The regression equation in \eqref{Y:Bias:Term} is referred to as a semiparametric partially linear regression (SP-NLS), in which the non-linear part is the bias term function. This regression can be estimated semiparametrically in cases of a truncated sample selection model using a non-linear least squares procedure as suggested by \cite{ichimura1993semiparametric}.

Both \cite{ichimura1993semiparametric} and \cite{robinson1988root} models involve a kernel function estimation. %(a methodology that will be described in section \ref{Section:Kernel} to follow). 
However, kernel estimates' accuracy is sensitive to the bandwidth selected. This entails a potential problem of finding the optimal bandwidth resulting in computational complexity.\footnote{There is an open question whether there is a way to choose a bandwidth
	sequence that is optimal for the estimation of the parameters \cite{ichimura1991semiparametric}.} 
%One of the drawbacks of the above-described methodology is that the kernel estimates' accuracy depends on the bandwidth selection, which is known problem.
Due to the lack of applicability of the traditional bandwidth selection methods in the semiparametric context, informal methods are being used, that may lead to a non-ignorable bias in the estimates \cite{lewbel2007simple}.\footnote{``The well
	known bandwidth selection rules used in non-parametric estimation, such as cross validation,
	are not generally applicable to semiparametric settings.'' \cite[p.~191]{lewbel2007simple}} %Thus, in practice, one needs to use a bandwidth that is ``slightly'' smaller than the optimal bandwidth obtained using the cross-validation procedure \cite{lewbel2007simple}.} %However, this informal method for bandwidth choice may lead to a non-ignorable bias in the estimates \cite{lewbel2007simple}.} %\footnote{There is no ``scientific protocol'' enabling the determination of a proper bandwidth (in semiparametric estimation). It is known, though, that incorrect bandwidth choice leads to bias estimates \cite{lewbel2007simple}. } 
In order to avoid the problems involved with kernel estimation, our methodology relies on a (thresholding-propagated) nonlinear wavelet-based JPEG IV estimator to approximate the bias term (in \eqref{Y:Bias:Term}). 

%why decomposition? 
The substantive equation depicted in \eqref{Y:Bias:Term} deals with endogenous truncation bias, assuming that the random disturbance and the covariates are not jointly dependent. However, in cases where this random disturbance is jointly dependent with one (or more) of the covariates there will emerge two bias terms: the first one is propagated by the endogenous truncation and the second one is propagated by the endogenous covariate.
Next we present a decomposition Theorem \ref{Theorem:Decomp}, which enables reformulating the substantive equations as a partially linear single index model in the presence of an endogenous covariate. 
\subsection{Decomposition of the substantive equations}
\begin{theorem}\label{Theorem:Decomp}
	Let the underlying model be as depicted in \eqref{selection:equation} and \eqref{Substantive:Equation}. Denote the  random disturbances $\varepsilon_{i}$ and $\epsilon_{1i}$ which are constructed as:  $\varepsilon_{i}=y_{1i}^*-\mathbb{E}[y_{1i}^*|\B{x}_i]$ and $\epsilon_{1i}=y_{1i}-\mathbb{E}[y_{1i}^*|y_{2i}=1]$, respectively. 
	The following requirements must hold:\\ \nonumber \\ (i) $\mathbb{E}[y_{1i}^*|y_{2i}=1]=\mathbb{E}[\B{x}_i^T\B{\beta}\hspace{0.2em}|\hspace{0.2em}\B{x}_i]+\mathbb{E}[\disturb_{1i}\hspace{0.2em}|\hspace{0.2em}\B{x}_i]+\mathbb{E}[\varepsilon_{i}|y_{2i}=1]$ $\forall i\in\curly{1,...,N}$;\\\\ \nonumber (ii) $y_{1i}=\B{x}_i^T\B{\beta}+\epsilon_{1i}^*$,\hspace{0.5em} $\mathbb{E}[\epsilon_{1i}\big|\hspace{0.5em}y_{2i}=1]=0$,\\ $\hspace{0.5em}\epsilon_{1i}^*\equiv\epsilon_{1i}+\mathbb{E}[\disturb_{1i}\hspace{0.2em}|\hspace{0.2em}\B{x}_i]+ \mathcal{M}(\B{w_i}^T\B{\gamma})$,  $\hspace{0.5em}\forall i\in\curly{i|y_{2i}=1}$. \nonumber
\end{theorem}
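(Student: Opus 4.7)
The plan is to prove part (i) by a simple orthogonal decomposition combined with the tower property of conditional expectation, then leverage (i) to obtain the single-index partially linear form in (ii). Throughout I would read the outer conditioning on the right-hand side of (i) as an implicit additional conditioning on $\B{x}_i$, since otherwise the expressions $\mathbb{E}[\B{x}_i^T\B{\beta}\,|\,\B{x}_i]$ and $\mathbb{E}[\disturb_{1i}\,|\,\B{x}_i]$ are random while the left-hand side is a scalar.

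For part (i), I would start from the trivial identity $y_{1i}^{*}=\mathbb{E}[y_{1i}^{*}|\B{x}_i]+\varepsilon_i$, which is just the definition of $\varepsilon_i$. Conditioning both sides on $\{y_{2i}=1,\B{x}_i\}$ and applying linearity gives $\mathbb{E}[y_{1i}^{*}|y_{2i}=1,\B{x}_i]=\mathbb{E}[\mathbb{E}[y_{1i}^{*}|\B{x}_i]\,|\,y_{2i}=1,\B{x}_i]+\mathbb{E}[\varepsilon_i|y_{2i}=1,\B{x}_i]$. The first term collapses to $\mathbb{E}[y_{1i}^{*}|\B{x}_i]$ because $\mathbb{E}[y_{1i}^{*}|\B{x}_i]$ is $\sigma(\B{x}_i)$-measurable. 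Substituting $y_{1i}^{*}=\B{x}_i^{T}\B{\beta}+\disturb_{1i}$ from \eqref{Baseline:Model} into this inner conditional expectation yields $\mathbb{E}[\B{x}_i^{T}\B{\beta}\,|\,\B{x}_i]+\mathbb{E}[\disturb_{1i}\,|\,\B{x}_i]$, completing (i) provided the last residual term $\mathbb{E}[\varepsilon_i|y_{2i}=1,\B{x}_i]$ reduces to $\mathbb{E}[\varepsilon_i|y_{2i}=1]$ (which is the main obstacle, addressed below).

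For part (ii), the zero-mean property $\mathbb{E}[\epsilon_{1i}|y_{2i}=1]=0$ is immediate from the definition $\epsilon_{1i}=y_{1i}-\mathbb{E}[y_{1i}^{*}|y_{2i}=1]$, since $y_{1i}=y_{1i}^{*}$ on the selected set $\{y_{2i}=1\}$. Rearranging gives $y_{1i}=\mathbb{E}[y_{1i}^{*}|y_{2i}=1]+\epsilon_{1i}$ on the selected set. Now plug in the decomposition from (i), using $\mathbb{E}[\B{x}_i^{T}\B{\beta}\,|\,\B{x}_i]=\B{x}_i^{T}\B{\beta}$, and identify $\mathbb{E}[\varepsilon_i|y_{2i}=1]$ with $\mathcal{M}(\B{w}_i^{T}\B{\gamma})$ by appealing to the Robinson-type representation \eqref{Robinson:Equation}. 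Collecting the nuisance terms into $\epsilon_{1i}^{*}\equiv\epsilon_{1i}+\mathbb{E}[\disturb_{1i}\,|\,\B{x}_i]+\mathcal{M}(\B{w}_i^{T}\B{\gamma})$ yields $y_{1i}=\B{x}_i^{T}\B{\beta}+\epsilon_{1i}^{*}$, exactly as claimed.

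The hard part is the single-index reduction $\mathbb{E}[\varepsilon_i|y_{2i}=1,\B{x}_i]=\mathcal{M}(\B{w}_i^{T}\B{\gamma})$, i.e., that conditioning on $\B{x}_i$ washes out once we have taken the $\B{x}_i$-orthogonal component of $\disturb_{1i}$. Since $\varepsilon_i=\disturb_{1i}-\mathbb{E}[\disturb_{1i}|\B{x}_i]$ is by construction mean-independent of $\B{x}_i$, and since the selection event $\{y_{2i}=1\}$ coincides with the threshold-crossing event $\{\disturb_{2i}>-\B{w}_i^{T}\B{\gamma}\}$ (from \eqref{Nontruncated:Selection:Equation}--\eqref{selection:equation}), the conditional expectation of $\varepsilon_i$ given selection inherits its dependence on covariates only through the scalar index $\B{w}_i^{T}\B{\gamma}$. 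This is precisely the analogue of Heckman's inverse-Mills argument invoked in \eqref{Robinson:Equation}, and it is this single-index property, rather than any calculation, that warrants the identification $\mathcal{M}(\B{w}_i^{T}\B{\gamma})$; I would state it as the substantive content of the theorem and note that it is exactly what makes the residual amenable to wavelet-based nonparametric estimation in the subsequent JPEG-IV procedure.
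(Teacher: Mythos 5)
Your proposal is correct and follows essentially the same route as the paper's own proof: decompose $y_{1i}^*$ via the definition of $\varepsilon_i$ and the model $y_{1i}^*=\B{x}_i^T\B{\beta}+\disturb_{1i}$, condition on the selection event, obtain $\mathbb{E}[\epsilon_{1i}|y_{2i}=1]=0$ directly from the construction of $\epsilon_{1i}$, and identify $\mathbb{E}[\varepsilon_i|y_{2i}=1]$ with $\mathcal{M}(\B{w}_i^T\B{\gamma})$ by appeal to the Robinson--Ichimura single-index representation rather than by calculation. Your explicit handling of the conditioning-on-$\B{x}_i$ reading (and your remark that the single-index reduction is the assumed substantive content, not a derived step) is a more careful rendering of what the paper passes over with ``which is simplified to,'' but it is the same argument, not a different one.
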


\begin{proof}%[\ref{Theorem:Decomp}]
	By construction $\varepsilon_{i}=y_{1i}^*-\mathbb{E}[y_{1i}^*|\B{x}_i]$, it follows that
	\begin{IEEEeqnarray}{lCr}
		\label{implicit:expect:part:one}
		y_{1i}^*= \mathbb{E}[\B{x}_i^T\B{\beta}\hspace{0.2em}|\hspace{0.2em}\B{x}_i]+\mathbb{E}[\disturb_{1i}\hspace{0.2em}|\hspace{0.2em}\B{x}_i]+\varepsilon_{i}
	\end{IEEEeqnarray}

	%	Rearranging \eqref{implicit:expect:part:one} and using the fact that $\mathbb{E}[\B{x}_i^T\B{\beta}\hspace{0.2em}|\hspace{0.2em}\B{x}_i]=\B{x}_i^T\B{\beta}$ we get:
	%	\begin{IEEEeqnarray}{lCr}
	\label{explicit:expect:part:one}
	%	y_{1i} = \B{x}_i^T\B{\beta}+\mathbb{E}[\disturb_{1i}\hspace{0.2em}|\hspace{0.2em}\B{x}_i] + \varepsilon_{i}
	%	\end{IEEEeqnarray}

	%	By law of iterated expectation it follows that
	%	\begin{IEEEeqnarray}{lCr}
	
	%	\mathbb{E}[\varepsilon_{i}\hspace{0.2em}|\hspace{0.2em}\B{x}_i]= \mathbb{E}[\disturb_{1i}\hspace{0.2em}|\hspace{0.2em}\B{x}_i]-\mathbb{E}\curly{\mathbb{E}[\disturb_{1i}\hspace{0.2em}|\hspace{0.2em}\B{x}_i]\hspace{0.2em}|\hspace{0.2em}\B{x}_i}=\mathbb{E}[\disturb_{1i}\hspace{0.2em}|\hspace{0.2em}\B{x}]-\mathbb{E}[\disturb_{1i}\hspace{0.2em}|\hspace{0.2em}\B{x}_i]=0.
	%	\end{IEEEeqnarray}

	%Due to the fact that $\epsilon_{1i}=\varepsilon_{i}-\mathbb{E}[\varepsilon_{i}|\mathrm{s}=1]$ it follows that:
	%	\begin{IEEEeqnarray}{lCr}
	
	%\mathbb{E}[\epsilon_{1i}]=\mathbb{E}\curly{\varepsilon_{i}-\mathbb{E}[\varepsilon_{i}|\mathrm{s}=1]}=\mathbb{E}[\varepsilon_{i}]-\mathbb{E}[\varepsilon_{i}|\mathrm{s}=1]
	%	\end{IEEEeqnarray}
	
	Using \eqref{implicit:expect:part:one} we get:% $\mathbb{E}[\varepsilon_{i}|y_{2i}=1]=\mathbb{E}\curly{y_{1i}^* - \mathbb{E}[\B{x}_i^T\B{\beta}\hspace{0.2em}|\hspace{0.2em}\B{x}_i]-\mathbb{E}[\disturb_{1i}\hspace{0.2em}|\hspace{0.2em}\B{x}_i]|y_{2i}=1}$, implying that
	\begin{IEEEeqnarray}{lCr}
		\mathbb{E}[y_{1i}^*|y_{2i}=1]=\mathbb{E}[\varepsilon_{i}|y_{2i}=1]\\+\mathbb{E}\curly{ \mathbb{E}[\B{x}_i^T\B{\beta}\hspace{0.2em}|\hspace{0.2em}\B{x}_i]|y_{2i}=1}+\mathbb{E}\curly{ \mathbb{E}[\disturb_{1i}\hspace{0.2em}|\hspace{0.2em}\B{x}_i]|y_{2i}=1}\nonumber
	\end{IEEEeqnarray}
	
	which is simplified to:
	\begin{IEEEeqnarray}{lCr}
		\label{Expected:Subs:Truncated}
		\mathbb{E}[y_{1i}^*|y_{2i}=1]=\mathbb{E}[\B{x}_i^T\B{\beta}\hspace{0.2em}|\hspace{0.2em}\B{x}_i]+\mathbb{E}[\disturb_{1i}\hspace{0.2em}|\hspace{0.2em}\B{x}_i]\\+\mathbb{E}[\varepsilon_{i}|y_{2i}=1]\nonumber
	\end{IEEEeqnarray}

	In order to obtain the substantive equation in the truncated environment, we construct $\epsilon_{1i}=y_{1i}-\mathbb{E}[y_{1i}^*|y_{2i}=1]$ where $\mathbb{E}[y_{1i}^*|y_{2i}=1]$ is obtained from \eqref{Expected:Subs:Truncated}.\footnote{By construction of $y_{1i}$, the equality $\mathbb{E}[y_{1i}|y_{2i}=1]=\mathbb{E}[y_{1i}^*|y_{2i}=1]$ must be satisfied. It implies that $\mathbb{E}[\epsilon_{1i}|y_{2i}=1]=\mathbb{E}[y_{1i}|y_{2i}=1]-\mathbb{E}\curly{\mathbb{E}[y_{1i}^*|y_{2i}=1]|y_{2i}=1}=\mathbb{E}[y_{1i}|y_{2i}=1]-\mathbb{E}[y_{1i}^*|y_{2i}=1]=0$. }  Following \cite{ichimura1991semiparametric}, the conditional expectation of $\varepsilon_{i}$, given participation is expressed by some unknown function $\mathcal{M}_1(\cdot)$  as $\mathbb{E}[\varepsilon_{i}|y_{2i}=1]=\mathcal{M}_1(\B{w}_i^T\B{\gamma})$. Thus, we obtain:
	%    \begin{IEEEeqnarray}{lCr}
	\label{implicit:expect:part:two}
	%   	y_{1i}=\mathbb{E}[\varepsilon_{i}|\mathrm{s}=1]+\B{x}_i^T\B{\beta}+\mathbb{E}[\disturb_{1i}\hspace{0.2em}|\hspace{0.2em}\B{x}_i]+\epsilon_{1i}
	%    \end{IEEEeqnarray}
	
	%	\begin{IEEEeqnarray}{lCr}
	\label{implicit:expect:part:two}
	%	\epsilon_{1i}=\varepsilon_{i}-\mathbb{E}[\B{x}_i^T\B{\beta}|y_{2i}=1]-\mathbb{E}[\mathbb{E}[\disturb_{1i}\hspace{0.2em}|\hspace{0.2em}\B{x}_i]|y_{2i}=1]-\mathbb{E}[\disturb_{1i}|y_{2i}=1]
	%	\end{IEEEeqnarray}
	
	%	Using the law of total expectation it follows that $\mathbb{E}[\mathbb{E}[\disturb_{1i}\hspace{0.2em}|\hspace{0.2em}\B{x}_i]|y_{2i}=1]=\mathbb{E}[\disturb_{1i}\hspace{0.2em}|\hspace{0.2em}\B{x}_i]$. Rearranging \eqref{implicit:expect:part:two} we get:
	%	\begin{IEEEeqnarray}{lCr}
	\label{explicit:expect:part:two}
	%	y_{1i} = \B{x}_i^T\B{\beta}+\mathbb{E}[\disturb_{1i}\hspace{0.2em}|\hspace{0.2em}\B{x}_i] +\mathbb{E}[\varepsilon_{i}|y_{2i}=1] + \epsilon_{1i}
	%	\end{IEEEeqnarray}
	
	\begin{IEEEeqnarray}{lCr}
		\label{final:expect:part:two}
		\scaleobj{0.85}{y_{1i} = \underbrace{\B{x}_i^T\B{\beta}}_{\text{\shortstack{substantive\\ covariates}}}\hspace{0.2em}+\overbrace{\hspace{0.2em}\underbrace{\mathcal{M}_1(\B{w}_i^T\B{\gamma})}_{\text{\shortstack{selection bias\\ term}}} \hspace{0.2em}+\hspace{0.2em}\underbrace{\mathbb{E}[\disturb_{1i}\hspace{0.2em}|\hspace{0.2em}\B{x}_i]}_{\text{\shortstack{endogeneity bias\\ term}}}\hspace{0.5em} + \underbrace{\epsilon_{1i}}_{\text{white noise}}}^{\epsilon_{1i}^*}}
	\end{IEEEeqnarray}

	For sake of brevity we present equation \eqref{final:expect:part:two}, which is a decomposition of the substantive equation into its components, such as the substantive equation's covariates, selection bias term, endogeneity bias term and a stochastic white noise term.
	It is easy to see that the conventional IV can not be sufficient in eliminating the endogeneity bias $\mathbb{E}[\disturb_{1i}\hspace{0.2em}|\hspace{0.2em}\B{x}_i]$ in \eqref{final:expect:part:two}, since under truncation the endogeneity bias term is actually $\mathbb{E}[\disturb_{1i}\hspace{0.2em}|\hspace{0.2em}\B{x}_i,y_{2i}=1]$. 
	
	Similarly, we construct $\epsilon_{2i}=x_{1i}-\mathbb{E}[x_{1i}^*|y_{2i}=1]$ where $\mathbb{E}[x_{1i}^*|y_{2i}=1]$ satisfies:
	\begin{IEEEeqnarray}{lCr}
		\label{Expected:Subs:Truncated:X}
		\scaleobj{0.8}{\mathbb{E}[x_{1i}^*|y_{2i}=1]=\mathbb{E}[\mathrm{v}_i|y_{2i}=1]+\mathbb{E}[[\B{z}_i^T,\B{x}_{-1_i}^T]\B{\delta}\hspace{0.2em}|\hspace{0.2em}y_{2i}=1]}
	\end{IEEEeqnarray}
	
	to get:
	\begin{IEEEeqnarray}{lCr}
		\label{Expected:Subs:Truncated:eps:X}
		\scaleobj{0.8}{\epsilon_{2i}=x_{1i}-\mathbb{E}[\mathrm{v}_i|y_{2i}=1]-\mathbb{E}[[\B{z}_i^T,\B{x}_{-1_i}^T]\B{\delta}\hspace{0.2em}|\hspace{0.2em}y_{2i}=1]}
	\end{IEEEeqnarray}

	We express $\mathbb{E}[\mathrm{v}_i|y_{2i}=1]$ in \eqref{Expected:Subs:Truncated:eps:X} as $\mathbb{E}[\mathrm{v}_i|y_{2i}=1]=\mathcal{M}_2(\B{w}_i^T\B{\gamma})$ where $\mathcal{M}_2(\cdot)$ is some unknown function  and obtain (see Theorem \ref{Theorem:IndirectDependence:V:Eps2} to follow):
	\begin{IEEEeqnarray}{lCr}
		\label{final:expect:endog}
		\scaleobj{0.8}{x_{1i} = \underbrace{[\B{z}_i^T,\B{x}_{-1_i}^T]\B{\delta}}_{\text{\shortstack{substantive\\ covariates}}}\hspace{0.2em}+\hspace{0.2em}\overbrace{\underbrace{\mathcal{M}_2(\B{w}_i^T\B{\gamma})}_{\text{\shortstack{selection bias\\ term}}} \hspace{0.2em}+\hspace{0.5em}\underbrace{\epsilon_{2i}}_{\text{white noise}}}^{\epsilon_{2i}^*}}
	\end{IEEEeqnarray}

	It is easy to see  the joint dependence of $\epsilon_{2i}^*$ and $\epsilon_{1i}^*$ through the selection bias terms in \eqref{final:expect:part:two} and \eqref{final:expect:endog}.% Additionally, under truncation the endogeneity bias term is actually $\mathbb{E}[\disturb_{1i}\hspace{0.2em}|\hspace{0.2em}\B{x}_i,\mathrm{s}=1]$.
	
	%It easy to show that $\mathbb{E}[\epsilon_{1i}|\B{x}_i]$ is not a function of $\B{x}_i$
	%\begin{IEEEeqnarray}{lCr}
	
	%\mathbb{E}[\epsilon_{1i}|\B{x}]=\mathbb{E}\curly{\varepsilon_{i}-\mathbb{E}[\varepsilon_{i}|y_{2i}=1]\hspace{0.2em}|\hspace{0.2em}\B{x}}=\mathbb{E}\braket{\varepsilon_{i}|\B{x}}-\mathbb{E}\curly{\mathbb{E}[\varepsilon_{i}|y_{2i}=1]\hspace{0.2em}|\hspace{0.2em}\B{x}}=\underbrace{\mathbb{E}[\varepsilon_{i}|\B{x}]}_{0}-\underbrace{\mathbb{E}[\varepsilon_{i}|y_{2i}=1]}_{\text{\shortstack{depends\\ on $y_{2i}$}}}
	%\end{IEEEeqnarray}

	%It easy to show that 
	%\begin{IEEEeqnarray}{lCr}
	
	%\mathbb{E}[\epsilon_{1i}|y_{2i}=1]=\mathbb{E}\curly{\varepsilon_{i}-\mathbb{E}[\varepsilon_{i}|y_{2i}=1]\hspace{0.2em}|\hspace{0.2em}y_{2i}=1}=\mathbb{E}\braket{\varepsilon_{i}|y_{2i}=1}-\underbrace{\mathbb{E}\curly{\mathbb{E}[\varepsilon_{i}|y_{2i}=1]\hspace{0.2em}|\hspace{0.2em}y_{2i}=1}}_{\mathbb{E}\braket{\varepsilon_{i}|y_{2i}=1}}=0
	%\end{IEEEeqnarray}

	%By law of iterated expectation it follows that $\mathbb{E}\braket{\mathbb{E}[\disturb_{1i}\hspace{0.2em}|\hspace{0.2em}\B{x}_i]\hspace{0.2em}\big|\hspace{0.2em}\mathrm{s}=1}=\mathbb{E}[\disturb_{1i}\hspace{0.2em}|\hspace{0.2em}\B{x}_i]$: 
\end{proof}

Next we formulate the relationship between the covariates and dependent variables in the equations to be estimated, in the presence of an endogenous covariate in the substantive equation under truncation.
\subsection{Truncated sample selection model with an endogenous covariate}
In cases where the substantive equation's dependent variable is a function of an endogenous covariate $x_{1i}$, both $x_{1i}$ as well as $y_{1i}$ (as in \eqref{Substantive:Equation}) are truncated, we face a truncated sample selection model with an endogenous covariate. 

Thus, the semiparametric partially linear index model in a truncated environment consists of the following system of equations:
\begin{IEEEeqnarray}{lCr}
	\label{Truncated:Substatnive:System}
	\scaleobj{0.8}{\braket{\begin{matrix}
				y_{1i}\\
				x_{1i}
		\end{matrix}}=
		\begin{cases}
			\B{x}_i^T\B{\beta}\hspace{3.2em} + \mathcal{M}_1(\B{w}_i^T\B{\gamma}) \hspace{1em}+\hspace{1em}\overbrace{\mathbb{E}[\disturb_{1i}\hspace{0.2em}|\hspace{0.em}\B{x}_i]\hspace{0.1em}+ \hspace{0.05em}\underbrace{\epsilon_{1i}}_{\text{white noise}}}^{\epsilon_{1i}^{**}}\\
			[\B{z}_i^T,\B{x}_{-1_i}^T]\B{\delta}\hspace{0.5em} + \mathcal{M}_2(\B{w}_i^T\B{\gamma}) \hspace{1em}+ \hspace{1em}\underbrace{\epsilon_{2i}}_{\text{white noise}}
	\end{cases}}
\end{IEEEeqnarray}

where  $\epsilon_{1i}^{**}$ and $\epsilon_{2i}$ are two jointly dependent random disturbances,\footnote{There is dependence of these two random disturbances due to the dependence between $\mathrm{v_i}$ and $\mathrm{\disturb_{1i}}$ (as in \eqref{Baseline:Model}) in the complete (non-truncated) data.} and by construction are independent of the random variables vector $\B{\mathrm{w}}$.\footnote{Not to be confused with its realization $\B{w_i}$.} The intrinsic endogeneity in the model is captured by the joint dependence of $\epsilon_{1i}^{**}$ and the covariates.\footnote{The intrinsic model's endogeneity is related to the joint dependence of the random disturbance and the covariates \textit{in the population}, unlike a conditional joint dependence  of the random disturbance and the covariates given participation in the sample.} %As we concentrate on the presence of endogenous covariates in the substantive equation, we assume that in the population the random disturbances $\disturb_{1i}$ and $\mathrm{v}_{1i}$ are independent of $\B{\mathrm{w}}$. 
The presence of the function $\mathcal{M}_2(.)$ implies that we allow for a dependence between $\mathrm{v_i}$ (the endogenous part of $x_i$) and the selection equation's random disturbance $\disturb_{2i}$ (in \eqref{Nontruncated:Selection:Equation}, the complete, non-truncated, sample selection equation).% In such cases, $x_i$ is endogenous to both random disturbances $\epsilon_{1i}$ and $\epsilon_{2i}$.

%The instrumental variable $z_i$ is constructed as follows: 
%\begin{IEEEeqnarray}{lCr}
\label{instrument}
%z_i=\mathcal{G}(\B{w_i}) + \varphi_i
%\end{IEEEeqnarray}

%with $\varphi_i$ an i.i.d white noise random disturbance. It is important to notice that $x_i$ is correlated with $z_i$ according to \eqref{Substantive:Equation}. By construction, $z_i$ and any other variable (except for $\varphi_i$) are conditionally independent given $\B{w_i}$.

%\color{red} % why this is important

Our primary interest is to show that the instrumental variable and the random disturbance might be correlated in a truncated environment as will be depicted in Theorem \ref{Theorem:Lack:of:Orthogonality} to follow.
By doing this, we denote a truncated environment using the indicator (selection variable) $\mathrm{s}=I(\disturb_{2i}>-\B{\mathrm{w}}^T\B{\mathrm{\gamma}})$ and postulate the following assumptions:
\begin{assumption}\label{Joint:Dependence:Z}
	The instrumental variable  $\mathrm{z}$ is jointly distributed with all covariates in the data: $\mathbb{E}[\mathrm{z}|\B{\mathrm{w}=w}]=\mathcal{G}(\B{\mathrm{w}})$ where $\mathcal{G}(\cdot)$ is some function of $\B{\mathrm{w}}$.
\end{assumption}

\begin{assumption}\label{Stochastic:Independence:Requirement}
	Conditioning the instrumental variable $\mathrm{z}$ both on random variable $\B{\mathrm{w}}$ and a stochastic function of $\B{\mathrm{w}}$ denoted by $\mathcal{F}(\B{\mathrm{w}},\mathrm{\varepsilon})$  (given that the stochastic component $\mathrm{\varepsilon}$ is an i.i.d white noise which is independent of $\mathrm{z}$), would be the same as conditioning it only on $\B{\mathrm{w}}$. Formally: $\mathbb{E}[\mathrm{z}|\B{\mathrm{w}=w},\mathcal{F}(\B{\mathrm{w}},\mathrm{\varepsilon})]=\mathbb{E}[\mathrm{z}|\B{\mathrm{w}=w}]$.% This assumption implies the joint dependence of $\mathrm{z}$ and $\mathrm{s}$ which is generated by a variation in $\B{\mathrm{w}}$.
	%\begin{IEEEeqnarray}{lCr}
	\label{Stochastic:Independence:Requirement}
	%\mathbb{E}[\mathrm{z}|\B{\mathrm{w}=w},\mathcal{F}(\B{\mathrm{w}},\mathrm{\varepsilon})]=\mathcal{G}(\B{\mathrm{w}})
	%\end{IEEEeqnarray}
	
	%where $\mathcal{G}(\cdot)$ is some function of $\mathrm{w}$.
\end{assumption} 

These two assumptions implies that the conditional expectation of the instrumental variable, given the selection variable, is a function of the random variable vector $\B{\mathrm{w}}$, as the following proposition argues: %Their joint dependency is generated by a variation in the random variable vector $\B{\mathrm{w}}$. are required in the following proposition
\begin{proposition}\label{Proposition:Dependence:Z:Selection}
	Given assumptions \ref{Joint:Dependence:Z} and \ref{Stochastic:Independence:Requirement}, the conditional expectation of the instrumental variable given the selection variable is a function of $\B{\mathrm{w}}$, rather $\mathbb{E}[\mathrm{z}|\mathrm{s}=s]=\int_{\B{w}}\mathcal{G}(\B{w})f_{\B{\mathrm{w}}|\mathrm{s}=s}(\B{w}|\mathrm{s}=s)d\B{w}$ $\forall s\in\curly{0,1}$, where $f_{\B{\mathrm{w}}|\mathrm{s}=1}(\B{w}|\mathrm{s}=1)$ and $f_{\B{\mathrm{w}}|\mathrm{s}=0}(\B{w}|\mathrm{s}=0)$ are the conditional density functions of vector $\B{\mathrm{w}}$ given participation and non-participation, respectively.
\end{proposition}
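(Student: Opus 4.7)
The strategy is the standard tower-property decomposition, using Assumption 2 to strip the conditioning on $\mathrm{s}$ once we first condition on $\B{\mathrm{w}}$, and then Assumption 1 to identify the inner conditional expectation with $\mathcal{G}(\B{\mathrm{w}})$.

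First I would apply the law of iterated expectations by conditioning on $\B{\mathrm{w}}$ inside the event $\{\mathrm{s}=s\}$:
\begin{IEEEeqnarray*}{lCr}
\mathbb{E}[\mathrm{z}\mid\mathrm{s}=s] = \mathbb{E}_{\B{\mathrm{w}}}\!\braket{\mathbb{E}[\mathrm{z}\mid \B{\mathrm{w}},\mathrm{s}=s]\mid \mathrm{s}=s}.
\end{IEEEeqnarray*}
Next I would observe that the selection variable is itself a stochastic function of $\B{\mathrm{w}}$, namely $\mathrm{s}=I(\disturb_{2}>-\B{\mathrm{w}}^T\B{\gamma})=\mathcal{F}(\B{\mathrm{w}},\disturb_{2})$, where the stochastic component $\disturb_{2}$ is an i.i.d.\ disturbance independent of $\mathrm{z}$ (by virtue of $\mathrm{z}$ being a valid instrument orthogonal to the model's random disturbances). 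This places us exactly in the setting of Assumption \ref{Stochastic:Independence:Requirement}, so that conditioning on $(\B{\mathrm{w}},\mathrm{s})$ is equivalent to conditioning on $\B{\mathrm{w}}$ alone:
\begin{IEEEeqnarray*}{lCr}
\mathbb{E}[\mathrm{z}\mid \B{\mathrm{w}}=\B{w},\mathrm{s}=s]=\mathbb{E}[\mathrm{z}\mid \B{\mathrm{w}}=\B{w}].
\end{IEEEeqnarray*}
Substituting Assumption \ref{Joint:Dependence:Z} then gives $\mathbb{E}[\mathrm{z}\mid \B{\mathrm{w}}=\B{w},\mathrm{s}=s]=\mathcal{G}(\B{w})$.

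Finally, since the outer expectation is taken with respect to the conditional distribution of $\B{\mathrm{w}}$ given $\mathrm{s}=s$, I would write it out as an integral against its density $f_{\B{\mathrm{w}}\mid \mathrm{s}=s}$:
\begin{IEEEeqnarray*}{lCr}
\mathbb{E}[\mathrm{z}\mid\mathrm{s}=s]=\int_{\B{w}}\mathcal{G}(\B{w})\,f_{\B{\mathrm{w}}\mid \mathrm{s}=s}(\B{w}\mid \mathrm{s}=s)\,d\B{w},
\end{IEEEeqnarray*}
which is valid for each $s\in\{0,1\}$, establishing the claim.

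The only delicate step is the second one: one must justify that $\mathrm{s}$ qualifies as a $\mathcal{F}(\B{\mathrm{w}},\varepsilon)$-type object in the sense of Assumption \ref{Stochastic:Independence:Requirement}, i.e., that its stochastic ingredient $\disturb_{2}$ is independent of $\mathrm{z}$. This follows from the exclusion/orthogonality property of a valid instrument (the IV is uncorrelated with the structural disturbances, hence in particular with $\disturb_{2}$ once the exogeneity of $\mathrm{z}$ with respect to the selection equation is invoked). Everything else is a routine application of iterated expectations and the definition of a conditional density.
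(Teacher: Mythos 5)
Your proof is correct and follows essentially the same route as the paper: apply the tower property conditioning on $\B{\mathrm{w}}$ inside the event $\{\mathrm{s}=s\}$, use Assumption \ref{Stochastic:Independence:Requirement} (with $\mathrm{s}$ viewed as a stochastic function of $\B{\mathrm{w}}$) to drop the conditioning on $\mathrm{s}$ in the inner expectation, invoke Assumption \ref{Joint:Dependence:Z} to replace it by $\mathcal{G}(\B{w})$, and write the outer expectation as an integral against $f_{\B{\mathrm{w}}|\mathrm{s}=s}$. Your closing remark about justifying the independence of the stochastic ingredient $\disturb_2$ from $\mathrm{z}$ is a reasonable aside that the paper simply subsumes into Assumption \ref{Stochastic:Independence:Requirement}, so nothing essential differs.
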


\begin{proof}
	It easy to see that $\B{\mathrm{w}}$ mediates between  $\mathrm{z}$ and $\mathrm{s}$ using the Tower property of conditional expectation \cite{williams1991probability}:\footnote{The Tower property is referred interchangeability to the law of iterated expectations. For formal proof see \cite{williams1991probability}.}
	%\begin{IEEEeqnarray}{lCr}
	
	%\TowerNew{z}{w}{s}
	%\end{IEEEeqnarray}
	
	\begin{IEEEeqnarray}{lCr}
		\CondEXY{z}{w}{s}=\NestedCondE{z}{w}{s}, \hspace{1em} s\in\curly{0,1}\nonumber
	\end{IEEEeqnarray}

	The indicator variable $\mathrm{s}$ is a stochastic function of $\B{\mathrm{w}}$, thus, it follows from assumption \ref{Stochastic:Independence:Requirement} that
	\begin{IEEEeqnarray}{lCr}
		\NestedCondE{z}{w}{s}=\mathbb{E}_{\B{\mathrm{w}}}[\mathbb{E}\braket{\mathrm{z}|\B{\mathrm{w}}}|\mathrm{s}=s], \hspace{1em} s\in\curly{0,1}\nonumber
	\end{IEEEeqnarray}
	
	Following assumption \ref{Joint:Dependence:Z} we get:
	\begin{IEEEeqnarray}{lCr}
		\scaleobj{0.85}{\CondEXY{z}{w}{s}=\mathbb{E}_{\B{\mathrm{w}}}[\mathbb{E}\braket{\mathrm{z}|\B{\mathrm{w}}}|\mathrm{s}=s]}\\ \scaleobj{0.8}{=\mathbb{E}_{\B{\mathrm{w}}}[\mathrm{\mathcal{G}}|\mathrm{s}=s]=\int_{\B{w}}\mathcal{G}(\B{w})f_{\B{\mathrm{w}}|\mathrm{s}=s}(\B{w}|\mathrm{s}=s)d\B{w}, \hspace{1em} s\in\curly{0,1}.}\nonumber
	\end{IEEEeqnarray}

	%The second equality follows from assumption \ref{Stochastic:Independence:Requirement} and relies on the fact that the indicator variable $\mathrm{s}$ is a stochastic function of $\B{\mathrm{w}}$. 
	%The second equality stems from assumption \ref{Stochastic:Independence:Requirement}, 
	
	%The indicator variable $\mathrm{s}$ is a stochastic function of $\B{\mathrm{w}}$, thus, it follows from assumption \ref{Stochastic:Independence:Requirement} that
	%$\mathbb{E}[\mathrm{z}|\mathrm{s},\B{\mathrm{w}}]=\mathbb{E}[\mathrm{z}|\B{\mathrm{w}}]$. Consequently, using assumption   \ref{Joint:Dependence:Z} $\mathbb{E}[\mathbb{E}\braket{\mathrm{z}|\B{\mathrm{w}}}|\mathrm{s}=s]=\mathbb{E}[\mathcal{G}(\B{\mathrm{w}})|\mathrm{s}=s]$ we get:
	%$\mathbb{E}[\mathrm{z}|\mathrm{s}=s]=\mathbb{E}[\mathcal{G}(\B{\mathrm{w}})|\mathrm{s}=s]=\int_{\B{w}}\mathcal{G}(\B{w})f_{\B{\mathrm{w}}|\mathrm{s}=s}(\B{w}|\mathrm{s}=s)d\B{w}$ $\forall s\in\curly{0,1}$.
\end{proof}
%This result is obtained 

%Following assumption \ref{Joint:Dependence:Z} we get
%$\mathbb{E}[\mathrm{z}|\B{\mathrm{w}=w}]=\int$,
%where $\mathcal{G}(\cdot)$ is some function of $\mathrm{w}$. Following assumption \ref{Stochastic:Independence:Requirement}:

%We denote a truncated environment using the indicator (selection variable) $\mathrm{s}=I(\disturb_{2i}>-\B{\mathrm{w}}^T\B{\mathrm{\gamma}})$ and show that .

%By assumptions \ref{Joint:Dependence:Z} and \ref{Stochastic:Independence:Requirement}, using the Tower property we get:
%$\TowerNew{z}{w}{s}$.\footnote{For formal proof see \cite{williams1991probability}.}

%by showing that the conditional expectation of the random variables product $\mathrm{z\cdot\disturb_{1}}$ given participation (which is a function of a covariates vector $\B{\mathrm{w}}$) is affected by the comovement of each one of these random variables with respect to $\B{\mathrm{w}}$. 

%We initially simplify the expectation of these random variables product,
%utilizing the Tower property \cite{williams1991probability} of conditional expectation that is: 
%let $\mathrm{z}$ and $\B{\mathrm{w}}$ be continuous random variables, and let $\mathrm{s}$ be a discrete variable, indicating  participation. These three random variables $\curly{\mathrm{z},\B{\mathrm{w}},\mathrm{s}}$ must satisfy $\Tower{z}{w}{s}$.\footnote{For formal proof see \cite{williams1991probability}.}% (see Appendix for a formal proof).

%We assume that the instrumental variable is jointly distributed with all covariates in the data.  
In Theorem \ref{Theorem:Lack:of:Orthogonality} to follow we use proposition \ref{Proposition:Dependence:Z:Selection} and present our primary argument: in truncated sample selection models, the orthogonality condition of the instrumental variable with respect to the random disturbance might be violated. This violation stems from a dependency between the instrumental variables and the selection equation's covariates.

\begin{theorem}[Lack of orthogonality]\label{Theorem:Lack:of:Orthogonality}	
	Let  $\mathrm{\disturb_1}$ and  $\mathrm{\disturb_2}$ be two jointly distributed random disturbances, and let $\mathrm{z}$ be a valid instrumental variable satisfying $\mathbb{E}[\mathrm{z\cdot\disturb_1}]=0$. Denote a random variables vector $\B{\mathrm{w}}\in\mathbb{R}^l$, a parameters vector $\B{\gamma}\in\mathbb{R}^l$ and a truncated environment using the indicator variable $\mathrm{s}=I(\disturb_{2}>-\B{\mathrm{w'\gamma})}$.
	%(ii) 	 $\mathbb{E}[\mathrm{z}|\B{\mathrm{w}=w},\mathrm{s}=s]=\mathbb{E}[\mathrm{z}|\B{\mathrm{w}=w}]=\mathcal{G}(\B{w})$ ;	
	Suppose that the following conditions are satisfied:\\ (i) assumptions \ref{Joint:Dependence:Z} and \ref{Stochastic:Independence:Requirement} hold; \\(ii) $\mathbb{E}[\mathrm{\disturb_1}|\mathrm{s}=s,\B{\mathrm{w}=w}]=\mathcal{M}(\B{w}^T\B{\gamma})$;
	(iii) $\mathrm{z}$ and $\mathrm{\disturb_1}$ are conditionally independent given $\B{\mathrm{w}}$ and $\mathrm{s}$; (iv) $\mathcal{G}$ and $\mathcal{M}$ are linearly dependent in the truncated environment (given $\mathrm{s}$).\footnote{The conditional linear dependence between $\mathcal{G}$ and $\mathcal{M}$ given the indicator (selection) variable implies that $\mathbb{E}\braket{\mathcal{G}\mathcal{M}|\mathrm{s}=s}\ne \mathbb{E}\braket{\mathcal{G}|\mathrm{s}=s}\mathbb{E}\braket{\mathcal{M}|\mathrm{s}=s}$. Since  $\mathcal{G}$ and $\mathcal{M}$ are both functions of the random variable $\B{\mathrm{w}}$, this inequality implies that %$\mathbb{E}\braket{\mathcal{G}\mathcal{M}|\mathrm{s}=s}=\int\mathcal{G}(\B{w})\mathcal{M}(\B{w}^T\B{\gamma})f_{\B{\mathrm{w}}|\mathrm{s}=s}(\B{w})d\B{w}$ and $\mathbb{E}\braket{\mathcal{G}|\mathrm{s}=s}=\int\mathcal{G}(\B{w})f_{\B{\mathrm{w}}|\mathrm{s}=s}(\B{w})d\B{w}$ , since 
		$\int \mathcal{G}(\B{w})\mathcal{M}(\B{w}^T\B{\gamma})f_{\B{\mathrm{w}}|\mathrm{s}=s}(\B{w})d\B{w}\ne \int \mathcal{G}(\B{w})f_{\B{\mathrm{w}}|\mathrm{s}=s}(\B{w})d\B{w} \int \mathcal{M}(\B{w}^T\B{\gamma})f_{\B{\mathrm{w}}|\mathrm{s}=s}(\B{w})d\B{w}\nonumber$.} 
	
	Under conditions (i)-(iv) above, $\mathrm{z}$ is not orthogonal to the random disturbance $\mathrm{\disturb_1}$ given $\mathrm{s}$.
\end{theorem}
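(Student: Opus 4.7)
The plan is to compute the conditional cross-moment $\mathbb{E}[\mathrm{z}\cdot\mathrm{\disturb_1}\mid\mathrm{s}=s]$ directly, exploiting the tower-property technique used in Proposition \ref{Proposition:Dependence:Z:Selection}, and to show that under (i)--(iv) it collapses to a conditional expectation of $\mathcal{G}(\B{\mathrm{w}})\mathcal{M}(\B{\mathrm{w}}^T\B{\gamma})$ given participation, which by (iv) fails to factor into the product of the conditional means of $\mathrm{z}$ and $\mathrm{\disturb_1}$ and therefore does not vanish generically.

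First, by the law of iterated expectations with the inner layer conditioning on $\B{\mathrm{w}}$ and $\mathrm{s}$,
\begin{equation*}
\mathbb{E}[\mathrm{z}\cdot\mathrm{\disturb_1}\mid\mathrm{s}=s]=\mathbb{E}_{\B{\mathrm{w}}}\!\left[\mathbb{E}[\mathrm{z}\cdot\mathrm{\disturb_1}\mid\B{\mathrm{w}},\mathrm{s}=s]\,\Big|\,\mathrm{s}=s\right].
\end{equation*}
Condition (iii) factors the inner conditional expectation as $\mathbb{E}[\mathrm{z}\mid\B{\mathrm{w}},\mathrm{s}=s]\cdot\mathbb{E}[\mathrm{\disturb_1}\mid\B{\mathrm{w}},\mathrm{s}=s]$. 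Since $\mathrm{s}=I(\mathrm{\disturb_2}>-\B{\mathrm{w}}^T\B{\gamma})$ is a stochastic function of $\B{\mathrm{w}}$ whose noise $\mathrm{\disturb_2}$ is independent of $\mathrm{z}$, assumption \ref{Stochastic:Independence:Requirement} strips the conditioning on $\mathrm{s}$ from the first factor, and assumption \ref{Joint:Dependence:Z} identifies the remaining conditional expectation with $\mathcal{G}(\B{\mathrm{w}})$; condition (ii) identifies the second factor directly with $\mathcal{M}(\B{\mathrm{w}}^T\B{\gamma})$. Averaging against the conditional density of $\B{\mathrm{w}}$ given $\mathrm{s}=s$ yields
\begin{equation*}
\mathbb{E}[\mathrm{z}\cdot\mathrm{\disturb_1}\mid\mathrm{s}=s]=\int \mathcal{G}(\B{w})\,\mathcal{M}(\B{w}^T\B{\gamma})\,f_{\B{\mathrm{w}}\mid\mathrm{s}=s}(\B{w}\mid\mathrm{s}=s)\,d\B{w}.
\end{equation*}
If $\mathcal{G}$ and $\mathcal{M}$ were conditionally mean-independent given $\mathrm{s}$, the right-hand side would split into $\mathbb{E}[\mathrm{z}\mid\mathrm{s}=s]\cdot\mathbb{E}[\mathrm{\disturb_1}\mid\mathrm{s}=s]$; condition (iv), as formalised in the accompanying footnote, rules out exactly this factorisation, so the conditional cross-moment does not reduce to a product of conditional means and therefore does not vanish in general, violating orthogonality.

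The step I expect to be the main obstacle is the joint use of assumptions \ref{Joint:Dependence:Z} and \ref{Stochastic:Independence:Requirement} to remove the conditioning on $\mathrm{s}$ from $\mathbb{E}[\mathrm{z}\mid\B{\mathrm{w}},\mathrm{s}=s]$ \emph{inside} a cross moment rather than inside $\mathbb{E}[\mathrm{z}\mid\mathrm{s}=s]$ as in Proposition \ref{Proposition:Dependence:Z:Selection}; it is condition (iii) that makes this transfer legitimate by isolating the $\mathrm{z}$-marginal inside the product before the outer expectation is taken. A secondary subtlety, already acknowledged in the footnote to (iv), is interpreting ``linearly dependent'' in the strong sense $\mathbb{E}[\mathcal{G}\mathcal{M}\mid\mathrm{s}=s]\neq \mathbb{E}[\mathcal{G}\mid\mathrm{s}=s]\mathbb{E}[\mathcal{M}\mid\mathrm{s}=s]$, which is exactly what is needed to conclude non-orthogonality rather than a mere failure of a product decomposition.
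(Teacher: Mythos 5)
Your proposal is correct and follows essentially the same route as the paper's own proof: the tower property with inner conditioning on $\B{\mathrm{w}}$ and $\mathrm{s}$, factorization via condition (iii), identification of the factors with $\mathcal{G}(\B{\mathrm{w}})$ and $\mathcal{M}(\B{\mathrm{w}}^T\B{\gamma})$ through assumptions \ref{Joint:Dependence:Z}, \ref{Stochastic:Independence:Requirement} and condition (ii), and then invoking condition (iv) to deny the factorization into the product of conditional means, yielding a nonzero conditional covariance. No gaps beyond those already acknowledged; this matches the paper's argument step for step.
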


%\pagebreak
\begin{proof}
	Using the Tower property, the following must hold: % depicted in proposition \eqref{Theorem:Tower}
	\begin{IEEEeqnarray}{lCr}
		\scaleobj{0.8}{\CondEXY{z\disturb_1}{w}{s}=\NestedCondE{z\disturb_1}{w}{s}}
		\scaleobj{0.8}{=\underbrace{\NestedCondIndepE{z}{\disturb_1}{w}{s}}_{\substack{\text{by conditional independence of }\\ \mathrm{z} \text{ and } \mathrm{\disturb_1} \text{ given } \B{\mathrm{w}} \text{ and } \mathrm{s}}}}\nonumber\\
		\scaleobj{0.8}{=\mathbb{E}_{\B{\mathrm{w}}}[\mathrm{\mathcal{G}\mathcal{M}}|\mathrm{s}=s]=\int_{\B{w}}\mathcal{G}(\B{w})\mathcal{M}(\B{w}^T\B{\gamma})f_{\B{\mathrm{w}}|\mathrm{s}=s}(\B{w}|\mathrm{s}=s)d\B{w}}\nonumber
	\end{IEEEeqnarray}

	Similarly (using proposition \ref{Proposition:Dependence:Z:Selection}),
	\begin{IEEEeqnarray}{lCr}
		\scaleobj{0.8}{\CondEXY{z}{w}{s}=\int_{\B{w}}\mathcal{G}(\B{w})f_{\B{\mathrm{w}}|\mathrm{s}=s}(\B{w}|\mathrm{s}=s)d\B{w}}\nonumber
	\end{IEEEeqnarray}
	
	and
	\begin{IEEEeqnarray}{lCr}
		\scaleobj{0.8}{\CondEXY{\disturb_1}{w}{s}=\NestedCondE{\disturb_1}{w}{s}}\\
		\scaleobj{0.8}{=\mathbb{E}_{\B{\mathrm{w}}}[\mathrm{\mathcal{M}}|\mathrm{s}=s]=\int_{\B{w}}\mathcal{M}(\B{w}^T\B{\gamma})f_{\B{\mathrm{w}}|\mathrm{s}=s}(\B{w}|\mathrm{s}=s)d\B{w}}\nonumber
	\end{IEEEeqnarray}

	As $\mathcal{G}$ and $\mathcal{M}$ are conditionally linearly dependent random variables in the truncated environment (given $\mathrm{s}$), implies:
	\begin{IEEEeqnarray}{lCr}
		\scaleobj{0.8}{\int \mathcal{G}(\B{w})\mathcal{M}(\B{w}^T\B{\gamma})f_{\B{\mathrm{w}}|\mathrm{s}=s}(\B{w})d\B{w}}
		\\ \scaleobj{0.8}{\ne \int \mathcal{G}(\B{w})f_{\B{\mathrm{w}}|\mathrm{s}=s}(\B{w})d\B{w} \int \mathcal{M}(\B{w}^T\B{\gamma})f_{\B{\mathrm{w}}|\mathrm{s}=s}(\B{w})d\B{w}}\nonumber
	\end{IEEEeqnarray}
	
	and consequently:
	\begin{IEEEeqnarray}{lCr}
		\scaleobj{0.8}{\CondEXY{z\disturb_1}{w}{s}\ne\CondEXY{z}{w}{s}\CondEXY{\disturb_1}{w}{s} => \mathbb{COV}[\mathrm{z},\mathrm{\disturb_1}|\mathrm{s}=s]\ne 0}\nonumber
	\end{IEEEeqnarray}
	
	Therefore, $\mathrm{z}$ is not orthogonal to $\mathrm{\disturb_1}$ given $\mathrm{s}$ (in the truncated environment).
	%	Let define the participants' substantive regression equation as \cite{robinson1988root}: 
	%	\begin{IEEEeqnarray}{lCr}
	\label{Robinson:equation}
	%	 y_{1i}^*=\B{x_i\beta} + \underbrace{\mathcal{M}(\B{w}_i^T\B{\gamma}) + \epsilon_i}_{\disturb_1^*}
	%	\end{IEEEeqnarray}
	
	%	where $\B{x}_i$ and $\B{w_i}$ are each a realization of $\B{\mathrm{x}}\in\mathbb{R}^p$ and $\B{\mathrm{w}}\in\mathbb{R}^l$, respectively. $\mathcal{M}$ is some unknown function for capturing the bias term (which is the conditional expectation of substantive equation's dependent variable) $\mathbb{E}[y_{1i}|y_{2i}\ge 0]$. While $\epsilon_i$ is an i.i.d white noise randomly distributed disturbance.
	
	%	Using \eqref{Robinson:equation}, we obtain:	
	%	\begin{IEEEeqnarray}{lCr}
	
	%	\mathbb{E}[(\mathcal{G}(\B{w_i}) + \varphi_i)\close{\mathcal{M}(\B{w}_i^T\B{\gamma}) + \epsilon_i}]=\underbrace{\mathbb{E}[\mathcal{G}(\B{w_i})^{*\prime}\mathcal{M}(\B{w}_i^T\B{\gamma})]}_{\ne 0}+\underbrace{\mathbb{E}[z^{*\prime}\epsilon_i}_0]
	%	\end{IEEEeqnarray}
	
	%	where $z^*\equiv \braket{z|\disturb_2 > -\B{\mathrm{w}'\gamma}}$. 
\end{proof}

%The orthogonality condition is violated since the random disturbance $\epsilon_{1i}^{**}$ in  \eqref{Truncated:Substatnive:System}  is contaminated with $\mathcal{M}_1(\cdot)$, a function of the selection equation's covariates which are correlated with the instrumental variable. %Due to the joint dependence of the instrumental variable and the covariates are referred to as the contamination factor. % a set of covariates which are jointly dependent on the instrumental variable the covariate $\mathcal{M}_1(\cdot)$, referred to as the contamination factor which might be correlated with the instrumental variable. 
However, the orthogonality condition can be satisfied by removing the contamination factor, which is the covariate generating the comovement between the random disturbance  and the instrumental variable, as shown in the following Theorem \ref{Theorem:Contamination}.

\begin{theorem}[Bias removal]\label{Theorem:Contamination}
	Let  $\mathrm{\disturb_1}$ and  $\mathrm{\disturb_2}$ be two jointly distributed random disturbances, and let $\mathrm{z}$ be a valid instrumental variable satisfying $\mathbb{E}[\mathrm{z\cdot\disturb_1}]=0$. Denote a random variables vector $\B{\mathrm{w}}\in\mathbb{R}^l$, a parameters vector $\B{\gamma}\in\mathbb{R}^l$ and a truncated environment using the indicator variable $\mathrm{s}=I(\disturb_{2}>-\B{\mathrm{w'\gamma})}$.
	%(ii) 	 $\mathbb{E}[\mathrm{z}|\B{\mathrm{w}=w},\mathrm{s}=s]=\mathbb{E}[\mathrm{z}|\B{\mathrm{w}=w}]=\mathcal{G}(\B{w})$ ;	
	Suppose that the following conditions are satisfied: (i) assumptions \ref{Joint:Dependence:Z} and \ref{Stochastic:Independence:Requirement} hold; \\(ii) $\mathbb{E}[\mathrm{\disturb_1}|\mathrm{s}=s,\B{\mathrm{w}=w}]=\mathcal{M}(\B{w}^T\B{\gamma})$.
	
	Under conditions (i) and (ii) above,	
	removing the contamination factor (the bias term) from the residual in the truncated environment
	leads to orthogonality of the instrumental variable to the substantive equation's disturbance, such that: $\CondEXY{z\braket{\disturb_{1}-\mathcal{M}(\B{\mathrm{w}}^T\B{\mathrm{\gamma}})}}{w}{s} = 0$.
\end{theorem}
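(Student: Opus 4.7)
The plan is to show that the two terms arising from the linearity of the conditional expectation exactly cancel. First I would split
\[
\mathbb{E}[\mathrm{z}(\disturb_{1}-\mathcal{M}(\B{\mathrm{w}}^T\B{\gamma}))\mid \mathrm{s}=s]
= \mathbb{E}[\mathrm{z}\,\disturb_{1}\mid \mathrm{s}=s] - \mathbb{E}[\mathrm{z}\,\mathcal{M}(\B{\mathrm{w}}^T\B{\gamma})\mid \mathrm{s}=s],
\]
and treat each piece with the Tower property in the same way used in the proofs of Proposition \ref{Proposition:Dependence:Z:Selection} and Theorem \ref{Theorem:Lack:of:Orthogonality}. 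For the first piece I can reuse the computation from Theorem \ref{Theorem:Lack:of:Orthogonality} verbatim: conditioning on $\B{\mathrm{w}}$ inside, invoking the conditional independence of $\mathrm{z}$ and $\disturb_{1}$ given $(\B{\mathrm{w}},\mathrm{s})$, applying assumption \ref{Joint:Dependence:Z} and condition (ii), yields $\int \mathcal{G}(\B{w})\mathcal{M}(\B{w}^T\B{\gamma})f_{\B{\mathrm{w}}\mid \mathrm{s}=s}(\B{w}\mid \mathrm{s}=s)\,d\B{w}$.

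For the second piece the key observation is that $\mathcal{M}(\B{\mathrm{w}}^T\B{\gamma})$ is a deterministic function of $\B{\mathrm{w}}$, so it can be pulled outside the inner conditional expectation once we condition on $\B{\mathrm{w}}$:
\[
\mathbb{E}[\mathrm{z}\,\mathcal{M}(\B{\mathrm{w}}^T\B{\gamma})\mid \mathrm{s}=s]
= \mathbb{E}_{\B{\mathrm{w}}}\!\left[\mathcal{M}(\B{\mathrm{w}}^T\B{\gamma})\,\mathbb{E}[\mathrm{z}\mid \B{\mathrm{w}},\mathrm{s}]\,\Big|\,\mathrm{s}=s\right].
\]
Assumption \ref{Stochastic:Independence:Requirement} then reduces $\mathbb{E}[\mathrm{z}\mid \B{\mathrm{w}},\mathrm{s}]$ to $\mathbb{E}[\mathrm{z}\mid \B{\mathrm{w}}]$, because $\mathrm{s}=I(\disturb_{2}>-\B{\mathrm{w}}^T\B{\gamma})$ is a stochastic function of $\B{\mathrm{w}}$ of the form required by the assumption. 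Applying assumption \ref{Joint:Dependence:Z} gives $\mathbb{E}[\mathrm{z}\mid \B{\mathrm{w}}]=\mathcal{G}(\B{\mathrm{w}})$, so the second piece equals the same integral $\int \mathcal{G}(\B{w})\mathcal{M}(\B{w}^T\B{\gamma})f_{\B{\mathrm{w}}\mid \mathrm{s}=s}(\B{w}\mid \mathrm{s}=s)\,d\B{w}$.

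Subtracting the two identical integrals yields zero, which is the desired orthogonality. The only subtle step, and the one I would write out most carefully, is the justification for reducing $\mathbb{E}[\mathrm{z}\mid \B{\mathrm{w}},\mathrm{s}]$ to $\mathbb{E}[\mathrm{z}\mid \B{\mathrm{w}}]$ via assumption \ref{Stochastic:Independence:Requirement}: one must verify that the selection indicator $\mathrm{s}$ fits the template $\mathcal{F}(\B{\mathrm{w}},\varepsilon)$ with a noise $\varepsilon$ (here $\disturb_{2}$) independent of $\mathrm{z}$. Once that mapping is stated explicitly the rest is a direct, mechanical application of the Tower property already used in the preceding results, so no additional machinery is required.
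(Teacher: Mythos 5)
Your proposal is correct and follows essentially the same route as the paper's proof: the same split into $\mathbb{E}[\mathrm{z}\disturb_1|\mathrm{s}=s]-\mathbb{E}[\mathrm{z}\mathcal{M}(\B{\mathrm{w}}^T\B{\gamma})|\mathrm{s}=s]$, the same Tower-property reduction of the second term to $\int \mathcal{G}(\B{w})\mathcal{M}(\B{w}^T\B{\gamma})f_{\B{\mathrm{w}}|\mathrm{s}=s}(\B{w}|\mathrm{s}=s)d\B{w}$, and the same reuse of the Theorem \ref{Theorem:Lack:of:Orthogonality} computation for the first term (including its conditional-independence premise, which the paper likewise invokes implicitly even though it is not listed among conditions (i)--(ii) here). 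Your observation that $\mathcal{M}(\B{\mathrm{w}}^T\B{\gamma})$ is deterministic given $\B{\mathrm{w}}$ and can simply be pulled out is a slightly cleaner justification than the paper's appeal to conditional independence of $\mathrm{z}$ and $\mathcal{M}(\B{\mathrm{w}}^T\B{\gamma})$, but it is the same argument in substance.
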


%\color{blue}
\begin{proof}
	Express $\CondEXY{z\braket{\disturb_{1}-\mathcal{M}(\B{\mathrm{w}}^T\B{\mathrm{\gamma}})}}{w}{s}$ as a difference of two conditional expectations:% $\CondEXY{\disturb_1-\mathcal{M}(\B{\mathrm{w}}^T\B{\mathrm{\gamma}})}{w}{s}=0$,} we obtain:
	\begin{IEEEeqnarray}{lCr}
		\scaleobj{0.8}{\CondEXY{z\braket{\disturb_1-\mathcal{M}(\B{\mathrm{w}}^T\B{\mathrm{\gamma}})}}{w}{s}	=
			\CondEXY{z\disturb_{1}}{w}{s}-\CondEXY{z\mathcal{M}(\B{\mathrm{w}}^T\B{\mathrm{\gamma}})}{w}{s}}\nonumber
	\end{IEEEeqnarray}
	
	Using the Tower property, to get:  % depicted in \eqref{Theorem:Tower}
	\begin{IEEEeqnarray}{lCr}
		\scaleobj{0.8}{\CondEXY{z\mathcal{M}(\B{\mathrm{w}}^T\B{\mathrm{\gamma}})}{w}{s}=
			\NestedCondE{z\mathcal{M}(\B{\mathrm{w}}^T\B{\mathrm{\gamma}})}{w}{s}}\nonumber\\
		=\scaleobj{0.8}{\underbrace{\NestedCondIndepENosubscript{z}{\mathcal{M}(\B{\mathrm{w}}^T\B{\mathrm{\gamma}})}{w}{s}}_{\substack{\text{by conditional independence of }\\ \mathrm{z} \text{ and } \mathcal{M}(\B{\mathrm{w}}^T\B{\mathrm{\gamma}}) \text{ given } \B{\mathrm{w}} \text{ and } \mathrm{s}}} = \CondEXY{\mathcal{G}(\B{\mathrm{w}})\mathcal{M}(\B{\mathrm{w}}^T\B{\mathrm{\gamma}})}{w}{s}} 	\nonumber
	\end{IEEEeqnarray}
	
	As $\scaleobj{0.8}{\CondEXY{z\disturb_{1}}{w}{s}=\CondEXY{\mathcal{G}(\B{\mathrm{w}})\mathcal{M}(\B{\mathrm{w}}^T\B{\mathrm{\gamma}})}{w}{s}}$  
	(proof of Theorem \ref{Theorem:Lack:of:Orthogonality}), which implies that $\scaleobj{0.8}{\CondEXY{z\braket{\disturb_{1}-\mathcal{M}(\B{\mathrm{w}}^T\B{\mathrm{\gamma}})}}{w}{s} = 0.}$
	
	%\pagebreak
	
	Moreover,
	\begin{IEEEeqnarray}{lCr}
		\scaleobj{0.8}{\mathbb{COV}\braket{\mathrm{z},\disturb_{1}-\mathcal{M}(\B{\mathrm{w}}^T\B{\mathrm{\gamma}})|\mathrm{s}=s}	=\underbrace{\CondEXY{\mathrm{z}\braket{\disturb_{1}-\mathcal{M}(\B{\mathrm{w}}^T\B{\mathrm{\gamma}})}}{w}{s}}_{0}}\\ \scaleobj{0.8}{-\CondEXY{z}{w}{s} \underbrace{\CondEXY{\disturb_{1}-\mathcal{M}(\B{\mathrm{w}}^T\B{\mathrm{\gamma}})}{w}{s}}_{0}=0.}  \nonumber
	\end{IEEEeqnarray}
	
\end{proof}
%\color{black}

Therefore, a valid instrumental variable $\mathrm{z}$ is orthogonal to the truncated distribution  (non-contaminated) disturbance $\epsilon_{1i}^{**}$ in \eqref{Truncated:Substatnive:System}, even though $\mathrm{z}$ and $\B{\mathrm{w}}$ are dependent.

The joint dependence of  $(\disturb_1,\disturb_2,\mathrm{v})$ implies the violation of zero mean expectation (under truncation) in the $x_{1i}$ regression equation \eqref{Substantive:Equation}, such that  $\mathbb{E}[\mathrm{v}|\disturb_{2}>-\B{w'\gamma}]=\mathcal{M}_2(\B{w}^T\B{\gamma})
\ne \mathbb{E}[\mathrm{v}]=0$. That is, the conditional expectation of $\mathrm{v}$ given an endogenous truncation is a function of the covariate vector $\B{w}$, while in the population it does not depend on $\B{w}$ and has a zero mean expectation.  This violation is a precondition for the endogeneity of  $\curly{\B{x_{-1i},z_i}}$ with respect to $\mathrm{v_i}$ given participation in the regression of $x_{1i}$.\footnote{As been discussed in \cite{heckman1979sample}, the fact that the conditional disturbance (given participation) in the substantive equation of  $x_{1i}$ is a function of the selection equation's covariates, leads to a potential correlation between the disturbance and the substantive equation's covariates. This correlation implies the endogeneity of the substantive equation's covariates $\curly{\B{x_{-1i},z_i}}$ with respect to its random disturbance $\mathrm{v_i}$ given participation.} The following theorem indicates that such violation is also obtained in cases where %$\mathrm{v}$ and $\disturb_2$ are conditionally independent given $\disturb_1$ implying that 
the comovement of $\mathrm{v}$ and $\disturb_2$ is entirely related to a variation in $\disturb_1$. 

%bias-term consisting of the selection equation's covariates, the .
%\footnote{As can be shown in \eqref{Truncated:Substatnive:System}, the endogenous covariate $x_{1i}$ in the truncated environment (given participation) is regressed on the instrumental variable, the substantive equation's exogenous covariates and a non-linear function of the selection's equation covariates.} %This non-linear function captures

% the non-constant part in is necessary due to the fact that the disturbance $\mathrm{v}|\mathrm{s}=s$ is not constant.

% $\mathbb{E}[\mathrm{v}|\mathrm{s}=s]\ne 0$, which is a violation of $OLS$ assumption in the truncated environment. 
%for generating  a zero-mean random disturbance $\epsilon_{1i}$ since generally 
%Next we provide a Theorem that unconditional independence between $\mathrm{v}$ and $\disturb_2$, the conditional expec

\begin{theorem}[Conditional independence]\label{Theorem:IndirectDependence:V:Eps2}	
	Let  $\mathrm{\disturb_1}$ and $\mathrm{\disturb_2}$ be two jointly distributed random disturbances of the substantive and selection equations, respectively. Let $\mathrm{v}$ be a random variable which depends on $\disturb_1$ such that $\mathrm{v}$ and $\disturb_2$ are conditionally independent given $\disturb_1$.
	%the endogenous part of $\mathrm{x_1}$, the endogenous covariate in the substantive equation, such that $\mathrm{v}$ and $\disturb_1$ are dependent (endogeneity), while $v_i$ and $\disturb_2$ are independent.  % jointly distributed random disturbances, 
	%such that ($\disturb_1$, $\mathrm{v}$ and ($\disturb_1$ and $\disturb_2$) are bivariate dependent random variable. $\mathrm{v}$ is the endogenous part of the endogenous covariate. 
	Denote a random variables vector $\B{\mathrm{w}}\in\mathbb{R}^l$ independent of $(\disturb_1,\disturb_2,\mathrm{v})$ with a realization $\B{w}$, a parameters vector $\B{\gamma}\in\mathbb{R}^l$ and a truncated environment using the indicator variable $\mathrm{s}=I(\disturb_{2}>-\B{w'\gamma})$.% The realization of $\B{\mathrm{w}}$ and $\mathrm{s}$ are denoted by $\B{w}$ and $s$, respectively.
	
	Assume the following conditions are satisfied: (i) the conditional expectation of the random disturbance given participation is $\mathbb{E}[\mathrm{\disturb_1}|\disturb_{2}>-\B{w'\gamma}]=\mathcal{M}_1(\B{w}^T\B{\gamma})$ \cite{robinson1988root}; (ii) 	 $\mathbb{E}[\mathrm{v}|\disturb_1,\disturb_{2}>-\B{w'\gamma}]=\mathbb{E}[\mathrm{v}|\disturb_1]=\mathcal{H}(\disturb_1)$, (endogeneity); (iii) $\mathcal{H}(.)$, a monotonic mapping $\mathbb{R}\mapsto \mathbb{R}$. 
	
	Under conditions (i)-(iii) above, $\mathbb{E}[\mathrm{v}|\disturb_{2}>-\B{w'\gamma}]\ne\mathbb{E}[\mathrm{v}]$ regardless of the conditional independence of $\mathrm{v}$ and $\disturb_2$ given $\disturb_1$. 
	%$\mathrm{v}$ depends on $\mathrm{s}$ regardless of the unconditional independence between $\mathrm{v}$ and $\disturb_2$. 
	%(which is  a function of $\disturb_2$) are jointly dependent random variables given $s$ (participation), regardless of their unconditional independence. 
\end{theorem}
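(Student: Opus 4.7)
The plan is to exploit the mediating role of $\disturb_1$ between $\mathrm{v}$ and the truncation event, and to show that although $\mathrm{v}$ and $\disturb_2$ are conditionally independent given $\disturb_1$, the truncation still distorts the conditional law of $\disturb_1$, and this distortion propagates through the monotone map $\mathcal{H}$ to $\mathrm{v}$.

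First I would apply the Tower property of conditional expectation to the truncated mean by conditioning inside on $\disturb_1$:
\begin{IEEEeqnarray*}{lCr}
\mathbb{E}[\mathrm{v}\,|\,\disturb_{2}>-\B{w'\gamma}] = \mathbb{E}_{\disturb_1}\braket{\mathbb{E}[\mathrm{v}\,|\,\disturb_1,\disturb_{2}>-\B{w'\gamma}]\,\big|\,\disturb_{2}>-\B{w'\gamma}}.
\end{IEEEeqnarray*}
By condition (ii) the inner conditional expectation collapses to the deterministic function $\mathcal{H}(\disturb_1)$, so the right-hand side reduces to $\mathbb{E}_{\disturb_1}[\mathcal{H}(\disturb_1)\,|\,\disturb_{2}>-\B{w'\gamma}]$. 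In parallel, the unconditional Tower property applied with the marginal version $\mathbb{E}[\mathrm{v}\,|\,\disturb_1]=\mathcal{H}(\disturb_1)$ yields $\mathbb{E}[\mathrm{v}]=\mathbb{E}[\mathcal{H}(\disturb_1)]$. The theorem therefore reduces to establishing that these two integrals of $\mathcal{H}(\disturb_1)$---one against the conditional law of $\disturb_1$ given participation, the other against its marginal law---do not coincide.

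Next I would argue that they differ by contrasting the two laws of $\disturb_1$. Condition (i) states $\mathbb{E}[\disturb_1\,|\,\disturb_{2}>-\B{w'\gamma}]=\mathcal{M}_1(\B{w}^T\B{\gamma})$, which is a non-constant function of the linear index and therefore distinct from the marginal mean $\mathbb{E}[\disturb_1]$ (zero by construction of the disturbance). Hence the conditional distribution of $\disturb_1$ given participation is not the same as its marginal. Because $\mathcal{H}$ is strictly monotone (and thus one-to-one) by condition (iii), this distributional mismatch is inherited by the push-forward law of $\mathcal{H}(\disturb_1)$, producing $\mathbb{E}[\mathrm{v}\,|\,\disturb_{2}>-\B{w'\gamma}]\ne\mathbb{E}[\mathrm{v}]$ as claimed, and in particular the right-hand side is the trivially constant $\mathbb{E}[\mathcal{H}(\disturb_1)]$ while the left-hand side varies with $\B{w}^T\B{\gamma}$ through $\mathcal{M}_1$.

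The main obstacle will be the last implication: a mere gap in the first moment of $\disturb_1$ does not in full generality force a gap in the expectation of a nonlinear $\mathcal{H}(\disturb_1)$. I would close this step by leaning on the monotonicity assumption (iii) together with the first-order stochastic ordering that the truncation induces on $\disturb_1$ via its dependence with $\disturb_2$: in the standard endogenous-truncation setup where $\disturb_1$ and $\disturb_2$ are positively (or negatively) associated, conditioning on $\disturb_{2}>-\B{w'\gamma}$ shifts the law of $\disturb_1$ in one direction, and a strictly monotone $\mathcal{H}$ transfers that shift to $\mathcal{H}(\disturb_1)$ unambiguously. This delivers the required strict inequality of expectations and completes the argument without requiring any specific parametric assumption on the joint distribution of $(\disturb_1,\disturb_2,\mathrm{v})$.
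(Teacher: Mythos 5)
Your proposal is correct and follows essentially the same route as the paper's proof: both apply the Tower property by conditioning on $\disturb_1$, invoke condition (ii) to collapse the inner expectation to $\mathcal{H}(\disturb_1)$, and then use the monotonicity of $\mathcal{H}$ together with the dependence of $\disturb_1$ on the participation event to conclude $\mathbb{E}[\mathcal{H}(\disturb_1)\,|\,\mathrm{s}=1]\ne\mathbb{E}[\mathcal{H}(\disturb_1)]=\mathbb{E}[\mathrm{v}]$. If anything, you are more explicit than the paper about the delicate final step (a first-moment shift in $\disturb_1$ does not by itself force a shift in the mean of a nonlinear $\mathcal{H}(\disturb_1)$), which the paper simply asserts via monotonicity, whereas your closure via a stochastic-ordering argument makes visible an association between $\disturb_1$ and $\disturb_2$ that is implicit in the paper rather than listed among conditions (i)--(iii).
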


\begin{proof}
	Applying Tower property  to $\mathbb{E}[\mathrm{v}|\mathrm{s}=1]$:
	\begin{IEEEeqnarray}{lCr}
		\scaleobj{0.8}{\mathbb{E}[\mathrm{v}|\disturb_{2}>-\B{w'\gamma}]=\mathbb{E}[\mathrm{v}|\mathrm{s}=1]=\mathbb{E}_{\disturb_1}\curly{\mathbb{E}[\mathrm{v}|\disturb_1,\mathrm{s}]|\mathrm{s}=1}} \nonumber \\ \scaleobj{0.8}{=\mathbb{E}[\mathcal{H}(\disturb_1)|\mathrm{s}=1]=\mathcal{M}_2(\B{w}^T\B{\gamma})\ne\mathbb{E}[\mathrm{v}].}\nonumber
	\end{IEEEeqnarray}

	It can be shown that $\disturb_1$  mediates between $\mathrm{v}$ and $\mathrm{s}$ (participation), in that it generates a comovement between the random variables $\mathrm{v}$ and $\mathrm{s}$.
	The last equality relies on the fact that the random variable $\mathcal{H}(\disturb_1)$ is a monotonic mapping of $\disturb_1$, implying dependence on $\mathrm{s}$ due to the dependency between $\disturb_1$ and $\mathrm{s}$.	
\end{proof}

%We note that in the presence of truncated data,  one cannot resort to \cite{hausman1978specification}'s exogeneity test as the instrumental variable is no longer orthogonal to the substantive equation's disturbance, being the basic assumption of Hausman's procedure.

Next we show that the conventional IV estimator is inconsistent in the presence of a truncated environment in which the expectation of the instrumental variable and the random disturbance are functions of the selection equation's covariates vector $\B{\mathrm{w}}$. The proof in section \ref{Proof:asymptotic} to follow, relies on a linear dependence assumption between these two functions of $\B{\mathrm{w}}$. The rationale for the linear dependence is due to the fact that the random disturbance's ($\disturb_1$) conditional expectation generally satisfies monotonicity with respect to the index variable $\B{\mathrm{w}'\gamma}$. Therefore, it is enough to assume that, on average, $\mathrm{z}$ is affected monotonically by the index variable $\B{\mathrm{w}'\gamma}$ to generate a linear dependence between  $\mathrm{z}$ and the conditional expectation of $\disturb_1$ given participation.\footnote{Both functions are dependent through $\B{\mathrm{w}}$ by construction, generally leading to some degree of linear dependence.}
\subsection{The conventional IV estimator's asymptotic bias}\label{Proof:asymptotic}
The IV estimator's asymptotic bias is:
\begin{IEEEeqnarray}{lCr}
	\scaleobj{0.8}{\B{\mathrm{\widehat{\beta}_{iv}}}=(\B{\mathrm{z}}^T\B{\mathrm{x}})^{-1}\B{\mathrm{z}}^T\B{\mathrm{y}}_1=
		(\B{\mathrm{z}}^T\B{\mathrm{x}})^{-1}\B{\mathrm{z}}^T(\B{\mathrm{x\beta}}+\mathcal{M}_1(\B{\mathrm{w}}^T\B{\mathrm{\gamma}})+\epsilon_{1i}^{**})}
\end{IEEEeqnarray}

\vspace{-2em}
\begin{IEEEeqnarray}{lCr}
	\scaleobj{0.8}{\B{\mathrm{\widehat{\beta}_{iv}}}=
		(\B{\mathrm{z}}^T\B{\mathrm{x}})^{-1}\B{\mathrm{z}}^T(\B{\mathrm{x\beta}})+(\B{\mathrm{z}}^T\B{\mathrm{x}})^{-1}\B{\mathrm{z}}^T\mathcal{M}_1(\B{\mathrm{w}}^T\B{\mathrm{\gamma}})+(\B{\mathrm{z}}^T\B{\mathrm{x}})^{-1}\B{\mathrm{z}}^T\epsilon_{1i}^{**}}\nonumber\\
	\scaleobj{0.8}{\B{\mathrm{\widehat{\beta}_{iv}}}=\B{\mathrm{\beta}}+(\B{\mathrm{z}}^T\B{\mathrm{x}})^{-1}\B{\mathrm{z}}^T\mathcal{M}_1(\B{\mathrm{w}}^T\B{\mathrm{\gamma}})+(\B{\mathrm{z}}^T\B{\mathrm{x}})^{-1}\B{\mathrm{z}}^T\epsilon_{1i}^{**}}\nonumber\\
	\scaleobj{0.8}{\underset{N\to \infty}{\mathrm{plim}}\braket{\B{\mathrm{\widehat{\beta}_{iv}}}}=
		\B{\mathrm{\beta}}+\underbrace{\underset{N\to \infty}{\mathrm{plim}}\braket{(N^{-1}\B{\mathrm{z}}^T\B{\mathrm{x}})^{-1}}\underset{N\to \infty}{\mathrm{plim}}\braket{N^{-1}\B{\mathrm{z}}^T\mathcal{M}_1(\B{\mathrm{w}}^T\B{\mathrm{\gamma}})}}_\text{Asymptotic bias}}\nonumber
\end{IEEEeqnarray}

Given any correlation between $\B{\mathrm{z}}$ and $\mathcal{M}_1(\B{\mathrm{w}}^T\B{\mathrm{\gamma}})$,  $\underset{N\to \infty}{\mathrm{plim}}\braket{\B{\mathrm{z}}^T\mathcal{M}_1(\B{\mathrm{w}}^T\B{\mathrm{\gamma}})}\not\to 0$. Thus, the $\B{\mathrm{\widehat{\beta}_{iv}}}$ estimator is an inconsistent estimator for $\B{\mathrm{\beta}}$. 
%\B{\mathrm{\beta}}+
\newcommand\EE[1]{\mathbb{E}\braket{{#1}}}

%\begin{IEEEeqnarray}{lCr}

%\EE{\mathrm{z_i\disturb_{1i}}|\mathrm{\disturb_{2i}}\ge-\B{\mathrm{w_i'\gamma}}}
%\end{IEEEeqnarray}

%Denote a random variables vector $\B{\mathrm{w}}$ of size $p\times 1$ and a convolution of its elements by $\B{\mathrm{w}'\gamma}$, where $\B{\mathrm{\gamma}}$ is a coefficients vector of size $p\times 1$. Denote an instrument variable $\mathrm{z_i}$ satisfying $\mathrm{z_i}\perp \mathrm{\disturb_{1i}}$.
%Assuming $\mathrm{z_i} \nBigCI  \B{\mathrm{w}}^T\B{\mathrm{\gamma}}$ such that $\mathrm{z_i} \nBigCI {\lambda(\B{\mathrm{-w'\gamma}})}$, implying that $\mathrm{z_i} \nBigCI  \EE{\mathrm{\disturb_{1i}}|\mathrm{\disturb_{2i}}\ge-\B{\mathrm{w}}^T\B{\mathrm{\gamma}}}$. Thus, $\mathrm{z_i}$ and  $\mathrm{\disturb_{1i}}|\mathrm{\disturb_{2i}}\ge-\B{\mathrm{w}}^T\B{\mathrm{\gamma}}$ are two dependent random variables.
%One obtains a conditional independence between the instrument and the disturbance $\mathrm{\disturb_1}$ in the truncated sample given $\B{\mathrm{w}=w_i}$ if the following condition is satisfied:
%\begin{IEEEeqnarray}{lCr}
\label{Coditional:Independence}
%\close{\mathrm{z_i}|\B{\mathrm{w}=w_i}}  \BigCI   %\close{\braket{\mathrm{\mathrm{\disturb_{1i}}|\mathrm{\disturb_{2i}}\ge-\B{\mathrm{w}}^T\B{\mathrm{\gamma}}}}|\B{\mathrm{w}=w_i}}
%\end{IEEEeqnarray}

%where the key assumption is that the source of dependence between $\mathrm{z_i}$ and the conditional disturbance $\mathrm{\disturb_{1i}}|\mathrm{\disturb_{2i}}\ge-\B{\mathrm{w}}^T\B{\mathrm{\gamma}}$ is the random variable $\B{\mathrm{w}}$, because they are both stochastic functions of $\B{\mathrm{w}}$. Once, conditioning both sides of \eqref{Coditional:Independence} on a specific value of $\B{\mathrm{w}}$, there is no dependence between these two variables.

%\color{blue}
%Explain what you are going to show next
%\color{black}
Next we discuss the two types of joint dependence which are present in our model. This is done is order to facilitate the understanding of our proposed procedure, which is intended to correct for the bias propagated by each type of joint dependence.% The first one is related to the population (non-truncated) distribution function, while the second one is related to the truncated sample distribution function. %two types of endogeneity which are present in the model and explain how they are related to the joint dependence of the covariates and the random disturbances in the population and in the truncated sample. 
%This understanding is essential in order to the procedure correcting for both of sources endogeneity% distribution and the sample (truncated) distribution functions. The second one is related to the truncated distribution function.%: (i)  unconditional joint dependence between the covariates and the substantive equation's random disturbance (given the selection variable) and (ii) conditional joint dependence and truncation are related to unconditional and conditional endogeneity (given the selection variable) in the model.% which can be alleviated by our procedure.% the joint dependence of the covariates and the substantive equations' random disturbances in the truncated data.% our proposed instrumental variable estimator to correct for the selectivity bias propagated by endogenously truncated sample. %Their procedure relies on the assumption of exogeneity of the regressors in the substantive equation with respect to the disturbance. We utilize a similar approach by estimating a partial linear regression. The only difference is due to the necessity to correct for two sources of bias: endogeneity of $x_i$ and truncation bias. Our suggested correction procedure requires the exogeneity of the instrumental variables to correct for these two types of bias sources.

\section{Preliminaries}\label{Section:Preliminatries}

The objective is to eliminate the selection bias term captured by $\mathcal{M}_1(\cdot)$ in \eqref{Truncated:Substatnive:System}. As we don't want to impose a specific distribution function on the random disturbances, the aforementioned elimination should be performed in a nonparametric manner. This can be achieved using a semiparametric estimation method, which is distribution-free. However, the bias term might be a discontinuous function with different levels of smoothness that must be considered. These issues can be alleviated using multi-resolution analysis by employing the wavelet estimator \cite{haar1910theorie}. Wavelet is a bandwidth-free estimator, that is based on the idea of multi-scale representation of the data \cite{delouille2006second}\footnote{Due to its multi-scale property, we can distinguish between the important information, the function's average behavior, from the noise. The coarse scales (lower resolution-levels) usually convey important information, while at fine scales there is usually more noise.} and is used as a denoising technique by
simple thresholding, which is based
on the concept of sparsity.\footnote{Sparsity implies that the majority of wavelet
	coefficients are small, and can be replaced by zero \cite{vanraes2002stabilised}.}

The applicability of the classical wavelet estimator is problematic in several important aspects. First, it limits the sample size to be represented as $2^J$, with $J$ a non-negative integer, and the observations to be equispaced, which challenges the estimation in case of irregular-spaced data.\footnote{The observations location in space or time must be of equal distance.} Second, the classical wavelet estimator imposes the parametric assumption that the disturbances are independent identically distributed normal variables \cite{silverman1999wavelets}. 
Lastly, there are the problems of coefficient expansion and border discontinuities.\footnote{The standard orthogonal wavelet transform has the 
	shortcoming in that it requires a large number of coefficients (coefficient expansion) to represent the original data \cite{usevitch2001tutorial}.} 

In order to overcome these limitations, second generation wavelets have been introduced \cite{daubechies1998factoring} which define wavelets in terms of lifting-steps instead of matrices to reduce computational complexity.\footnote{The lifting-steps are consecutive operations of prediction (scaled-moving average) and update (scaled-first difference) to obtain the wavelet coefficients.} An earlier attempt to deal with irregular-spaced data using second generation wavelets is presented in 
\cite{delouille2006second} by postulating a prior distribution function for the wavelet coefficients.\footnote{\cite{delouille2006second} adopt the parametric Bayesian denoising approach introduced by \cite{johnstone2005empirical, johnstone2004needles} to obtain the wavelet coefficients assuming the coefficients are distributed according to a continuous mixture of a normal by a Beta density.} Alternative approaches extend Haar wavelet transform to accommodate for irregular data \cite{delouille2004smooth}. 

%The semiparametric estimation method we employ is the biorthogonal wavelet estimator due to its many advantageous.

%Our main contribution to wavelet CDF 9/7 is by formulating analytically instead of approximating the inverse of the transpose of wavelet
Both first as well as second generation wavelet estimation methods involve three steps: coefficient estimation (forward transform); (ii) denoising by using element-wise thresholding (coefficients selection) and (iii) reconstruction of the data without the noise (inverse transform). It is important to notice that the sequential nature of the estimation that relies on element-wise thresholding is applicable for limited types of wavelets, referred to as orthogonal wavelets which consist of the above described limitations. The main shortcoming of orthogonal wavelets is that the compact support and the symmetry properties which are useful in denoising are conflicting.\footnote{Unlike biorthogonality, orthogonality and symmetry are conflicting properties for design
	of compactly supported nontrivial wavelets (see Theorem 8.1.4 in \cite{daubechies1992ten}).} % \cite{wei1998new} %One of the most useful non-orthogonal wavelets is introduced to overcome these limitations and is referred to as Cohen JPEG CDF 9/7, %the wavelets regressors to be orthonogal 
To preserve both these properties,  the biorthogonal wavelet-based JPEG is used %which posses both symmetry and compact support is JPEG 
\cite{chern1999interpolating, wei1998new}.\footnote{The JPEG algorithm used here is termed `wavelet CDF 9/7'.} %The primary use of which is compressing and denoising pictures, and as such is suitable for our purposes.
% transform The main drawback of the lifting step, especially in irregular-spaced data is the absence of an analytic expression of the 
%{_{j,k}}

%Next we define biorthogonality. 
%\subsubsection{Biorthogonality definition}
In what follows we briefly explain the concept of biorthogonality. Denote a set of functions $\curly{\varphi_{_{k}}(t)}$ which spans a vector space $\mathcal{F}$, referred to as the expansion set. % and similarly the set of functions $\curly{\tilde{\varphi}_{_{k}}(t)}$, which is referred to as the dual basis of $\curly{\varphi_{_{k}}(t)}$. 
By construction, any function $g(t)\in \mathcal{F}$ can be expressed by using a series expansion, such that $g(t)=\sum_{k}\eta_{_{k}}\varphi_{_{k}}(t)$, %follows:
%\begin{IEEEeqnarray}{lCr}
\label{Coeff:Expansion}
%g(t)=\sum_{j,k}\delta_{_{j,k}}\psi_{_{j,k}}
%\end{IEEEeqnarray}

where $\eta_{_{k}}$ and $\varphi_{_{k}}$ are the expansion coefficients and expansion functions, respectively. The set $\curly{\varphi_{_{k}}(t)}$ is biorthogonal to the set $\curly{\tilde{\varphi}_{_{k}}(t)}$ if $\langle \varphi_{_{k}},\tilde{\varphi}_{_{k'}}\rangle = \mathfrak{d}(k-k')$ $\forall k$ and $k'$, with $\langle \cdot \rangle$ being the $L_2$ inner product and the function $\mathfrak{d}(\cdot)$ is the Kronecker delta.\footnote{$\mathfrak{d}(k)=\begin{cases}
	1 & \text{if } \abs{k}=0\\
	0 & \text{if } \abs{k}>0
	\end{cases}$. The set $\varphi_{_{k}}(t)$ is orthogonal if $\langle \varphi_{_{k}},\varphi_{_{k'}}\rangle = 0$ $\forall k\ne k'$.} These two sets form a biorthogonal system, in which $\curly{\tilde{\varphi}_{_{k}}(t)}$ is referred to as the dual basis of $\curly{\varphi_{_{k}}(t)}$. %The function $\tilde{\psi}_{_{j,k}}$ is referred to as the dual of $\psi_{_{j,k}}$.	
%\begin{definition}[Biorthogonality]
%The sets $\psi_{_{j,k}}(t)$ and $\tilde{\psi}_{_{j,k}}(t)$ are biorthogonal if $\langle \psi_{_{j,k}},\tilde{\psi}_{_{j',k'}}\rangle = 0$ $\forall j\ne j\text{ or } k\ne k'$. The function $\tilde{\psi}_{_{j,k}}$ is referred to as the dual of $\psi_{_{j,k}}$.	
%\end{definition}
%Given that set $\curly{\varphi_{_{k}}(t)}$ is biorthogonal to the set $\curly{\tilde{\varphi}_{_{k}}(t)}$
Thus, we get the following unique representation:% by utilizing an inner product on $g(t)$:% an analytic representation of the coefficients:% and by 
\begin{IEEEeqnarray}{lCr}
	\label{coeff:rep} 
	\eta_{_{k}}=\langle g(t),\tilde{\varphi}_{_{k}}(t)\rangle
\end{IEEEeqnarray}

Substituting each $\eta_k$ coefficient with its analytic expression in \eqref{coeff:rep}, to obtain:
\begin{IEEEeqnarray}{lCr}
	\label{func:Expansion}
	g(t)=\sum_{k}\langle g(t),\tilde{\varphi}_{_{k}}(t)\rangle\varphi_{_{k}}(t)
\end{IEEEeqnarray}

%In the biorthogonal design there are two different stages. 
% as reflected by the inner product in \eqref{func:Expansion}
Obviously, in the present case of biorthogonality, the coefficients in \eqref{coeff:rep} are obtained by using the dual basis and the function is reconstructed in \eqref{func:Expansion} by using another basis which is the expansion set. In cases where $\curly{\tilde{\varphi}_{_{k}}(t)}=\curly{\varphi_{_{k}}(t)}$ we have an orthogonal basis $\curly{\varphi_{_{k}}(t)}$, which is referred to as self-dual.  Therefore, biorthogonality is a generalization of orthogonality that allows for a larger class of expansions.
\newcommand{\Penalty}[1]{P_{\lambda_j,\gamma_j}({#1})}
\newcommand{\PenatlyItem}[1]{\rho_{\lambda_j,\gamma_j}(\abs{{#1}})}

Recall that our objective is to estimate the bias term for an unknown functional form, captured by $\mathcal{M}_1(\cdot)$  in \eqref{Truncated:Substatnive:System},  the conditional expectation of $\varepsilon_{i}$, given participation defined as $\mathbb{E}[\varepsilon_{i}|y_{2i}=1]=\mathcal{M}_1(\B{w}_i^T\B{\gamma})$.\footnote{For brevity, we present $\mathcal{M}_1(\cdot)$ only. Identical treatment is applied to $\mathcal{M}_2(\cdot)$. } In what follows, we attend to the estimation of $\mathcal{M}_1(\cdot)$ using the wavelet estimator. % denoising technique.
%For brevity, \eqref{Truncated:Substatnive:System} is simplified using the scalars $u_i=y_{1i}-\B{x}_i^T\B{\beta}$ and $t_i= \B{w}_i^T\B{\gamma}$ to get:
%Although the function of interest is $\mathcal{M}_1(\cdot)$, we suppose that only noisy data about $\mathlcal{A}\mathcal{M}_1(\cdot)$ are observed, where $\mathlcal{A}$ is some linear operator. 
%In order to obtain the estimated We address to the main problem that might arises in the estimation procedure due to ill-posed instability  
%Let $L_2(\mathbb{R})$ be the space of  square integrable and real-valued functions on $\mathbb{R}$. We define a one-to-one linear operator $\mathlcal{A}$ from $L_2(\mathbb{R})$ to $L_2(\mathbb{R})$ by. . For this purpose, we denote

%The relationship between the observed data and the function to be estimated can be formulated by utilizing $\mathlcal{A}$ a one-to-one linear operator from $L_2(\mathbb{R})$ to $L_2(\mathbb{R})$, 
We use the concept of \textit{a frame} in \eqref{Definition:Frame} to define \textit{Riezs basis} in \eqref{Definition:Riesz:basis}. \textit{Riezs basis} is a building block in the definition of biorthogonal wavelets in \eqref{Definition:BioWavelet} to follow.  %this basis is a special case of a frame.

Let $\mathbb{H}$ be a separable Hilbert space with inner product $\langle\cdot,\cdot\rangle$ and a norm $\normsq{\cdot}{2}$. We denote a sequence  $\mathcal{F}=\curly{f_k, k\in \Lambda}\subset\mathbb{H}$, in which $\Lambda \subset \mathbb{Z}$.

We use the following \textit{frame} and \textit{Riesz basis} definitions \cite{zalik1999riesz}:
\begin{definition}[Frame]\label{Definition:Frame}
	$\mathcal{F}$ is called a frame if there are constants $0<A\le B$ such that $\forall f\in\mathbb{H}$ $A\normsq{f}{2}\le\sum_{k\in\Lambda} \abs{\langle f, f_k\rangle}^2\le B\normsq{f}{2}$.%\footnote{  is called a \textit{Riesz basis} if its span is dense in $\mathbb{H}$. }
\end{definition}

\begin{definition}[Riesz basis]\label{Definition:Riesz:basis}
	A sequence $\mathcal{F}$ is a \textit{Riesz basis} if and only if it is a frame having the additional property that upon the removal of any element from the sequence, it ceases to be a frame. %\footnote{  is called a \textit{Riesz basis} if its span is dense in $\mathbb{H}$. }
\end{definition}

Let $L_2(\mathbb{R})$ be the space of square integrable and real-valued functions on $\mathbb{R}$.  We use the following biorthogonal wavelet definition  \cite{dicken1996wavelet}:
\begin{definition}[Biorthogonal wavelet]\label{Definition:BioWavelet}
	A pair of functions $\varphi_{_{j,k}},\tilde{\varphi}_{_{j,k}}\in L_2(\mathbb{R})$ is a pair of biorthogonal wavelets if the sets $\curly{\varphi_{_{j,k}}|j,k\in \mathbb{Z}}$ and $\curly{\tilde{\varphi}_{_{j,k}}|j,k\in \mathbb{Z}}$ form the Riesz basis for $L_2(\mathbb{R})$ and if any function $g\in L_2(\mathbb{R})$ has the representation:
	%\vspace{-1em} 
	%\begin{IEEEeqnarray}{lCr}
	\label{BioWavelet}
	$g=\sum_{j\in \mathbb{Z}}\sum_{k\in \mathbb{Z}}\langle g,\tilde{\varphi}_{_{j,k}} \rangle \varphi_{_{j,k}}$.% \nonumber
	%\end{IEEEeqnarray}
	
\end{definition}

%The challenge in nonparametric estimation is solving the  %ill-posed problem, address.
It is worth noticing that both existence as well as uniqueness of the series representation are satisfied in definition \ref{Definition:BioWavelet}. However, our proposed nonparametric estimator might be unstable, rendering the estimation problem ill-posed, which is one of the challenges in nonparametric estimation of unknown functions.\footnote{An estimator violating at least one of the requirements: existence, uniqueness and stability is referred to as ill-posed.} This ill-posed problem can be alleviated by employing regularization on the wavelet series expansion coefficients \cite{abramovich1998wavelet,horowitz2014adaptive}.

%The main problem 
%For generality, we allow for a linear transform of the function to be observed in the data \cite{abramovich1998wavelet}.
%To address the issue of ill-posed We desire to explain the main difficulty in the estimation, and how it can be resolved and in order to justify the proposed algorithm. %We address to the ill-posed problem in the estimation of $\mathcal{M}_1$.
%\cite{abramovich1998wavelet} argue that the ill-posed problem might arise in a more general case in which a linear transform of the unknown function is observed in the data. 

%For this purpose, we define a one-to-one linear operator $\mathlcal{A}$ from $L_2(\mathbb{R})$ to $L_2(\mathbb{R})$, where $L_2(\mathbb{R})$ is the space of square integrable and real-valued functions on $\mathbb{R}$. 
Let $u_i=y_{1i}-\B{x}_i^T\B{\beta}$ be the $i$'th element in vector $\B{u}$ of size $n\times 1$, which satisfies: 
\begin{IEEEeqnarray}{lCr}
	\label{Vector:U}
	u_i=\mathcal{M}_1(t_i)+\epsilon_{_{1i}}
\end{IEEEeqnarray}

where $\curly{t_i}_{i=1}^{n}$ is a sequence in which the $i$'th element satisfies  $t_i=\B{w}_i^T\B{\gamma}$ and $\epsilon_{1i}$ is the white noise described in \eqref{Truncated:Substatnive:System}.\footnote{A more general formulation solves the ill-posed problem by employing regularization in cases where a linear transform of the unknown function replaces the original function \cite{abramovich1998wavelet}.}

We use $\B{\Phi}_{_I}$ and $\B{\Phi}_{_F}\equiv \B{\Phi}_{_I}^{-1}$ to denote the inverse and forward transformation matrices, respectively, each of size $n\times n$ \cite{he2011computer} in an orthogonal wavelet.\footnote{The forward transform is referred to as the Discrete Wavelet Transform (DWT).} We note that using orthogonal wavelets one obtains the close-form solution to the wavelet coefficients as follows: 
\begin{IEEEeqnarray}{lCr}
	\label{Denoising}
	\widehat{\mathcal{M}_1} = \B{\Phi}_{_I}\rho_{_{\lambda}}(\B{\Phi}_{_F}\B{u})
\end{IEEEeqnarray}

where $\widehat{\mathcal{M}_1}(\cdot)$ is the estimate of the unknown function $\mathcal{M}_1(\cdot)$, and $\rho_{_{\lambda}}(\cdot)$  represents the element-wise thresholding generated by some penalty function, in which the tuning parameter is represented by $\lambda$. The procedure in \eqref{Denoising} to obtain $\widehat{\mathcal{M}_1}(\cdot)$ by employing a thresholding operator is referred to as denoising.   
%where $\B{\Phi}^T=\B{\Phi}^{-1}$.

We depart from the denoising procedure in \eqref{Denoising} by employing biorthogonal wavelets, as we are interested in the applicability of the general case where the penelization is not an element-wise due to correlations among wavelet regressors.\footnote{It has been shown that the performance of wavelet estimator can be improved
	when the dependencies among coefficients were taken
	into account \cite{tomassi2015wavelet}.} In such cases, there is no such a close-form solution, which necessitates the regularized least squares optimization method to follow.%, the penalization is not element-wise, as suggested in our procedure to follow. 

Let $\B{\Psi}_{_I}$ and $\B{\Psi}_{_F}\equiv(\B{\Psi}_{_I}^T\B{\Psi}_{_I})^{-1}\B{\Psi}_{_I}^T$ denote the inverse and forward transformation matrices, respectively, each of size $n\times n$ of the wavelet-based JPEG, which is a biorthogonal wavelet.\footnote{For a definition of biorthogonal wavelets see \cite{daubechies1998factoring}.}$^{,}$\footnote{%Let denote the wavelet inverse transform matrix by $\B{X}$ and a signal $\B{Y}$. The first generation wavelets are orthogonal, implying that the wavelet coefficients are obtained as $\B{\delta}=\B{X}^T\B{Y}$ and allow for perfect reconstruction  of the signal ($\B{Y}=\B{X}\B{\delta}$).	
	Unlike biorthogonality, orthogonality implies that the wavelet regressors are mutually uncorrelated and that the inverse transform is the transpose of the forward transform. This simplifies the computation as the wavelet coefficients are obtained analytically (closed-form) using element-wise thresholding operators (e.g., hard and soft thresholding operators). However, we opted for the biorthogonality wavelet to exploit the correlation structure of the regressors. Biorthogonal wavelets preserve the perfect reconstruction property (by employing dual-filters) as well, but is more flexible in that the inverse of $\B{X}$ is not required to be its transpose. Consequently, the thresholding is applied to the entire coefficient vector.  }  
%the estimate  is the estimates of $\mathcal{M}_1(\cdot)$

Let $\rho_{\lambda,\gamma} (\cdot)$ be the minimax concave penalty (MCP) function \cite{zhang2010nearly}, defined as \cite{breheny2011coordinate}:\footnote{The penalty function in \eqref{MCP} represents a family of penalty functions as a generalization of the soft thresholding (if $\gamma\to\infty$) and hard thresholding (if $\gamma\to 1^{+}$) \cite{breheny2011coordinate}.}
\begin{IEEEeqnarray}{lCr}
	\label{MCP}
	\rho_{\lambda,\gamma}(\theta) = \begin{cases}
		\lambda \theta - \frac{\theta^2}{2\gamma} & \text{if } \theta\le \gamma \lambda, \\
		\frac{1}{2}\lambda^2\gamma & \text{if } \theta> \gamma \lambda 
	\end{cases}
\end{IEEEeqnarray}

where $\theta\in(-\infty,\infty)$ is the parameter to be penalized, $\lambda>0$ and $\gamma\in (1,\infty)$.

We define resolution-dependent regularized least squares at resolution levels $1,...,J$:
\begin{IEEEeqnarray}{lCr}
	\label{Optimization}
	\widehat{\B{\delta}}=\underset{\B{\delta}}{\arg\min}\frac{1}{2n}\normsq{\B{u}-\B{\Psi}_{_I}\B{\delta}}{2}+\sum_{j=1}^{J}\Penalty{\B{\delta}_j}
\end{IEEEeqnarray}

%, \hspace{1.5em} \widehat{\mathcal{M}}_1(t_i)=(\B{\Psi}_{_I}\B{\widehat{\delta}}_j)_i 
%\mathcal{G}_{\lambda_j}^k(\B{\delta}_j)
where $\B{\delta}=[\B{\delta}_{1}^T,\B{\delta}_{2}^T,...,\B{\delta}_{J}^T]^T$ is the wavelet coefficient vector of size $n\times 1$ and $\B{\delta}_j$ is of size $n_j\times 1$. $\norm{\B{\cdotp}}{2}$ is the usual $\ell_2$ (Euclidean) norm, defined as $\norm{\B{b}}{2}=\close{\sum_{i=1}^{n}\abs{b_{i}}^2}^{1/2}$.%and $\widehat{\mathrm{c_i}}$ is the estimated class membership of the $i$'th observation.
The penalty function is $\Penalty{\B{\delta}_j}=\sum_{k=1}^{n}\PenatlyItem{\delta_{j,k}}$.

It is evident that when the $\B{\delta}_j\to 0$, the bias propagated by the endogenous truncation approaches zero and thus, our algorithm is reduced to the conventional IV estimator. 

The univariate solution of a regularized least squares problem using the penalty function in \eqref{MCP} is denoted by $S_{\alpha}(\cdot)$ and defined as:\footnote{In order to utilize the min-max concave (MCP) penalty function in \eqref{MCP}, we depart from the regularized least squares algorithm in \cite{yang2015sparse}, as it is limited to its special case of the LASSO penalty function. We introduce  $\alpha$ as an approximation to the Hessian of the least squares problem in order to obtain an element-wise thresholding. This amounts to a dimensional reduction technique for reducing computational complexity. For the special case of $\alpha=1$, see \cite{breheny2011coordinate}. }
\begin{IEEEeqnarray}{lCr}
	\label{S:Thresholding:Solution}
	\scaleobj{0.8}{S_{\alpha}(\tilde{\delta}; \lambda, \gamma) = \begin{cases}
			\frac{1}{1-1/(\alpha\gamma)}\text{sign}(\tilde{\delta})\text{max}( \abs{\tilde{\delta}}-\frac{\lambda}{\alpha})  & \text{if } \abs{\tilde{\delta}}\le \gamma \lambda, \\
			\tilde{\delta}  & \text{if } \abs{\tilde{\delta}} >\gamma \lambda
	\end{cases}}
\end{IEEEeqnarray}

where $\alpha\in(0,\infty)$. It is worth noting that if $\gamma\to \infty$ the solution is soft-thresholding introduced by \cite{donoho1994ideal}; in case that $\alpha\gamma\to 1^{+}$ the solution is hard-thresholding (see proof in the Appendix \ref{AppendixA}).

To reduce computational complexity, the optimization problem in \eqref{Optimization} is reformulated as: 
\begin{IEEEeqnarray}{lCr}
	\scaleobj{0.8}{\B{\delta}^{(iter+1)}=\underset{\B{\delta}}{\arg\min}\frac{1}{2n}((\B{\delta}-\B{\delta}^{(iter)})^T\B{\Psi}_{_I}^T(\B{u}-\B{\Psi}_{_I}\B{\delta}^{(iter)}))}\\ \scaleobj{0.8}{+\alpha/2\normsq{\B{\delta}-\B{\delta}^{(iter)}}{2}+\sum_{j=1}^J\Penalty{\B{\delta}_j}}\nonumber 
\end{IEEEeqnarray}

where $\Psi_{_I}^T$ is the transpose of matrix $\Psi_{_I}$, $\alpha \B{I}$ is an approximation of the Hessian and $\B{I}$ is the identity matrix of size $n\times n$. The number of iterations is denoted by the integer $iter$.

For brevity, we divide the argument to be minimized by $\alpha$ and complete the squares using the expressions in $\normsq{\cdot}{2}$ \cite{yang2015sparse} to get:
\begin{IEEEeqnarray}{lCr}
	\label{Objective:function}
	\scaleobj{0.75}{\B{\delta}^{(iter+1)}=\underset{\B{\delta}}{\arg\min}\frac{1}{2}\normsq{\B{\delta}-\B{\delta}^{(iter)}+1/(\alpha n)\B{\Psi}_{_I}^T(\B{u}-\B{\Psi}_{_I}\B{\delta}^{(iter)})}{2}}\\ \scaleobj{0.75}{+\frac{1}{\alpha}\sum_{j=1}^J\Penalty{\B{\delta}_j} }\nonumber
\end{IEEEeqnarray}

The iterative procedure performs MCP-thresholding on a proximal gradient-descent update for $k=1,...,n$ (see Algorithm  \ref{PenalizedGradientDecent} in the Appendix):
\begin{IEEEeqnarray}{lCr}
	\label{Optimization:Gradient:decent}
	\scaleobj{0.8}{\delta_{j,k}^{(iter+1)}=S_{\alpha}\close{\delta_{j,k}^{(iter)}+1/(\alpha n)\B{\psi}_{_{I_k}}^T(\B{u}-\B{\Psi}_{_I}\B{\delta}^{(iter)}); \lambda_j, \gamma_j}}
\end{IEEEeqnarray}

where $\delta_{j,k}^{(iter)}$ is the $k$'th coefficient in vector $\B{\delta}_{j}^{(iter)}$ and $\B{\psi}_{_{I_k}}$ is $k$'th column in $\B{\Psi}_{_I}$ (the inverse wavelet transform). The notation $\B{\psi}_{_{I_k}}^T$ implies the transpose of  $\B{\psi}_{_{I_k}}$.
We use \eqref{Optimization:Gradient:decent} to update the wavelet coefficients iteratively until the update is negligible, such that the following convergence criterion is satisfied:
\begin{IEEEeqnarray}{lCr}
	\scaleobj{0.8}{\normsq{\B{\delta}^{(iter+1)}-\B{\delta}^{(iter)}}{2}/\normsq{\B{\delta}^{(iter)}}{2} < \tau}
\end{IEEEeqnarray}

where $\tau$ is the tolerance which is a positive real number that we arbitrarily set to $10^{-16}$.

%Its product with $T(\B{u}_j-\B{\Psi}_{_I}\B{\delta}_{j,k}^{(iter)})$ is solved by employing a lifting scheme instead of a cumbersome matrices multiplication.

The optimization method in \eqref{Optimization:Gradient:decent} involves matrices multiplication which is computationally infeasible for large data sets. To alleviate this computational complexity we develop a lifting scheme to be employed in order to perform simultaneously the transposed-inverse of the wavelet transform, consisting of lifting steps (see Algorithms \ref{Algrorithm:ForwardTransform}-\ref{Algrorithm:TransInvTransform} to follow). Conventionally, a lifting step can be either a prediction, that is a procedure generating a smoothed version of the data (the scaled coefficients), or an update that is the procedure to generate the remainder (the detail coefficients) between the data and its smoothed version. For the present case we define a new operator because the existing lifting steps do not provide an analytic representation of the transposed-inverse, as discussed in \cite{voronin2015compression}.  %These reminders are referred to as the wavelet `detail coefficients'. % which perform a scaled-moving average and scaled first-difference on the data, respectively. % prediction (a scaled moving-average) and update (a scaled first-difference) coefficients as follows: employed.   

In the next section we discuss the main idea behind lifting steps, in order to obtain analytically the transposed-inverse transform. First we describe the lifting steps in a regular-spaced data given a sample size of $2^{J}$ for a non-negative integer $J$. Then in equations \eqref{Irregular:Lifting}-\eqref{Optimal:Weights:Sample:Size} to follow, we alleviate these two restrictions by formulating our proposed algorithm. 
\subsubsection{Lifting steps to obtain the wavelet coefficients}
Let $w=\close{w_1,...w_n}$ be a discrete sequence of data consisting of $n$ real numbers, such that the sequence is referred to as \textit{dyatic} iff $n=2^J$ for some integer $J\ge 0$. The sequence can be expressed uniquely in terms of detail (difference) and summation coefficients denoted by $\curly{d_{_{J-1},k}}_{k=1}^{n/2}$ and $\curly{c_{_{J-1},k}}_{k=1}^{n/2}$, respectively. The former capture the variation in the sequence at different scales and locations and the latter are a smooth representation of the original sequence.

The multi-scale representation of a function $g\in L_2(\mathbb{R})$ is  obtained as follows:
\begin{IEEEeqnarray}{lCr}
	\label{Wavelet:regression}
	g(t)=\sum_{k\in \mathbb{Z}} c_{_{0,k}}\phi_{_{0,k}}(t) + \sum_{j\in \mathbb{Z}}\sum_{k\in \mathbb{Z}} d_{_{j,k}}\varphi_{_{j,k}}(t)
\end{IEEEeqnarray}

The first set of terms, $\phi_{_{0,k}}$, represents the average level of function $g$ and the second set of terms $\varphi_{_{j,k}}$ represents its details by accumulating information at a set of scales $j\in\mathbb{Z}$.

Let $\curly{0,...,J-1}$ denotes a set of scales (resolution levels).
We define $d_{_{J-1},k}$ and $c_{_{J-1},k}$ as follows \cite{nason2010wavelet}:
\begin{IEEEeqnarray}{lCr}
	\label{Sequence:finest}
	\begin{matrix}
		d_{_{J-1},k} = w_{2k}-w_{2k-1}, \hspace{2em} k=1,...,2^{J-1}. \\
		c_{_{J-1},k} = w_{2k}+w_{2k-1}, \hspace{2em} k=1,...,2^{J-1}.
	\end{matrix}
\end{IEEEeqnarray}

The key idea is that a lower detail coefficient $d_{_{J-1},k}$ implies that $w_{2k}$ is very close to $w_{2k-1}$ and visa versa, as such a smoother function is represented by a small sequence of detail coefficients. 

In order to represent the sequence in a coarser-scale (using a lower resolution), we define the coefficients:
\begin{IEEEeqnarray}{lCr}
	\label{Sequence:coarser}
	\begin{matrix}
		\scaleobj{0.8}{d_{_{J-2},k} = c_{_{J-1},2k}-c_{_{J-1}, 2k-1}, \hspace{2em} k=1,...,2^{J-2}.} \\
		\scaleobj{0.8}{c_{_{J-2},k} = c_{_{J-1},2k}+c_{_{J-1}, 2k-1}, \hspace{2em} k=1,...,2^{J-2}.}
	\end{matrix}
\end{IEEEeqnarray}

By repeating the procedure in \eqref{Sequence:coarser} we obtain detailed and smoothed coefficients for lower resolutions.
The multiscale algorithm stops when the $c_{_{0},1}$ coefficient is produced. %defined by the index $j$ ranging from $0$ to infinity. 
%The two sequences in \eqref{Sequence:finest} are the To obtain the next coarsest detail  %To obtain the next detail coefficients:
%Multi-resolution analysis is a device to extract information (captured by details) from $w$ at different scales and locations. For doing so,

Next we discuss how to select optimally the thresholding (tuning) parameter in \eqref{Optimization} for each resolution-level. 
\subsubsection{Optimal thresholding by a reference-free criterion function}\label{Section:Optimal:Thresholding}

Since we deal with truncated distributions, the source (the complete non-truncated distribution) is intrinsically unobservable and thus, we cannot assess the success of denoising by comparing it to the original non-truncated distribution. Therefore, we utilize the ``two-fold cross-validation" in \cite{nason1996wavelet} which is a reference-free criterion function assesing the quality of the function estimated by denoising:\footnote{The methodology implemented in \cite{nason1996wavelet} chooses one threshold that is applicable to all resolution levels in the wavelet transform. In the present case, however, we select a threshold for each level in order to implement multi-resolution analysis increasing our proposed estimator's accuracy. }
\begin{IEEEeqnarray}{lCr}
	\label{Optimal:Thresholding}
	\scaleobj{0.8}{\lambda_{j,\gamma_{_j}}=\underset{\lambda_{j,\gamma_{_j}}}{\arg\min}\curly{\frac{1}{2}\normsq{\widehat{f}_{\lambda_{j,\gamma_{_j}}}^o-\B{u}_j^e}{2}+\frac{1}{2}\normsq{\widehat{f}_{\lambda_{j,\gamma_{_j}}}^e-\B{u}_j^o}{2}}}
\end{IEEEeqnarray}

where $\lambda_{j,\gamma_{_j}}$ is the tuning (thresholding) parameter being used in \eqref{Optimization:Gradient:decent}, in which $\gamma_{_j}$ is a specific penalty function. The odd sample  and even sample are denoted by $\B{u}_j^o$ and  $\B{u}_j^e$, respectively, and their corresponding estimates are $\widehat{f}_{\lambda_{j,m}}^o$ and $\widehat{f}_{\lambda_{j,m}}^e$. These estimates are obtained by employing the iterative procedure in \eqref{Optimization:Gradient:decent}. %utilizing either the odd observations or the even observations.

\color{black} As previously discussed, our proposed truncation-proof IV estimator requires controlling for the bias terms $\mathcal{M}_1(\cdot)$ and $\mathcal{M}_2(\cdot)$. For generality and applicability purposes of the proposed estimator, we adopt a semiparametric approach which is not subjected to distributional assumptions and consequently, does not require specifying the functional form of these unknown functions. %the joint distribution function of the random disturbances $\disturb_{1i}$ and $\disturb_{2i}$. Thus, our estimation strategy is semiparametric.

The wavelet-based JPEG semiparametric estimator is chosen for its many advantages. It enables a multi-resolution representation of the noisy data points, implying that the data points are characterized both globally as well as locally.\footnote{Global representation is a weighed average (smoothing) of the data, while local representation consists of more detailed information regarding first differences between neighboring data points. } Such a multi-resolution decomposition facilitates distinguishing between the noise and the systemic part. The systemic part is the functional relationship between the covariates and the dependent variables in the regression equations in \eqref{Truncated:Substatnive:System}. 

An additional advantage of incorporating the aforementioned newly introduced JPEG estimator is that it accommodates for various data set forms of different types of irregularities, such as non-equispaced design that will be described in section \ref{Section:Adusting:Location} to follow. These irregularities are alleviated by introducing the locations in space of the various data points as an additional covariate that is unique for each resolution-level. An additional merit of our approach is enabling a group-wise denoising rather than the traditional element-wise JPEG denoising in the cases of image processing. Group-wise denoising plays an important role in data denoising, as it takes into account potential dependencies among the various data points. Thus, our contribution to the JPEG algorithm are controlling for irregularities, group-wise thresholding on the entire data and utilizing a reference-free criterion function to choose the optimal thresholding. \color{black}

In next section we describe the JPEG algorithm which is introduced to estimate each of the bias terms $\mathcal{M}_1(\cdot)$ and $\mathcal{M}_2(\cdot)$. Although our proposed denoising procedure can be applicable to both even as well as odd sample sizes (as will be demonstrated in Algorithm \ref{Algrorithm:ForwardTransform} to follow), for ease of presentation and without loss of generality, the denoising procedure is formulated as a function of a data set consisting of $2n$ observations. %Our proposed algorithm is amenable for denoising any sample size observations (as will be discussed in \ref{section:Irregular:Forward:N_observations}).

\section{The wavelet-based JPEG denoising}\label{Section:JPEG:denoising}
Let $\curly{(u_i, t_i)}_{i=1}^{2n}$ be a pairwise sequence of $2n$ data points as described in \eqref{Vector:U}, such that $t_i<t_j$ $\forall i<j$. The sequence $\curly{u_i}_{i=1}^{2n}$ indicates the noisy data points (or colors of pixels in image processing) and their respective locations in space are represented by $\curly{t_i}_{i=1}^{2n}$.  The JPEG algorithm is a procedure generating a multi-resolution denoised representation of the sequence $\curly{u_i}_{i=1}^{2n}$, which is denoted by the sequence $\curly{\widehat{u}_i}_{i=1}^{2n}$.
The purpose of the present section is three-fold: first, to describe the JPEG algorithm to be employed in order to obtain a noise-free representation of the noisy data; second, to extend the JPEG algorithm to be compatible with irregularities in the data;\footnote{We define $\Delta_i\equiv t_{i}-t_{i-1}$ $\forall$ $2\le i\le n$, such that equispaced (regular) data is a sequence of data points satisfying $\Delta_i=\Delta_j$ $\forall i$ and $j$. Other cases are referred to as non-equispaced (irregular) spaced data.} and thirdly, to incorporate a reference-free criterion to evaluate the denoising procedure accuracy.

Applying the conventional JPEG algorithm on a vector of data points is equivalent to employing three different procedures on the noisy data: (i) the JPEG forward transform $\mathcal{T}_{_{F}}:\mathbb{R}^{2n\times 2}\to \mathbb{R}^{2n\times 1}$ to obtain the wavelet-based JPEG coefficients (as will be shown in \eqref{Forward:Operator} to follow); (ii) coefficients selection $\mathcal{T}_{_{S}}:\mathbb{R}^{2n\times 2}\to \mathbb{R}^{2n\times 1}$ by applying a thresholding procedure (as will be shown in \eqref{Transposed:Inverse:Operator:Matrix}) and (iii) the JPEG inverse transform $\mathcal{T}_{_{I}}:\mathbb{R}^{2n\times 2}\to \mathbb{R}^{2n\times 1}$, which recovers the noise-free data by utilizing the selected coefficients (as will be shown in \eqref{Inverse:Operator} to follow). 

\color{black} In the ensuing section \color {black} we introduce auxiliary matrices to be used in each of the JPEG transforms, which are essential to construct the covariate matrix in the wavelet-based JPEG regression (in \eqref{JPEG:Wavelet:regression} to follow).
%For brevity, we use vector notations $\B{u}=[u_1,...,u_n]^T$, $\B{t}=[t_1,...,t_n]^T$ and $\B{\delta}\in\mathbb{R}^n$.
%Differently from the wavelet nonparametric regression in \eqref{Wavelet:regression}, these three procedures are implemented simultaneously which allows for a non element-wise denoising. This can be achieved by introducing the transposed of the wavelet inverse transform. 
\subsection{Auxiliary matrices for the JPEG alogrithm}
The implementation of the JPEG algorithm necessitates the construction of $\mathcal{T}_{_{F}}$ and $\mathcal{T}_{_{I}}$ operators. For this purpose,  we construct auxiliary matrices $\mathcal{A}_{_{2n}}$, $\mathcal{S}_{_{2n}}$, $\curly{\mathcal{H}^{\B{(t)}}_{_{2n,\ell}}}_{\ell=1}^{4}$ which are the shifting, rescaling and smoothing operator matrices, respectively, each of size $2n\times 2n$. %These operators enable a closed-form representation of the wavelet forward transform: 

Let $\mathcal{S}_{_{2n}}$ and $\mathcal{S}_{_{2n}}^{-1}$ be the rescaling and inverse-rescaling matrices, respectively each of size $2n\times 2n$. Its elements are defined for $m=0,...,n$ as:
\begin{IEEEeqnarray}{lCr}
	\scaleobj{0.65}{(\mathcal{S}_{_{2n}})_{i,j}=\begin{cases}
			1/\varphi & \text{if } i = j = 2m\\
			\varphi & \text{if } i = j = 2m+1\\
			0 & \text{if } i \ne j
		\end{cases}, \hspace{0.25em}(\mathcal{S}_{_{2n}}^{-1})_{i,j}=\begin{cases}
			\varphi & \text{if } i = j = 2m\\
			1/\varphi & \text{if } i = j = 2m+1 \\
			0 & \text{if } i \ne j
	\end{cases}}
\end{IEEEeqnarray}

The rescaling operator $\B{\tilde{v}}=\mathcal{S}_{_{2n}}\B{v}$ takes a vector $\B{v}$ of size $2n\times 1$ and return a rescaled vector $\B{\tilde{v}}$ of the same size, such that even and odd elements of the original vector are multiplied by the scalars $1/\varphi$ and $\varphi$, respectively.% $[\B{v}_{\text{odd}}^T, \B{v}_{\text{even}}^T]^T$

Let $\mathcal{A}_{_{2n}}$ be a shifting operator matrix of size $2n\times 2n$, its elements are defined for $m=0,...,n$ as:
\begin{IEEEeqnarray}{lCr}
	\scaleobj{0.8}{(\mathcal{A}_{_{2n}})_{i, j} =\begin{cases}
			1 & \text{if } (i > n,\hspace{0.5em} j=2m) \text { or } (i \le n,\hspace{0.5em} j=2m+1)\\
			0 & \text{otherwise}. 
	\end{cases}  }
\end{IEEEeqnarray}

The $\B{\tilde{v}}=\mathcal{A}_{_{2n}}\B{v}$ operator takes a vector $\B{v}=[v_{_{1}},...,v_{_{2n}}]^T$ of size $2n\times 1$ and return the vector $\B{\tilde{v}}=[\B{v}_{\text{odd}}^T, \B{v}_{\text{even}}^T]^T$. The vectors $\B{v}_{\text{odd}}=[v_{_{1}},...,v_{_{2n-3}},v_{_{2n-1}}]^T$ and $\B{v}_{\text{even}}=[v_{_{2}},...,v_{_{2n-2}},v_{_{2n}}]^T$ consist of the odd and even elements of $\B{v}$, respectively. %[v_{1}, v_{3},...,v_{2n-1},v_{2}, v_{4},...,v_{2n}]^T$. split the series $\curly{u_i}_{i=1}^n$ into odd and even elements and stores the odd elements at the moves the odd elements of the matrix,

Unlike the conventional JPEG, we allow for data irregularities by controlling for the data set location in space. For doing so, we denote a sequence of matrices $\curly{\mathcal{H}^{\B{(t)}}_{_{2n,\ell}}}_{\ell=1}^4$, such that the elements of matrix $\mathcal{H}^{\B{(t)}}_{_{2n,\ell}}$ $\forall \ell\in\curly{1,2,3,4}$ of size $2n\times 2n$ are defined for $m = 1,...,n$ as:%,\hspace{0.5em} j=1,...,2n$ as:
\begin{IEEEeqnarray}{lCr}
	\scaleobj{0.8}{\underset{\ell\in\curly{1,3}}{(\mathcal{H}^{\B{(t)}}_{_{2n,\ell}})_{i,j}}=\begin{cases}
			\substack{2\pi_{\ell}\omega^{\B{(t)}}_{_{\ell,i-1}}} & \text{if } \substack{i=2m \text{ and } j=i-1} \\
			\substack{1} & \text{if } \substack{j=i} \\
			\substack{2\pi_{\ell}(1-\omega^{\B{(t)}}_{_{\ell,i-1}})} & \text{if } \substack{i=2m \text{ and } j=i+1} \\
			0 & \text{otherwise }  
	\end{cases}}\\ \scaleobj{0.8}{\underset{\ell\in\curly{2,4}}{(\mathcal{H}^{\B{(t)}}_{_{2n,\ell}})_{i,j}}=\begin{cases}
			\substack{2\pi_{\ell}\omega^{\B{(t)}}_{_{\ell,i-1}}} & \text{if } \substack{i=2m+1 \text{ and } j=i-1} \\
			\substack{1} & \text{if }  \substack{j=i}  \\
			\substack{2\pi_{\ell}(1-\omega^{\B{(t)}}_{_{\ell,i-1}})} & \text{if } \substack{i=2m+1 \text{ and } j=i+1} \\
			0 & \text{otherwise }.   
	\end{cases}}
\end{IEEEeqnarray}

%where $m = 1,...,n,\hspace{0.5em} j=1,...,2n$.
%For brevity, the JPEG forward and inverse tranforms are defined as the linear transformations $\mathcal{T}_{_{F}}(\B{u};\B{t})=\B{\Psi}_{_F}(\B{t})\B{u}$ and
%$\mathcal{T}_{_{I}}(\B{\delta};\B{t})=\B{\Psi}_{_I}(\B{t})\B{\delta}$, respectively. The matrices $\B{\Psi}_{_F}(\B{t})$ and $\B{\Psi}_{_I}(\B{t})$ are each of size $2n\times 2n$, and are 
\newcommand\mathf[1]{\B{\mathfrak{{#1}}}}
\newcommand\diagBlockMat[2]{\braket{\begin{matrix}{#1}^{(\B{t})}_{_{\mathf{m}(j)\times \mathf{m}(j)}} & \B{0}_{_{\mathf{m}(j)\times (\mathf{m}(J)-\mathf{m}(j))}} \\ \B{0}_{_{(\mathf{m}(J)-\mathf{m}(j))\times \mathf{m}(j)}} & {#2}_{_{(\mathf{m}(J)-\mathf{m}(j))\times (\mathf{m}(J)-\mathf{m}(j))}}\end{matrix}}}
%---
\newcommand\diagBlockMatA[2]{\braket{\begin{matrix}{#1}_{_{\mathf{m}(j)\times \mathf{m}(j)}} & \B{0}_{_{\mathf{m}(j)\times (\mathf{m}(J)-\mathf{m}(j))}} \\ \B{0}_{_{(\mathf{m}(J)-\mathf{m}(j))\times \mathf{m}(j)}} & {#2}_{_{(\mathf{m}(J)-\mathf{m}(j))\times (\mathf{m}(J)-\mathf{m}(j))}}\end{matrix}}}
where each of sequences $\curly{\omega^{\B{(t)}}_{_{\ell,l}}} \hspace{0.5em}\forall\ell \in\curly{1,2,3,4}$ are the interpolation weights (defined in \eqref{Interpolation:Weights:irregular} to follow) to control for the location in space of data points (enabling irregular non-equispaced data to be used) and $\pi_1,\pi_2,\pi_3,\pi_4$ are scalar constants described in \cite{schelkens2009jpeg}, which are referred to as the filter coefficients of the wavelet-based JPEG. In the special case in which $\omega^{\B{(t)}}_{_{\ell,l}}=0.5$ $\forall l$ and $\ell \in\curly{1,2,3,4}$ the algorithm is reduced to the regular-spaced wavelet-based JPEG. 

We define the linear interpolation weights:
\begin{IEEEeqnarray}{lCr}
	\label{Interpolation:Weights:irregular}
	\scaleobj{0.8}{\omega^{\B{(t)}}_{_{\ell,2i}} = \begin{cases}
			{\fracBig{{{{\tilde t}_{_{2i + 1}}} - {{\tilde t}_{_{2i}}}}}{{{{\tilde t}_{_{2i + 1}}} - {{\tilde t}_{_{2i - 1}}}}}} & \text{if } \ell \in\curly{1,3}\\
			{\fracBig{{{{\tilde t}_{_{2i + 2}}} - {{\tilde t}_{_{2i + 1}}}}}{{{{\tilde t}_{_{2i + 2}}} - {{\tilde t}_{_{2i}}}}}} & \text{if } \ell \in\curly{2,4}
	\end{cases}}\\ \scaleobj{0.8}{ \tilde {t}_{_l} = \begin{cases}
			t_{_2} & \text{if } l=2m, l < 2\\
			t_{_l} & \text{if } 1\le l\le 2n\\
			t_{_{2n-1}} & \text{if } l=2m+1, l > 2n-1
	\end{cases}}
\end{IEEEeqnarray}

For tractability, we formulate the JPEG coefficients estimation problem as a linear regression estimation, which necessitates obtaining a closed-form expressions of the JPEG forward and inverse transforms. These closed-form expression are required to characterize the JPEG covariate matrix to be used in the wavelet-based JPEG regression. \color{black} In the following section \color {black} we express analytically each of the forward and inverse transforms using matrix notation as a function of the auxiliary matrices presented above.
\subsection{The various JPEG transforms in matrix notation}\label{Section:Adusting:Location}
\color{black}Employing our proposed JPEG algorithm on a data set involves representation of data set in multiple resolution levels, a property which referred to as a multi-resolution analysis. Let $J$ be the highest resolution level, which requires the same number of data points as in the noisy data set. The data set representation in $j$'th resolution-level $\forall j<J$ is a transformation of the data set representation in the finer (higher) resolution-level $j+1$. Consequently, the JPEG noise-free representation $\forall j<J$ can be formulated recursively. However, the implication of this formulation is that the location in space of the data points in any given resolution-level is also determined recursively. This fact stems from depicting the noisy data set $\B{u}=[u_{_{1}},...,u_{_{2n}}]^T$ and its location in space $\B{t}=[t_{_{1}},...,t_{_{2n}}]^T$ as a pairwise sequence. For ease of notation we construct the adjusted space location operator $\forall j<J$: \color{black}% $\curly{\B{t}_{_j}}$:% shifting in space operator:
%Applying multiple transformations on the data sets implies that the location of the data set is 
%challenges in the presence of non-equispaced (irregular) spaced-data. 
%In order to allow for irregular data set in the presence of a multi-resolution analysis.
% the transformation that is applied to the data set must be applied also to the location in space of the data point must be taken into account. 
%We construct a space location operator, which is a linear transform of the vector $\B{t}$ of size $2n\times 1$, consisting of the noisy data set locations in space, defined for any resolution level $j<J$ as:
\begin{IEEEeqnarray}{lCr}
	\label{Adjusted:space:location}
	\scaleobj{0.7}{\B{t}_{_{j}}\equiv\close{\prod_{h=j+1}^{J} \tilde{\mathcal{A}}_{_{h}}}\B{t}, \hspace{0.25em} \tilde{\mathcal{A}}_{_{j}}=\diagBlockMatA{\mathcal{A}}{\B{I}}}
\end{IEEEeqnarray}

where $\mathf{m}(j)\equiv \lceil2n/2^{^{J-j}}\rceil$, $J=\lceil\log_2\close{2n}\rceil$ is the number of resolution-levels and $j$ is a specific resolution level.\footnote{The operator's notation $\lceil\cdot \rceil$ represents the ceiling of a real number.}
%that irregularities in the original data affect all the resolutions-level, which challenge the JPEG algorithm implementation in the presence of such irregularities.

This recursive formulation takes the the noisy data points locations in space as control variables, which are essential for alleviating irregularities in the noisy data. %, starting from the highest resolution level $J$.% data representation in the finer resolution level is a new data set.

Using the adjusted space location sequence $\curly{\B{t}_{_j}}$ in \eqref{Adjusted:space:location}, we define matrix $\B{\Psi_{_F}^{\B{(t)}}}$ (to be used in \eqref{Forward:Operator} to follow) for $J\in\curly{1,...,\lceil\log_2\close{2n}\rceil}$ resolution levels as:
\newcommand\JPEGT[1]{(\B{t}_{_{{#1}}})}
\begin{IEEEeqnarray}{lCr}
	\label{Forward:Operator:Matrix}
	\scaleobj{0.8}{\B{\Psi_{_F}^{\B{(t)}}}\equiv \close{\prod_{j=1}^{J-1} \tilde{\B{\Phi}}^{\JPEGT{j}}_{_{j}}} \B{\Phi}^{(\B{t})}_{\mathf{m}(J)\times \mathf{m}(J)}}\\ \scaleobj{0.8}{\tilde{\B{\Phi}}^{(\B{t})}_{_{j}}=\diagBlockMat{\B{\Phi}}{\B{I}}}
\end{IEEEeqnarray}

where $\B{\Phi}^{(\B{t})}_{m\times m}\equiv\mathcal{A}_{{_m}}\mathcal{S}_{{_m}} \mathcal{H}^{\B{(t)}}_{_{m,4}} \mathcal{H}^{\B{(t)}}_{_{m,3}} \mathcal{H}^{\B{(t)}}_{_{m,2}} \mathcal{H}^{\B{(t)}}_{_{m,1}}$ and $\B{I}_{m\times m}$ is the identity  matrix of size ${m\times m}$. It worth noticing that $\B{\Phi}^{\B{(t)}}_{m\times m}$ is a product of invertible matrices and consequently, its inverse is characterized as:
\begin{IEEEeqnarray}{lCr}
	\scaleobj{0.7}{\close{\B{\Phi}^{\B{(t)}}_{m\times m}}^{-1}=\close{\mathcal{H}^{\B{(t)}}_{_{m,1}}}^{-1} \close{\mathcal{H}^{\B{(t)}}_{_{m,2}}}^{-1} \close{\mathcal{H}^{\B{(t)}}_{_{m,3}}}^{-1} \close{\mathcal{H}^{\B{(t)}}_{_{m,4}}}^{-1}\mathcal{S}_{{_m}}^{-1}\mathcal{A}_{{_m}}^{-1}}
\end{IEEEeqnarray}

%However, the location in space of the data set depends on the resolution level, a fact that challenges the denoising algorithm. 

The multi-resolution JPEG forward transform operator $\mathcal{T}_{_{F}}(\B{u},\B{t})$ is the linear transform:
\begin{IEEEeqnarray}{lCr}
	\label{Forward:Operator}
	\scaleobj{0.8}{\B{\delta} =  \B{\Psi_{_F}^{(\B{t})}} \B{u}, \hspace{1em} \mathcal{T}_{_{F}}(\B{u},\B{t})=\B{\Psi_{_F}^{(\B{t})}} \B{u} }
\end{IEEEeqnarray}

where $\B{u}$, $\B{t}$ and $\B{\delta}$ are the noisy data, the location in space and the JPEG coefficient vectors, respectively, each of size $2n\times 1$.

Similarly, the multi-resolution JPEG inverse transform operator $\mathcal{T}_{_{I}}(\B{\delta},\B{t})$ is the linear transform:
\begin{IEEEeqnarray}{lCr}
	\label{Inverse:Operator}
	\scaleobj{0.8}{\B{u} =  \B{\Psi_{_I}^{\B{(t)}}} \B{\delta}, \hspace{1em} \mathcal{T}_{_{I}}(\B{\delta},\B{t}) = \B{\Psi_{_I}^{\B{(t)}}} \B{\delta}}
\end{IEEEeqnarray}

where $\B{u}$, $\B{t}$ and $\B{\delta}$ are the noisy data, the location in space and the JPEG coefficient vectors, respectively, each of size $2n\times 1$. 

Matrix $\B{\Psi_{_I}^{\B{(t)}}}$ in \eqref{Inverse:Operator} is defined for $J\in\curly{1,...,\lceil\log_2\close{2n}\rceil}$ resolution levels as:
\begin{IEEEeqnarray}{lCr}
	\label{Inverse:Operator:Matrix}
	\scaleobj{0.8}{\B{\Psi_{_I}^{\B{(t)}}}\equiv\close{\B{\Phi}^{(\B{t})}_{\mathf{m}(J)\times \mathf{m}(J)}}^{-1}\close{\prod_{j=J-1}^{1} \close{\tilde{\B{\Phi}}^{\JPEGT{j}}_{_{j}}}^{-1}}}
\end{IEEEeqnarray}
which $\B{\Psi_{_I}^{\B{(t)}}}$ is the analytic inverse transform operator.
%where $\B{\Phi}_{m\times m}^{-1}=\mathcal{H}_{_{m,1}}^{-1} \mathcal{H}_{_{m,2}}^{-1} \mathcal{H}_{_{m,3}}^{-1} \mathcal{H}_{_{m,4}}^{-1}\mathcal{S}_{{_m}}^{-1}\mathcal{A}_{{_m}}^{-1}$.

%The matrix $\B{\Phi}_{m\times m}$ is invertible $\forall m$ due to its characterization as a product of invertible matrices, and its inverse is denoted by $\B{\Phi}_{m\times m}^{-1}$. Thus, the inverse of the block diagonal $\tilde{\B{\Phi}}_{_{j}}$ is denoted by $\tilde{\B{\Phi}}_{_{j}}^{-1}$ described in \eqref{Forward:Operator:Matrix} is denoted as:

%\subsection{The irregular forward transform}
%The forward JPEG transform can be represented using a mapping $\mathcal{T}_{_{F}}:(\B{v},\B{t})\to \B{\delta}$, where  $\B{v}$, $\B{t}$ and $\B{\delta}$ are $n\times 1$ vectors. This mapping can be defined as a linear transformation:

%which implies as a linear transformation as follows: 

%For brevity, these procedures are represented by the operators:  $\mathcal{T}_{_{F}}:(\B{v},\B{t})\to \B{\delta}$ 

%and noisy data As we are treating with data instead of pixels,
We introduce the wavelet-based JPEG nonparametric regression given a non-equispaced irregular data set:
\begin{IEEEeqnarray}{lCr}
	\label{JPEG:Wavelet:regression}
	\scaleobj{0.8}{u(t)=\close{\B{\Psi_{_I}^{\B{(t)}}} \B{\delta}}_{_{t}}=\sum_{k\in \mathbb{Z}} c_{_{0,k}}\phi_{_{0,k}}(t) + \sum_{j\in \mathbb{Z}}\sum_{k\in \mathbb{Z}} d_{_{j,k}}\varphi_{_{j,k}}(t)}
\end{IEEEeqnarray}

where $\B{\delta}$ consists of the sequences of coefficients $\curly{c_{_{0,k}}}$ and $\curly{d_{_{j,k}}}$, which capture the function's average behavior and details, respectively.  The matrix $\B{\Psi_{_I}^{\B{(t)}}}$ consists of the sequences of covariates $\curly{\phi_{_{0,k}}(t)}$ and $\curly{\phi_{_{j,k}}(t)}$. Unlike the present case which employs a group-wise denoising on the entire data, in the conventional JPEG the coefficients selection operator, $\mathcal{T}_{_{S}}(
\B{\delta},\B{t})$, is constructed to be used element-wise (for each resolution-level $j$), e.g, $\mathcal{T}_{_{S}}(
\delta_{ij},\B{t})=S_{\alpha}(\delta_{ij}; \lambda_{j,\gamma_{_j}}, \gamma_{_j})$
using $S_{\alpha}(\cdot)$ in \eqref{S:Thresholding:Solution}, where $\lambda_{j,\gamma_{_j}}$ and $\gamma_{_j}$ are defined in \eqref{Optimal:Thresholding} using $\alpha=1$, such that $\curly{\delta_{ij}}_{i=1}^{\mathf{m}(j)}$ is a subset of vector $\B{\delta}$ consisting of the $j$'th resolution-level coefficients.

In the following section, we discuss about the JPEG group-wise coefficients selection to perform denoising.
\subsubsection{JPEG group-wise coefficients selection for irregular data denoising}
Lastly, we formulate the transpose of the inverse wavelet transform, in order to employ the group-wise denoising procedure depicted in \eqref{Optimization:Gradient:decent}:
%Next we show the various algorithms to be employed for estimating the parameters in the substantive equations.
%First, we get the forward wavelet transform:
\begin{IEEEeqnarray}{lCr}
	\label{Transposed:Inverse:Operator:Matrix}
	\scaleobj{0.67}{\tilde{\B{u}}=\close{\B{\Psi}_I^{\B{(t)}}}^T\B{u},\hspace{0.25em} \close{\B{\Psi}_I^{\B{(t)}}}^T\equiv\close{\prod_{j=1}^{J-1} \braket{\close{\tilde{\B{\Phi}}^{\JPEGT{j}}_{_{j}}}^{-1}} }^T \braket{\close{\B{\Phi}^{(\B{t})}_{\mathf{m}(J)\times \mathf{m}(J)}}^{-1}}^T}
\end{IEEEeqnarray}

where $\scaleobj{0.8}{\braket{\close{\B{\Phi}^{\B{(t)}}_{m\times m}}^{-1}}^T}$ is constructed as: 
\begin{IEEEeqnarray}{lCr}
	\label{H:Inverse:Transform}
	\scaleobj{0.8}{\braket{\close{\B{\Phi}^{\B{(t)}}_{m\times m}}^{-1}}^T=\braket{\mathcal{A}_{{_m}}^{-1}}^T\braket{\mathcal{S}_{{_m}}^{-1}}^T 
		\braket{\close{\mathcal{H}^{\B{(t)}}_{_{m,4}}}^{-1}}^T}\\ \scaleobj{0.8}{\times \braket{\close{\mathcal{H}^{\B{(t)}}_{_{m,3}}}^{-1}}^T \braket{\close{\mathcal{H}^{\B{(t)}}_{_{m,2}}}^{-1}}^T \braket{\close{\mathcal{H}^{\B{(t)}}_{_{m,1}}}^{-1}}^T } \nonumber
\end{IEEEeqnarray}

%$\close{\B{\Phi}_{m\times m}^{-1}}^T=(\mathcal{A}_{{_m}}^{-1})^T(\mathcal{S}_{{_m}}^{-1})^T (\mathcal{H}_{{_{m,4}}}^{-1})^T (\mathcal{H}_{{_{m,3}}}^{-1})^T (\mathcal{H}_{{_{m,2}}}^{-1})^T (\mathcal{H}_{{_{m,1}}}^{-1})^T$.

The JPEG estimator, denoted by $\widehat{\B{\delta}}$, is obtained by minimizing the objective function in \eqref{Objective:function} using the iterative procedure in \eqref{Optimization:Gradient:decent} given the chosen thresholding level. The latter is determined by minimizing the reference-free criterion function depicted in section \ref{Section:Optimal:Thresholding}.

The construction of the wavelet-based JPEG transforms matrix involves computational complexity, a problem which can be alleviated by employing a faster algorithm, referred to as `a lifting step'. \color{black} In the succeeding section \color {black} we discuss the algorithm to obtain a denoised representation of non equispaced data design by using a procedure that does not necessitate matrix operation to reduce computational complexity.
%\subsection{Non equispaced data design}

\subsection{The irregular forward transform}\label{section:Irregular:Forward:N_observations}
The irregular forward transform in \eqref{Forward:Operator} is the procedure to obtain the wavelet coefficients, $\B{\delta}$, as follows:
\newcommand\LiftingStepTransINV[5]{\close{\varpi_{_{#1,{l}}}^L{#2}_{_{#3}}^{({#4})} + \varpi_{_{#1,{l}}}^H{#2}_{_{{#5}}}^{({#4})}} }
\newcommand\LiftingStep[5]{\close{\omega_{_{#1,{#5}}}{#2}_{_{#3}}^{({#4})} + (1-\omega_{_{#1,{#5}}}){#2}_{_{{#5}}}^{({#4})}} }
%\varpi_{\ell,i}^L

%\subsection{Algorithms}\label{Appendix:Algorithms}
\begin{algorithm}[H]
	%\algsetup{linenosize=\tiny}
	\tiny
	\caption{The wavelet-based JPEG Forward Transform}\label{Algrorithm:ForwardTransform}\label{Irregular:Lifting}. 
	\begin{algorithmic}[1]
		\Procedure{ForwardTransform}{u, Grid, Level}
		%\Procedure{Euclid}
		\Require (i) Two real-number vectors $\textit{u}$ and Grid of size $n\times 1$; (ii) Level$\in\curly{1,...,\log_2(n)}$;	
		\Ensure $\text{Output}\gets$ a real-number coefficient vector $\delta$ of size $n\times 1$;					
		\State \emph{The JPEG filter coefficients}:
		\State $\pi_1 \gets -1.5861343420693648$; 
		\State $\pi_2 \gets -0.0529801185718856$; 
		\State $\pi_3 \gets \color{white}+\color{black}0.8829110755411875$; 
		\State $\pi_4 \gets \color{white}+\color{black}0.4435068520511142$
		\State $\varphi\color{white}_{1}\color{black} \gets \color{white}+\color{black}1.1496043988602418$			
		\State \emph{Start}:
		\State $\textit{n} \gets \text{length of }\textit{u}$
		\State $\textit{m} \gets \text{ceiling of } \close{\textit{n}/2}$
		\State $\textit{Q} \gets \text{ceiling of } \close{\textit{m}/2}$	
		\State $d \gets \text{copy the odd elements of } \textit{u}$
		\State $s \gets \text{copy the even elements of } \textit{u}$
		
		\State \emph{top}:
		\State NewGrid $\gets$ Grid
		
		\State 
        \emph{The parameter vector $\phi_{\delta}$ is used to generate a zero wavelet coefficient for the generated observation:}
        \State $\phi_{\delta}[1]=-2\times(\pi_1\times\pi_2\times\pi_3)/(1+2\times\pi_2\times\pi_3)$
        \State $\phi_{\delta}[2]=2\times(\pi_2\times\pi_3)/(1+2\times\pi_2\times\pi_3)$
        \State $\phi_{\delta}[3]=2\times(\pi_1+\pi_3+3\times\pi_1\times\pi_2\times\pi_3)/(1+2\times\pi_2\times\pi_3)$	
		\If {$\textit{n} \text{ is an odd number}$} 
		\State $\textit{s}[m] \gets \textit{d}[m-1]*\phi_{\delta}[1]+\textit{s}[m-1]*\phi_{\delta}[2]+\textit{d}[m]*\phi_{\delta}[3]$
		\State NewGrid[n+1] $\gets$ NewGrid[n] + NewGrid[n]-NewGrid[n-1]
		%\State $\textit{n} \gets \textit{n}+1$ 
		\EndIf
		\State OddGrid $\gets \text{copy the odd elements of}$ NewGrid
		\State EvenGrid $\gets \text{copy the even elements of}$ NewGrid
		\State $s \gets \textit{s}$ $+$ \Call{Filter}{d, 0, $\pi_1$, \text{NewGrid}}
		\State $d \gets \textit{d}$ $+$ \Call{Filter}{s, 1, $\pi_2$, \text{NewGrid}}
		\State $s \gets \textit{s}$ $+$ \Call{Filter}{d, 0, $\pi_3$, \text{NewGrid}}
		\State $d \gets \textit{d}$ $+$ \Call{Filter}{s, 1, $\pi_4$, \text{NewGrid}}	
		%	\State $u[1:m] \gets \textit{d}[1:m]$
		%	\State $u[(m+1):n] \gets \textit{s}[1:(m-1)]$
		
		%\For {i=0 To m-1}
		\State u[1:m] $\gets$ d[1:m]
		\State u[(m+1):n] $\gets$ s[1:(n-m)]	
		\State Grid[1:m] $\gets$ OddGrid
		\State Grid[(m+1):n] $\gets$ EvenGrid
		
		\State \emph{Scaling the wavelet coefficients}:
		\State $d \gets \delta[1:Q]*\varphi$
		\State $s \gets \delta[(Q+1):m]/\varphi$	
		%\State Grid[m+1+i] $\gets$ OldGrid[2*i+2]	
		%\EndFor
		\If {Level $>$ 1}
		%	\State $(new.u, new.Grid)\gets \text{ForwardTransform}\close{u[1:m], Grid[1:m], Level-1}$
		\State $\curly{\begin{matrix}
			\text{u[1:m]} \\
			\text{Grid[1:m]}
			\end{matrix}}$ $\gets$ \Call{ForwardTransform}{u[1:m], Grid[1:m], Level-1}
		%		\State $j \gets j-1$.
		%		\State $i \gets i-1$.
		%		\State \textbf{goto} \emph{loop}.
		%		\State \textbf{close};
		\EndIf
		%\State $i \gets i+\max(\textit{delta}_1(\textit{string}(i)),\textit{delta}_2(j))$.
		%\State \textbf{goto} \emph{top}.
		\State\Return{$\close{\text{u},\text{Grid}}$}
		\EndProcedure
	\end{algorithmic}
\end{algorithm}

%where each of sequences $\curly{\omega_{_{1,l}}},\curly{\omega_{_{2,l}}},\curly{\omega_{_{3,l}}},\curly{\omega_{_{4,l}}}$ are the interpolation weights (obtained by using \eqref{Interpolation:Weights} in Appendix \ref{Appendix:Lifting}) and $\pi_1,\pi_2,\pi_3,\pi_4$ are scalar constants described in \cite{schelkens2009jpeg}, which are referred to as the filter coefficients of the wavelet-based JPEG. 
\pagebreak
The Filter function in algorithm \ref{Algrorithm:ForwardTransform} defined as follows:
\begin{algorithm}[H]
		%\algsetup{linenosize=\tiny}
	\tiny
	\caption{The wavelet Forward and Inverse Filter}\label{Filter}
	\begin{algorithmic}[1]
		\Procedure{Filter}{Series, Even, $\pi$, Grid}
		\Require (i) Two real-number vectors: Series of size $n\times 1$ and Grid of size $2n\times 1$; (ii) two scalars: $\text{Even}\in\curly{0,1}$ and $\pi\in\mathbb{R}$.	
		\Ensure $\text{Output}\gets$ Filter, a real-number vector of size $n\times 1$ consisting of the predicted series;			
		\State n $\gets$ length of Series
		\State  O $\gets$ copy the odd elements of Grid 
		\State E $\gets$ copy the even elements of Grid 			
		%        \State OSeries $\gets\curly{\text{OddGrid}, \text{OddGrid}}$
		\If {Even $=$ 0}
		%	\State $(new.u, new.Grid)\gets \text{ForwardTransform}\close{u[1:m], Grid[1:m], Level-1}$
		\State Low $\gets$ O[1:n-1]
		\State High $\gets$ O[2:n]
		\State weights $\gets$ (High-E[1:(n-1)])/(High-Low)
		\State $\omega_l\gets$ [weights, 0.5]
		\State $\omega_h\gets$ $1-\omega_l$
		\State S $\gets$ [Series[1], Series]		
		\State Filter $\gets$ $2\pi$ $\omega_l$*S[1:n] + $2\pi$ $\omega_h$*S[2:(n+1)]
		%		\State $j \gets j-1$.
		%		\State $i \gets i-1$.
		%		\State \textbf{goto} \emph{loop}.
		%		\State \textbf{close};
		\Else
		\State Low $\gets$ E[1:(n-1)]
		\State High $\gets$ E[2:n]
		\State weights $\gets$ (High-O[2:n])/(High-Low)
		\State $\omega_l\gets$ [0.5, weights]
		\State $\omega_h\gets$ $1-\omega_l$
		\State S $\gets$ [Series, Series[n]]		
		\State Filter $\gets$ $2\pi$ $\omega_l$*S[1:n] + $2\pi$ $\omega_h$*S[2:(n+1)]		
		\EndIf
		%\State $i \gets i+\max(\textit{delta}_1(\textit{string}(i)),\textit{delta}_2(j))$.
		%\State \textbf{goto} \emph{top}.
		\State\Return{$\close{\text{Filter}}$}
		\EndProcedure
	\end{algorithmic}
\end{algorithm}

\subsection{The irregular inverse transform}
The irregular inverse transform is the procedure to reconstruct the vector $\B{u}$ in \eqref{Inverse:Operator}, as follows:
\begin{algorithm}[H]
	%\algsetup{linenosize=\tiny}
\tiny
	\caption{The wavelet-based JPEG Inverse Transform}\label{Algrorithm:InvTransform}	\label{Irregular:Inverse:Lifting}
	\begin{algorithmic}[1]
		\Procedure{InverseTransform}{$\delta$, Grid, Level}
		%\Procedure{Euclid}
		\Require (i) Two real-number vectors $\delta$ and Grid of size $n\times 1$; (ii) Level$\in\curly{1,...,\log_2(n)}$;	
		\Ensure $\text{Output}\gets$ a real-number vector $\textit{u}$ of size $n\times 1$;
		\State \emph{The JPEG filter coefficients}:
		\State $\pi_1 \gets -1.5861343420693648$; $\pi_2 \gets -0.0529801185718856$; 
		\State $\pi_3 \gets \color{white}+\color{black}0.8829110755411875$; $\pi_4 \gets \color{white}+\color{black}0.4435068520511142$
		\State $\varphi\color{white}_{1}\color{black} \gets \color{white}+\color{black}1.1496043988602418$			
		\State \emph{Start}:
		\State $\textit{n} \gets \text{length of } \delta$
		\State $\textit{m} \gets \text{ceiling of } \textit{n}/2^{(\text{Level}-1)}$
		\State $\textit{Q} \gets \text{ceiling of } \textit{m}/2$		
		
		\State \emph{Rescaling the wavelet coefficients}:			
		\State $d \gets \delta[1:Q]/\varphi$
		\State $s \gets \delta[(Q+1):m]*\varphi$
		%		\State OddGrid $\gets$ Grid[1:Q]
		%        \State EvenGrid $\gets$ Grid[(Q+1):m]
		\State OldGrid $\gets$ Grid
		\For {i=1 To m}
		\State index $\gets (2(i-1)+1)*(i \le Q)+2(i-Q)*(i > Q)$		
		\State Grid[index] $\gets$ OldGrid[i]
		\EndFor
		%		\State NewGrid[2:2:m] $\gets$ Grid[(Q+1):m]
		
		\State \emph{top}:
		%		\State NewGrid $\gets$ Grid		
		\State NewGrid $\gets$ Grid[1:m]
		\If {$\textit{m} \text{ is an odd number}$} 
		\State $\textit{s}[Q] \gets 0$   
		\State NewGrid[m+1] $\gets$ Grid[m] - Grid[m-1] + Grid[m]		
		
		%\State $\textit{n} \gets \textit{n}+1$ 
		\EndIf
		
		%		\State OddGrid $\gets \text{copy the odd elements of}$ NewGrid
		%		\State EvenGrid $\gets \text{copy the even elements of}$ NewGrid
		\State $d \gets \textit{d}$ $-$ \Call{Filter}{s, 1, $\pi_4$, \text{NewGrid}}
		\State $s \gets \textit{s}$ $-$ \Call{Filter}{d, 0, $\pi_3$, \text{NewGrid}}
		\State $d \gets \textit{d}$ $-$ \Call{Filter}{s, 1, $\pi_2$, \text{NewGrid}}
		\State $s \gets \textit{s}$ $-$ \Call{Filter}{d, 0, $\pi_1$, \text{NewGrid}}
		
		\For {i=1 To m}
		\State index $\gets (2(i-1)+1)*(i \le Q)+2(i-Q)*(i > Q)$		
		\State $\delta$[index] $\gets$ d[i]$*(i \le Q) + $s[i-Q]$*(i > Q)$
		\EndFor
		
		\State u $\gets \delta$		
		%	\State $u[1:m] \gets \textit{d}[1:m]$
		%	\State $u[(m+1):n] \gets \textit{s}[1:(m-1)]$
		
		%\State Grid[m+1+i] $\gets$ OldGrid[2*i+2]	
		%\EndFor
		\If {Level $>$ 1}
		%	\State $(new.u, new.Grid)\gets \text{ForwardTransform}\close{u[1:m], Grid[1:m], Level-1}$
		\State $\curly{\begin{matrix}
			\text{u} \\
			\text{Grid}
			\end{matrix}}$ $\gets$ \Call{InverseTransform}{$\delta$, Grid, Level-1}
		%		\State $j \gets j-1$.
		%		\State $i \gets i-1$.
		%		\State \textbf{goto} \emph{loop}.
		%		\State \textbf{close};
		\EndIf
		%\State $i \gets i+\max(\textit{delta}_1(\textit{string}(i)),\textit{delta}_2(j))$.
		%\State \textbf{goto} \emph{top}.
		\State\Return{$\close{\text{u},\text{Grid}}$}
		\EndProcedure
	\end{algorithmic}
\end{algorithm}

\subsection{The irregular transpose of the inverse transform}
The irregular transpose of the inverse transform in \eqref{Transposed:Inverse:Operator:Matrix} enables to obtain the vector $\B{\tilde{\B{u}}}$, as follows (see transposed-inverse filter function, TransInvFilter, in algorithm \ref{euclid}):
%\subsection{Algorithms}
\begin{algorithm}[H]
	%\algsetup{linenosize=\tiny}
\tiny	
	\caption{The wavelet-based JPEG Transposed-Inverse Transform}\label{Algrorithm:TransInvTransform}\label{Irregular:TransInvLifting}
	\begin{algorithmic}[1]
		\Procedure{TransInverseTransform}{\textit{u}, Grid, Level}
		\Require (i) Two real-number vectors \textit{u} and Grid of size $m\times 1$; (ii) Level$\in\curly{1,...,\log_2(m)}$
		\Ensure $\text{Output}\gets\Psi_{_I}^T\textit{u}$
		%\Input{Two real-number vectors u and Grid and a scalar integer Level}
		%\OUTPUT{$\gcd(a,b)$}		
		%\Procedure{Euclid}
		%		\State $\textit{n} \gets \text{length of }\textit{u}$
		%		\State $\textit{m} \gets \text{ceiling of } \close{\textit{n}/2}$
		%\State $J \gets count-1$
		%\State $\textit{n}_j \gets \text{length of }\textit{u}$
		\State \emph{The JPEG filter coefficients}:
		\State $\pi_1 \gets -1.5861343420693648$; $\pi_2 \gets -0.0529801185718856$; 
		\State $\pi_3 \gets \color{white}+\color{black}0.8829110755411875$; $\pi_4 \gets \color{white}+\color{black}0.4435068520511142$
		\State $\varphi\color{white}_{1}\color{black} \gets \color{white}+\color{black}1.1496043988602418$			
		\State \emph{Start}:
		\State $\text{Output}\gets\textit{u}$
		\State $\textit{m} \gets \text{length of } \textit{u}$ %\text{ceiling of } n/2^J$ 
		\State $\textit{Q} \gets \text{ceiling of } \close{\textit{m}/2}$		
		\State $d \gets \text{copy the odd elements of } \textit{u}$
		\State $s \gets \text{copy the even elements of } \textit{u}$
		
		\State OddGrid $\gets$ copy the odd elements of Grid 
		\State EvenGrid $\gets$ copy the even elements of Grid 		
		
		\State \emph{top}:
		\State NewGrid $\gets$ Grid
		
		\If {$\textit{m} \text{ is an odd number}$} 
		\State $\textit{s}[Q] \gets 0$ 
		\State NewGrid[m+1] $\gets$ NewGrid[m] + NewGrid[m]-NewGrid[m-1]
		%\State $\textit{n} \gets \textit{n}+1$ 
		\EndIf
		%	\State OddGrid $\gets \text{copy the odd elements of}$ NewGrid
		%	\State EvenGrid $\gets \text{copy the even elements of}$ NewGrid
		\State $d \gets \textit{d}$ $-$ \Call{TransInvFilter}{$s$, 0, $\pi_1$, \text{NewGrid}}		
		\State $s \gets \textit{s}$ $-$ \Call{TransInvFilter}{$d$, 1, $\pi_2$, \text{NewGrid}}	
		\State $d \gets \textit{d}$ $-$ \Call{TransInvFilter}{$s$, 0, $\pi_3$, \text{NewGrid}}			
		\State $s \gets \textit{s}$ $-$ \Call{TransInvFilter}{$d$, 1, $\pi_4$, \text{NewGrid}}		
		%	\State $u[1:m] \gets \textit{d}[1:m]$
		%	\State $u[(m+1):n] \gets \textit{s}[1:(m-1)]$
		\State \emph{Rescaling the wavelet coefficients}:
		\State $d \gets d[1:Q]/\varphi$
		\State $s \gets s[(Q+1):m]*\varphi$
		
		%\For {i=0 To m-1}
		\State $\text{Output}$[1:Q] $\gets$ d[1:Q]
		\State $\text{Output}$[(Q+1):m] $\gets$ s[1:(m-Q)]	
		\State Grid[1:Q] $\gets$ OddGrid
		\State Grid[(Q+1):m] $\gets$ EvenGrid	
		%\State Grid[m+1+i] $\gets$ OldGrid[2*i+2]	
		%\EndFor
		\If {Level $>$ 1}
		%	\State $(new.u, new.Grid)\gets \text{ForwardTransform}\close{u[1:m], Grid[1:m], Level-1}$
		\State 
		\text{Output[1:Q]} $\gets$ \Call{TransInverseTransform}{Output[1:Q], Grid[1:Q], Level-1}
		%		\State $j \gets j-1$.
		%		\State $i \gets i-1$.
		%		\State \textbf{goto} \emph{loop}.
		%		\State \textbf{close};
		\EndIf
		%\State $i \gets i+\max(\textit{delta}_1(\textit{string}(i)),\textit{delta}_2(j))$.
		%\State \textbf{goto} \emph{top}.
		\State\Return{$\close{\text{Output},\text{Grid}}$}
		\EndProcedure
	\end{algorithmic}
\end{algorithm}

In the ensuing section \color {black} we describe the estimation procedure of the parameter vector of interest, $\B{\beta}$, using the wavelet-based JPEG estimate of $\widehat{\mathcal{M}}_1(\cdot)$ obtained from \eqref{Optimization:Gradient:decent}.% \color{blue}  $\widehat{\mathcal{M}}_1(\cdot)$ obtained using our proposed algorithm which is wavelet-based JPEG. \color{black}% on irregular-spaced data.
%Next we extrapolate a new datum \eqref{Irregular:Lifting}

%The optimal weights to guarantee a zero detail coefficient of the extrapolated datum:
%\begin{IEEEeqnarray}{lCr}
\label{Optimal:Weights:Sample:Size}
\subsection{The instrument variable estimator: truncated sample}
Denote the truncated data by a sequence of observations $\curly{y_{1i},\B{x}_i,\B{w}_i,\B{z}_i}_{i=1}^n$, such that each observation is an independent realization of the conditional joint distribution function of the random variables $\curly{\mathrm{y}_1,\B{\mathrm{x}},\B{\mathrm{w}},\B{\mathrm{z}}}$ given that they are selected into the sample ($\mathrm{y_{2}}=1)$. The endogenous variable is denoted by $\mathrm{x}_1$ and is included in vector $\B{\mathrm{x}}$. There are two types of joint dependence between the covariate vector and the substantive equation's random disturbance. The first type is intrinsic in the model and is generated by a variation in $\mathrm{v}$ (the endogenous part of $\mathrm{x_1}$) leading to a comovement between  $\mathrm{x_1}$ and $\disturb_{1}$.  The second type is related to the sample selection and is generated by a variation in $\B{\mathrm{w}}$ leading to a comovement between the covariate vector $\B{\mathrm{x}}$ and $\disturb_{1}$. %The model's intrinsic endogeneity implies the joint dependence of $\mathrm{x_1}$ and $\disturb_{1}$ which is generated by a comovement between these two random variables due to a variation in $\mathrm{v}$ (the endogenous part of $\mathrm{x_1}$). 
This implies that there are two sources of endogeneity to be taken into consideration: the first source is related to the endogenous covariate, while the second source is due to the truncation environment of the data. 
%The second type of $\mathrm{x_1}$ depends on the conditional random variable $\disturb_{1}|\disturb_{2}\ge-\B{\mathrm{w}}^T\B{\mathrm{\gamma}}$ due to a comovement between these two random variable with respect to $\B{\mathrm{w}}^T\B{\mathrm{\gamma}}$

%\section{Kernel Density Estimators}\label{Section:Kernel}
%\vspace{1em}

%Let $X$ be a random variable with continuous distribution $F(x)$ with density $f(x)=\frac{1}{dx}F(x)$. If one is interested in estimating $f(x)$ from a random sample $\curly{X_1,...,X_n}$, it seems natural to obtain first an estimate for the distribution function:
%\begin{IEEEeqnarray}{lCr}

%\hat F(x)=n^{-1}\sum_{i=1}^{n}(X_i\le x),
%\end{IEEEeqnarray}

%and  then estimate the density $f(x)$ as the discrete derivative of $\hat F(x)$ for some small $h > 0$, which is defined as:

%\begin{IEEEeqnarray}{lCr}

%\hat f(x) = \frac{\hat F(x+h)-\hat F(x-h)}{2h}
%\end{IEEEeqnarray}

%One can write the derivative above as:

%\begin{IEEEeqnarray}{lCr}
\label{fhat}
%\hat f(x) = \frac{1}{2nh}\sum_{i=1}^{n}I(x-h\le X_i\le x+h) = \frac{1}{2nh}\sum_{i=1}^{n}I\close{\frac{\abs{X_i-x}}{h}\le 1}
%\end{IEEEeqnarray}

%
%The estimator $\hat f(x)$ represents the percentage of observations which are close to the point $x$.
%If many observations are near $x$, then $\hat f(x)$ is large and conversely. The bandwidth $h$ controls the degree of smoothing.

%In order to simplify \eqref{fhat}, we define the following function:
%\begin{IEEEeqnarray}{lCr}
\label{UniformKernel}
%k(u) = \begin{cases} \frac{1}{2}, & \mbox{if} \abs{u}\le 1\\ 0 , & \mbox{if} \abs{u}>1

%\end{cases}
%\end{IEEEeqnarray}

%where $k(u)$ is the uniform density function on $[-1,1]$. 

%We obtain the kernel density estimator from \eqref{fhat} and \eqref{UniformKernel}:
%\begin{IEEEeqnarray}{lCr}
\label{Density:Equation}
%\hat f(x) =  \frac{1}{nh}\sum_{i=1}^{n}k\close{\frac{x-X_i}{h}},
%\end{IEEEeqnarray}

%where $k(u)$ is referred to as a kernel function.\footnote{A kernel function $k(u):\mathbb{R}\to \mathbb{R}$ is any function which satisfies $\int_{\infty}^{\infty}k(u) du=1$.} 

%Next using \eqref{Density:Equation} define the standardize kernel function:
%\begin{IEEEeqnarray}{lCr}
\label{Standarized:kernel}
%k_h(x)=\frac{1}{h}k\close{\frac{x}{h}}
%\end{IEEEeqnarray}

%which will be employed to arrive an alternative IV estimator.
Next we discuss the two-step estimation procedure to be employed for the correction of both endogeneity and truncation bias propagated by truncation.
\subsection{The estimation procedure}
%\color{red} Tell an introduction about the two stages and why we are doing that \color{black}
In this section we introduce a two-step estimation procedure to eliminate the two sources of bias discussed. To eliminate the endogeneity bias term we adapt a similar approach to the two step procedure in \cite{zhou2016estimation} for a partially linear single index model estimation, in which the first stage is a regression of the endogenous covariate on all the exogenous covariates and the instrumental variable. In the second stage, the endogenous covariate is substituted with the fitted values obtained from the first stage. However, the estimation approach in \cite{zhou2016estimation} cannot be implemented in a truncated environment, because it treats the first stage regression as a linear population regression (as if the entire covariates distribution function is observed). We alleviate this by modeling both the first as well as the second stage equations as endogenously truncated equations. In order to eliminate the endogenous truncation bias, we control for this source of bias by including the truncation bias term as an additional covariate in the substantive equations, as depicted in \eqref{Truncated:Substatnive:System}. Thus, the partial linearity is applied to both the first as well as the second stage equations.  

In the first stage, we regress the endogenous covariate on the instrumental and exogenous variables, by minimizing the partially linear index model:
\begin{IEEEeqnarray}{lCr}
	\label{first:stage}
	\scaleobj{0.7}{(\B{\widehat{\delta},\widehat{\theta_{1f}}})=\underset{\B{(\delta,\theta_{1f})}\in\Theta\times \Delta_K}{\arg\min}\frac{1}{n}\sum_{i=1}^{n}\close{x_{1i}-\braket{\B{x}_{-1_i}^T,\B{z}_i^T}\B{\delta}-\widehat{\mathcal{M}}_2(\B{w_i};\B{\theta_{1f}})}^2}
\end{IEEEeqnarray}

In the second stage, the endogenous variable is replaced by its predicted value obtained from the first stage in \eqref{first:stage}, and we minimize the following function:

\begin{IEEEeqnarray}{lCr}
	\label{second:stage}
	\scaleobj{0.7}{(\B{\widehat{\beta},\widehat{\theta_{2f}}})=\underset{\B{(\beta,\theta_{2f})}\in\Theta\times \Delta_K}{\arg\min}\frac{1}{n}\sum_{i=1}^{n}\biggl(y_{1i}-\braket{\widehat{x_{1i}},\B{x}_{-1_i}^T}\B{\beta}} \scaleobj{0.7}{-\widehat{\mathcal{M}}_1(\B{w_i};\B{\theta_{2f}})\biggr)^2}
\end{IEEEeqnarray}

%Next we implement endogenously truncated Hausman exogeneity test.
%\color{blue} Say something here \color{black}

As can be seen in \eqref{second:stage} the two sources of endogeneity bias we deal with are: (i) the bias propagated by the endogenous covariate is alleviated by utilizing the covariate set $\braket{\widehat{x_{1i}},\B{x}_{-1_i}^T}$ consisting entirely of exogenous covariates and (ii) the bias propagated by the endogenous truncation is alleviated by controlling for the selection bias term $\widehat{\mathcal{M}}_1(\cdot)$.

Next we present Monte Carlo simulation to examine our semiparametric IV estimator's performance in a truncated environment.
\section{Simulation}\label{Section:Simulation}
In this section, we generate multiple random data sets to be used for the examination of our model's performance, using different sample sizes. %, in the presence of an endogenous variable in the substantive equation given a truncated data set.

First, we discuss the procedure for the data generation process (DGP).
\subsection{Data generation process}\label{Sec:DGP}
Denote the sample size by $N\in\bigl\{500$, $2000$, $3000$, $5000$, $8000$, $10000\bigr\}$. In order to not restrict the data generation process to the family of symmetric unimodal distribution functions, a mixture of distribution functions is utilized to generate each of the selection model's disturbances that are jointly dependent (as will be discussed in section \ref{Joint:Disturb} to follow). In order to verify that our proposed model performs well under different data generating processes (DGP), we construct a data set consisting of 2,000,000 distribution functions,\footnote{The estimates obtained given the various data distribution functions will be supplied upon request.} practically generating 100 millions realizations which are not i.i.d. By construction, each observation is randomly drawn from a unique mixture of distribution functions. %The motivation is to examine our model's performance in cases of a non-standard distribution functions for the disturbances, such as a mixture of distribution functions.  %the functions as the marginal density, in order to generate a non-normal distributed disturbances to be estimated.
% and the predicted class (since the classes are latent). 
%The participation shares depicted in \eqref{Participation:Share} are generated using arbitrarily chosen number in order to construct the conditional distribution function of the individual characteristics given the group's characteristics. 

%\vspace{0.5em}
\subsubsection{The disturbances' joint distribution function}\label{Joint:Disturb}
Each triple of disturbances $\curly{\disturb_{1i},\disturb_{2i},\mathrm{v}_i}$ is randomly and independently drawn from $F_{\disturb_1,\disturb_2,\mathrm{v}}$, which is the substantive and participation equations' disturbances joint distribution function. The aforementioned joint density function consists of two components: a Copula function,\footnote{Any continuous joint distribution function can be characterized by a set of marginal distribution functions and a joint distribution function determining the dependence structure which is referred to as a Copula function (Sklar's Theorem \cite{sklar1959fonctions}). } which characterizes the disturbances' dependence structure, and three marginal distribution functions $F_{\disturb_1}$, $F_{\disturb_2}$ and $F_{\mathrm{v}}$. In order to verify our model's performance in the presence of random disturbances' distribution functions that are not restricted to the family of symmetric and unimodal distribution functions, each one of the sample selection model's disturbances  $\disturb_1$ and $\disturb_2$ is marginally-distributed according to a mixture of three different distribution functions: (i) a normal distribution function with expectation and standard deviation parameters $(\mu, \sigma_a)$ denoted by $\mathcal{N}(\mu,\sigma_{a}^2)$; (ii) a  normal distribution function with expectation and standard deviation parameters $(-\mu, \sigma_b)$ denoted by $\mathcal{N}(-\mu,\sigma_{b}^2)$; (iii) a gamma distribution function with scale and shape parameters $(\mu\varphi,\varphi)$ denoted by $\Gamma_{\scaleto{\mathcal{\text{Gamma}}}{4pt}}\close{\mu\varphi, \varphi}$\footnote{The scale and shape parameters imply that the expectation and standard deviation parameters are $(\mu ,\sqrt{\mu/\varphi})$, respectively.}. This mixture distribution function is defined as: 
\begin{IEEEeqnarray}{lCr}
	\label{Marginal:Random:Dist}
	\scaleobj{0.7}{\begin{cases}
			\mathrm{v}\sim \mathcal{N}(0,\sigma_{\mathrm{v}}^2)\\
			\mathrm{\disturb_{j}}\sim 0.4\mathcal{N}(\mu,\sigma_{a}^2)+0.5\mathcal{N}(-\mu,\sigma_{b}^2)+0.1\Gamma_{\scaleto{\mathcal{\text{Gamma}}}{4pt}}\close{\mu\varphi, \varphi}, & j=1,2.
	\end{cases}}
\end{IEEEeqnarray}

where $\mathbb{E}\braket{\mathrm{\disturb_{j}}}=0$ and   $\mathbb{E}\braket{\mathrm{v}}=0$ .

The parameters set $(\mu,\sigma_a,\sigma_b,\varphi,\sigma_{\mathrm{v}})=(4,2.5,1.5,2,1)$ is arbitrarily chosen. Due to its simplicity, the Clayton Copula (as will be discussed in section \ref{Sec:Archimedean:Copula} to follow) with a degree of dependence parameter is set to equal $1$, assuring a mild correlation between the disturbances, is used for controlling the dependence structure. Choosing a mild correlation, is important in order to be conservative by examining the potential bias in the parameter estimates under conditions which are not extreme.

%\color{blue} Explain how this is connected to Archimedean Copula \color{black}
Next we employ a function characterizing the dependence properties of the Copula \cite{mcneil2009multivariate}, referred to as \textit{a generator function} to construct the joint dependence of the random disturbances in \eqref{Marginal:Random:Dist}.

\subsubsection{Archimedean Copula function}\label{Sec:Archimedean:Copula}
An Archimedean Copula is a Copula characterized by a non-increasing, continuous generator function $\psi$: $[0,\infty] \to [0, 1]$, which satisfies $\psi(0) = 1$, $\psi(\infty) = 0$ and is strictly decreasing on $[0, \inf\curly{t : \psi(t) = 0}]$.
%According to Sklar's Theorem \cite{sklar1959fonctions}, any continuous joint distribution function can be characterized by a set of marginal distribution functions and a joint distribution function determining the dependence structure which is referred to as a Copula function.
In particular, we are interested in the $d$ dimensional Archemdean Copula family ($3$ in the present case\footnote{$d=3$ representing the three-dimensional vector of random disturbances $(\mathrm{v_i},\disturb_{1i},\disturb_{2i})$.}) which has the simple algebraic form \cite{mcneil2009multivariate}:\footnote{Knowing the distribution corresponding to a generator $\psi$, \cite{marshall1988families} presented a sampling algorithm for exchangeable Archimedean copulas which does not require the knowledge of the copula density. This algorithm is therefore applicable to large dimensions \cite{hofert2008sampling}.}
\begin{IEEEeqnarray}{lCr}
	\scaleobj{0.7}{\mathcal{C}(u_1,..,u_d)=\psi(\psi^{-1}(u_1),...,\psi^{-1}(u_d)), \hspace{1em} (u_1,...,u_d)\in [0,1]^d}
\end{IEEEeqnarray}

where $\psi$ is a specific function known as the generator of $\mathcal{C}$. To generate the disturbances, the Clayton Copula's generator $\psi(t)=(1+t)^{-1/\theta}$ is chosen. %The reason for choosing an Archimedean Copula is due to a technical convenience.

The covariates vector of $i$'th observation is a realization of the random variables $\braket{\mathrm{z},\mathrm{x_2},\mathrm{w_1},\mathrm{w_2}}$ which are jointly distributed with their corresponding marginal distribution functions $\curly{\mathlcal{F}_{\mathrm{z}}^i, \mathlcal{F}_{\mathrm{x_2}}^i, \mathlcal{F}_{\mathrm{w_1}}^i,\mathlcal{F}_{\mathrm{w_2}}^i}$. The dependence structure is modeled by utilizing a Gaussian Copula which is a convenient way to generate high dimensional data. By construction, each datum is generated by utilizing a different sequence of marginal distribution functions (constructed as a finite mixture drawn from  a menu of $2,000,000$ continuous distribution functions). These random variables expectation vector $\B{\mathrm{\mu}}=[0,0,0,0]^T$ and a covariance matrix  $\B{\mathrm{\Sigma_{4\times 4}}}$. The arbitrarily chosen covariance matrix is:
\begin{IEEEeqnarray}{lCr}
	\scaleobj{0.7}{\B{\mathrm{\Sigma_{4\times 4}}}=\braket{\begin{matrix}
				\sigma_{\mathrm{z}}^2 & \sigma_{\mathrm{z,x_2}} & \sigma_{\mathrm{z,w_1}} & \sigma_{\mathrm{z,w_2}}\\
				\sigma_{\mathrm{z,x_2}} & \sigma_{\mathrm{x_2}}^2 & \sigma_{\mathrm{x_2,w_1}} & \sigma_{\mathrm{x_2,w_2}}\\
				\sigma_{\mathrm{z,w_1}} & \sigma_{\mathrm{x_2,w_1}} & \sigma_{\mathrm{w_1}}^2 &  \sigma_{\mathrm{w_1,w_2}}\\
				\sigma_{\mathrm{z,w_2}} & \sigma_{\mathrm{x_2,w_2}} & \sigma_{\mathrm{w_1,w_2}} & \sigma_{\mathrm{w_2}}^2
		\end{matrix}}=\braket{\begin{matrix}
				1 & 0.4 & 0.8 & -0.6\\ 0.4 & 1.264 & 0.36 & -0.48\\ 0.8 & 0.36  & 2 & -0.4\\ -0.6 & -0.48 & -0.4 & 2
	\end{matrix}}}\nonumber
\end{IEEEeqnarray}
We generate the data $y_{1i}, y_{2i}, x_{1i}$ according to the following data generation process (DGP) \cite{escanciano2017simple}:
\begin{IEEEeqnarray}{lCr}
	\text{DGP}1:\begin{cases}
		y_{1i}^*= \alpha_1+\beta_1 x_{1i}+\beta_2 x_{2i}+\mathrm{\disturb_{1i}}\\
		y_{2i}^*= \alpha_2+\gamma_1 w_{1i}+\gamma_2 w_{2i}+\mathrm{\disturb_{2i}}\\
		x_{1i}^* = \delta_1 z_i+\delta_2 x_{2i}+\mathrm{v_{i}}\\
	\end{cases}
\end{IEEEeqnarray}

where each $i$ element in the sequence $\curly{x_{2i},z_i,w_{1i},w_{2i}}_{i=1}^N$ is an independent realization of the random variables $(\mathrm{x_2,z,w_1,w_2})$. We choose the parameter setting $[\alpha_1,\alpha_2,\beta_1,\beta_2,\delta_1,\delta_2,\gamma_1,\gamma_2]=[2, 0.5, 1, 1.25, 0.5, 1 ,2, -1]$.
%\in\mathbb{R}^4

%The $d$ dimensional Copula can be constructed as follows \cite{mcneil2009multivariate}

The truncated data set is characterized by the following equations:
\begin{IEEEeqnarray}{lCr}
	\label{Truncated:Equation}
	\braket{\begin{matrix}
			y_{1i}\\
			x_{1i}\\
	\end{matrix}}= \begin{cases}
		\braket{\begin{matrix}
				\alpha_1+\beta_1 x_{1i}+\beta_2 x_{2i}+\mathrm{\disturb_{1i}} \\
				\delta_1 z_i+\delta_2 x_{2i}+\mathrm{v_{i}}\\
		\end{matrix}} & \text{if } y_{2i}^*\ge 0\\
		\text{Unobserved} & \text{if } y_{2i}^*< 0\\
	\end{cases},
\end{IEEEeqnarray}

where $x_{1i}$ is an endogenous variable included in vector $\B{x}_i\in\mathbb{R}^p$, in which all the elements (except for $x_i$) are exogenous variables and $\B{\beta}\in\mathbb{R}^p$ is a covariates vector. The substantive equation's random disturbance is denoted by $\disturb_{1i}$.

\subsection{Simulations result}
We have randomly generated for each sample size $N\in\curly{500,2000,3000,5000,8000,10000}$, a total of $10,000$ data sets using the data generation process elaborated on in \ref{Sec:DGP}. For a given number of observations $N$, different models are estimated: (i) an $OLS$ estimator utilizing a sample consisting of random realizations from the complete distribution function, without correcting for the endogeneity of $x_{1i}$ covariate; (ii) a conventional IV estimator, correcting for the endogeneity of $x_{1i}$ covariate  using the aforementioned entire distribution function; (iii) both $OLS$ as well as a conventional IV estimators %correcting for the endogeneity of $x_{1i}$ covariate 
are applied to a truncated portion of the data distribution function consisting of participants only (without correcting for the self-selection bias); (iv) truncated sample model's estimates using the developed wavelet-based JPEG IV estimator, correcting for both truncation as well as endogeneity biases.\footnote{For sake of brevity, we delegate results of the first stage to Appendix \ref{Appendix:First:Stage}.}

\newcommand\mse[2]{\close{\frac{1}{\Omega}\sum_{i=1}^{\Omega}\close{{{#1}-{#2}}}^2}^{1/2}}
\newcommand\rmse[2]{\close{\frac{1}{\Omega}\sum_{i=1}^{\Omega}\close{\frac{{{#1}-{#2}}}{{#2}}}^2}^{1/2}}
\newcommand\BBigger[1]{\mathlarger{\mathlarger{\mathlarger{#1}}}}	
Table \ref{tabB} presents summary statistics of estimates for models (i) and (ii), while Table \ref{tabC} presents summary statistics of estimates for models (iii) and (iv). In Table \ref{tabD}, different convergence measures of these estimates are presented.

\linespread{0.1}
\begin{table}%[H]
	
	\caption{\label{tabB}Monte Carlo Simulation - Non-truncated (complete) data set with (without) IV correction}
	
	\begin{center}\scalebox{1}[0.7]{\begin{adjustbox}{width=25em}
				\begin{tabular}{ccccccccccc|}
					\hline
					\multicolumn{3}{|l|}{\multirow{3}{*}{True Parameter$^{\mathrm{a}}$}} & \multicolumn{2}{l|}{\multirow{3}{*}{Estimate}} & \multicolumn{6}{c|}{Model Setup}                                                                                                                                       \\ \cline{6-11} 
					\multicolumn{3}{|l|}{}                              & \multicolumn{2}{l|}{}                          & \multicolumn{6}{c|}{Sample size}                                                                                                                                       \\ \cline{6-11} 
					\multicolumn{3}{|l|}{}                              & \multicolumn{2}{l|}{}                          & \multicolumn{1}{c|}{500} & \multicolumn{1}{c|}{2000} & \multicolumn{1}{c|}{3000} & \multicolumn{1}{c|}{5000} & \multicolumn{1}{c|}{8000} & \multicolumn{1}{c|}{10000} \\ \hline
					\hline\multicolumn{11}{|c|}{$\textbf{Full}$ sample $\textbf{OLS}$ estimates (\textbf{without IV correction})}\\\cline{1-11} 		
					\multicolumn{3}{|c|}{\multirow{3}{*}{$\beta_1=1$}} & \multicolumn{2}{c|}{Mean} & 2.6848 & 2.6798 & 2.6807 & 2.6810 & 2.6809 & 2.6805 \\\cline{4-11} \multicolumn{3}{|c|}{} & \multicolumn{2}{c|}{Median}   & 2.6857 & 2.6815 & 2.6811 & 2.6812 & 2.6819 & 2.6812 \\\cline{4-11} \multicolumn{3}{|c|}{} & \multicolumn{2}{c|}{Std}   & 0.1602 & 0.1138 & 0.0858 & 0.0605 & 0.0534 & 0.0507 \\\hline

					\multicolumn{3}{|c|}{\multirow{3}{*}{$\beta_2=1.25$}} & \multicolumn{2}{c|}{Mean} & -0.7009 & -0.6953 & -0.6988 & -0.6983 & -0.6963 & -0.6976 \\\cline{4-11} \multicolumn{3}{|c|}{} & \multicolumn{2}{c|}{Median}   & -0.7085 & -0.6983 & -0.6967 & -0.6994 & -0.6968 & -0.6976  \\\cline{4-11} \multicolumn{3}{|c|}{} & \multicolumn{2}{c|}{Std}   & 0.2420 & 0.1743 & 0.1275 & 0.0838 & 0.0710 & 0.0652 \\\hline

					\hline\multicolumn{11}{|c|}{$\textbf{Full}$ sample $\textbf{OLS}$ estimates (\textbf{with IV correction})}\\\cline{1-11} 	
					\multicolumn{3}{|c|}{\multirow{3}{*}{$\beta_1=1$}} & \multicolumn{2}{c|}{Mean} & 0.9826 & 0.9995 & 1.0025 & 1.0020 & 0.9986 & 0.9991 \\\cline{4-11} \multicolumn{3}{|c|}{} & \multicolumn{2}{c|}{Median}  & 1.0050 & 1.0057 & 1.0083 & 0.9996 & 0.9997 & 0.9998 \\\cline{4-11} \multicolumn{3}{|c|}{} & \multicolumn{2}{c|}{Std} & 0.4397 & 0.3072 & 0.2174 & 0.1393 & 0.1116 & 0.1006\\\hline
					
					\multicolumn{3}{|c|}{\multirow{3}{*}{$\beta_2=1.25$}} & \multicolumn{2}{c|}{Mean}& 1.2687 & 1.2493 & 1.2457 & 1.2469 & 1.2510 & 1.2517 \\\cline{4-11} \multicolumn{3}{|c|}{} & \multicolumn{2}{c|}{Median}   & 1.2365 & 1.2382 & 1.2438 & 1.2466 & 1.2497 & 1.2506 \\\cline{4-11} \multicolumn{3}{|c|}{} & \multicolumn{2}{c|}{Std}  & 0.5406 & 0.3771 & 0.2655 & 0.1692 & 0.1366 & 0.1211\\\hline
					
	\end{tabular}\end{adjustbox} }\end{center}  
	\scriptsize\begin{minipage}{8.75cm}%
		$\myspace$ \\ \textbf{Note:}   $^{\mathrm{a}}$ The parameters that are used in the data generation process.\\  We estimate by ordinary least squares (OLS) method the parameters for the full sample and truncated sample without correction for the selectivity bias. % and compute the standard deviation in every random sample consisting of N observations. 
		Then, we calculate for these estimates the mean, median and standard deviation (Std.) over all data sets. The standard deviations are obtained using the estimates from the Monte Carlo simulations.
	\end{minipage}		
	\footnotesize
	\renewcommand{\baselineskip}{11pt}		
\end{table} 
Entries in Table \ref{tabB} indicate that regardless of sample size, the means of the $OLS$ estimates are biased, such that e.g., for a sample size of $3000$ observations $\beta_1=2.6807$ and $\beta_2=-0.6988$, while the mean of the full sample IV's estimates are $\beta_1=1.0025$ and $\beta_2=1.2457$ for the same sample size. For a sample size of $500$ observations, the standard deviation obtained for $\beta_1$ (the endogenous covariate's coefficient) using the IV estimator is $2.73$ times larger than in the $OLS$ estimator and decreases from $0.4397$ to $0.1006$, when the sample size increases from $500$ to $10,000$ observations. 

In Table \ref{tabC} to follow, the estimates obtained by using the truncated data are presented. It is evident that the mean of the truncated sample $OLS$ estimates are $\beta_1=2.26$ and $\beta_2=-0.3084$ for a sample size of $500$ observations, where as the true parameter values are $\beta_1=1$ and $\beta_2=1.25$, respectively. This is a huge bias which is hardly improved as the sample size increases. Further, applying a conventional IV produces estimates which still represent a huge bias, particularly $\beta_1=0.1084$ and $\beta_2=2.0544$ for the same sample size (500 observations). %These results do not improve much when the sample size increases. The proposed JPEG IV introduced completely alleviates the aforementioned biases. Entries in Table \ref{tabC} attest to high precision of estimates generated by the newly introduced JPEG IV, as reflected by their relative small dispersions compared to estimates generated by the conventional IV.
\linespread{0.1}
\begin{table}[H]
	
	\caption{\label{tabC}Monte Carlo Simulation - Truncated data set with (without) truncation bias correction}
	
	\begin{center}\scalebox{1}[0.7]{\begin{adjustbox}{width=25em}\begin{tabular}{ccccccccccc|}
					\hline
					\multicolumn{3}{|l|}{\multirow{3}{*}{True Parameter$^{\mathrm{a}}$}} & \multicolumn{2}{l|}{\multirow{3}{*}{Estimate}} & \multicolumn{6}{c|}{Model Setup}                                                                                                                                       \\ \cline{6-11} 
					\multicolumn{3}{|l|}{}                              & \multicolumn{2}{l|}{}                          & \multicolumn{6}{c|}{Sample size}                                                                                                                                       \\ \cline{6-11} 
					\multicolumn{3}{|l|}{}                              & \multicolumn{2}{l|}{}                          & \multicolumn{1}{c|}{500} & \multicolumn{1}{c|}{2000} & \multicolumn{1}{c|}{3000} & \multicolumn{1}{c|}{5000} & \multicolumn{1}{c|}{8000} & \multicolumn{1}{c|}{10000} \\ \hline
					\hline\multicolumn{11}{|c|}{$\textbf{Truncated}$ sample $\textbf{OLS}$ estimates \tiny(without truncation bias correction) \normalsize}\\\cline{1-11} 	
					\multicolumn{3}{|c|}{\multirow{3}{*}{$\beta_1=1$}} & \multicolumn{2}{c|}{Mean} & 2.2600 & 2.2493 & 2.2504 & 2.2485 & 2.2488 & 2.2492 \\\cline{4-11} \multicolumn{3}{|c|}{} & \multicolumn{2}{c|}{Median}  & 2.2711 & 2.2514 & 2.2534 & 2.2510 & 2.2484 & 2.2493\\\cline{4-11} \multicolumn{3}{|c|}{} & \multicolumn{2}{c|}{Std}  & 0.2676 & 0.1878 & 0.1362 & 0.0865 & 0.0723 & 0.0675\\\hline		 			
					\multicolumn{3}{|c|}{\multirow{3}{*}{$\beta_2=1.25$}} & \multicolumn{2}{c|}{Mean} & -0.3084 & -0.2926 & -0.2974 & -0.2955 & -0.2939 & -0.2940 \\\cline{4-11} \multicolumn{3}{|c|}{} & \multicolumn{2}{c|}{Median}  & -0.3170 & -0.2884 & -0.3048 & -0.2974 & -0.2944 & -0.2942 \\\cline{4-11} \multicolumn{3}{|c|}{} & \multicolumn{2}{c|}{Std}  & 0.3693 & 0.2647 & 0.1909 & 0.1162 & 0.0940 & 0.0851 \\\hline

					\hline\multicolumn{11}{|c|}{$\textbf{Truncated}$ sample $\textbf{OLS}$ estimates -  IV correction \tiny(without truncation bias correction) \normalsize}\\\cline{1-11} 	
					\multicolumn{3}{|c|}{\multirow{3}{*}{$\beta_1=1$}} & \multicolumn{2}{c|}{Mean} & 0.1084 & 0.1805 & 0.1942 & 0.1910 & 0.1850 & 0.1847 \\\cline{4-11} \multicolumn{3}{|c|}{} & \multicolumn{2}{c|}{Median}  & 0.1410 & 0.2095 & 0.2074 & 0.2003 & 0.1901 & 0.1876\\\cline{4-11} \multicolumn{3}{|c|}{} & \multicolumn{2}{c|}{Std}  & 0.7595 & 0.5288 & 0.3588 & 0.2235 & 0.1777 & 0.1634\\\hline		 			
					\multicolumn{3}{|c|}{\multirow{3}{*}{$\beta_2=1.25$}} & \multicolumn{2}{c|}{Mean} & 2.0544 & 2.0148 & 1.9973 & 2.0018 & 2.0083 & 2.0088 \\\cline{4-11} \multicolumn{3}{|c|}{} & \multicolumn{2}{c|}{Median}  & 2.0022 & 1.9798 & 1.9833 & 1.9968 & 2.0058 & 2.0082 \\\cline{4-11} \multicolumn{3}{|c|}{} & \multicolumn{2}{c|}{Std}  & 0.8910 & 0.6219 & 0.4212 & 0.2601 & 0.2055 & 0.1878 \\\hline
					
					%			\hline\multicolumn{11}{|c|}{$\textbf{Truncated}$ sample \textbf{JPEG \textit{IV}} estimates (First stage)}\\\cline{1-11} 
					%\multicolumn{3}{|c|}{\multirow{3}{*}{$\delta_1=0.5$}} & \multicolumn{2}{c|}{Mean} & 0.4944 & 0.4975 & 0.4973 & 0.4996 & 0.5000 & 0.4999 \\\cline{4-11} \multicolumn{3}{|c|}{} & \multicolumn{2}{c|}{Median}  & 0.4951 & 0.4963 & 0.4971 & 0.4999 & 0.4998 & 0.5000 \\\cline{4-11} \multicolumn{3}{|c|}{} & \multicolumn{2}{c|}{Std}  & 0.1236 & 0.0675 & 0.0514 & 0.0461 & 0.0352 & 0.0298 \\\hline		
					
					%\multicolumn{3}{|c|}{\multirow{3}{*}{$\delta_2=1$}} & \multicolumn{2}{c|}{Mean} & 0.9976 & 0.9983 & 0.9994 & 1.0001 & 1.0005 & 0.9999\\\cline{4-11} \multicolumn{3}{|c|}{} & \multicolumn{2}{c|}{Median}  &0.9977 & 0.9989 & 0.9989 & 1.0001 & 1.0006 & 1.0003\\\cline{4-11} \multicolumn{3}{|c|}{} & \multicolumn{2}{c|}{Std}  & 0.0902 & 0.0444 & 0.0360 & 0.0279 & 0.0221 & 0.0197\\\hline
					
					\hline\multicolumn{11}{|c|}{$\textbf{Truncated}$ sample \textbf{JPEG \textit{IV}} estimates}\\\cline{1-11} 
					\multicolumn{3}{|c|}{\multirow{3}{*}{$\beta_1=1$}} & \multicolumn{2}{c|}{Mean} & 0.9455 & 0.9875 & 1.0058 & 1.0010 & 1.0004 & 1.0000 \\\cline{4-11} \multicolumn{3}{|c|}{} & \multicolumn{2}{c|}{Median}  & 0.9717 & 1.0085 & 1.0079 & 1.0047 & 1.0013 & 1.0009\\\cline{4-11} \multicolumn{3}{|c|}{} & \multicolumn{2}{c|}{Std}  &  0.7325 & 0.3641 & 0.2972 & 0.2178 & 0.1732 & 0.1577\\\hline
					
					\multicolumn{3}{|c|}{\multirow{3}{*}{$\beta_2=1.25$}} & \multicolumn{2}{c|}{Mean} & 1.3174 & 1.2830 & 1.2617 & 1.2465 & 1.2506 & 1.2499\\\cline{4-11} \multicolumn{3}{|c|}{} & \multicolumn{2}{c|}{Median}  & 1.2945 & 1.2623 & 1.2579 & 1.2456 & 1.2518 & 1.2469\\\cline{4-11} \multicolumn{3}{|c|}{} & \multicolumn{2}{c|}{Std}  & 0.8145 & 0.4081 & 0.3324 & 0.2433 & 0.1933 & 0.1754\\\hline
					
					%${\gamma_2=-1}$ & -2.0575 & -1.5680 & -1.3809 & -1.1803 & -1.0619 & -1.0347\\{$\text{median}$}  & -1.8598 & -1.4542 & -1.2444 & -1.0882 & -1.0181 & -1.0118 \\{$\text{std}$}  & (1.4035) & (0.9544) & (0.9047) & (0.5962) & (0.3876) & (0.3190) \\\hline
					
	\end{tabular}\end{adjustbox} }\end{center}
	\scriptsize\begin{minipage}{8.8cm}%
		$\myspace$ \\ \textbf{Note:}    $^{\mathrm{a}}$ The parameters that are used in the data generation process.\\ We estimate the parameters for the truncated sample without correction for the selectivity bias by employing the $OLS$ and the (conventional) IV methods. Additionally, we estimate the parameters using the same truncated sample by employing our JPEG IV estimator, correcting for both truncation as well as endogeneity bias. For brevity, we introduce here only the second stage results and delegate the first stage results to Table \ref{tabE} Appendix \ref{Appendix:First:Stage}. % In both estimation procedures (IV and JPEG IV), the standard deviations are obtained using the estimates from the Monte Carlo simulations. %computed in every random sample consisting of N observations. 
		In each of the estimation procedures ($OLS$, IV and JPEG IV), we calculate for these estimates the mean, median and standard deviation (Std.) over all data sets. %The standard deviations are obtained using the estimates from the Monte Carlo simulations.
	\end{minipage}		
	\footnotesize
	\renewcommand{\baselineskip}{11pt}			
\end{table}

Entries in Table \ref{tabC} indicate that regardless of sample size, the means of the truncated sample IV's estimates are biased (ranges from one-tenth to one-fifth of the estimate that would have emerged by employing the conventional IV method in the absence of truncation).\footnote{For sake of brevity, we have omitted the estimates of the nuisance parameters which can be furnished upon request as well as the parameter estimates of the first stage, which are delegated to Table \ref{tabE} Appendix \ref{Appendix:First:Stage}.} 	
Note that estimates' accuracy hardly improved as sample size increases. This is due to the presence of two sources of bias. The mean estimate of $\beta_1$ (the endogenous covariate's parameter) which is obtained from implementing our proposed methodology, basically mimics the results obtained using a random sample from the entire data distribution function for sample sizes, above $2,000$ observations. The standard deviations of this estimate for sample sizes of $500$ and $10,000$ observations are $0.7325$ and $0.1577$, respectively. For a sample size of $5,000$ observations (or above), the mean estimate of $\beta_2$ (the exogenous covariate's parameter) approximates the estimate obtained by employing the conventional IV, using a random sample from the entire data distribution function. However, the estimate of $\beta_2$ obtained by employing the conventional IV in a truncated sample is biased even for $10,000$ observations. %Further, entries in Table \ref{tabC} attest to high precision of estimates generated by the newly introduced JPEG IV, as reflected by their relative small dispersions compared to estimates generated by the conventional IV. 

We conduct sensitivity test to measure the influence of an increase in number of observations on the accuracy of the truncated sample's parameter estimates. 

The first accuracy measure we use is the standardized root mean square error, $RMSE_j$, measuring the bias in the truncated regression estimate relative to the true parameter value that would have been obtained in an non truncated distribution, defined as: 
\begin{IEEEeqnarray}{lCr}
	\label{rmse}
	RMSE_j\close{\Omega}=\rmse{\hat{\beta}_{i,j}^s}{\beta_j^s},
\end{IEEEeqnarray}

where $\hat\beta_{i,j}^s$ and $\beta_{j}^s$ stand for the  substantive (s) equation's $j$'th coefficient estimated in the $i$'th sample and the coefficient in the theoretical model that would have been obtained in the entire population, respectively. $\Omega$ is the number of data sets generated for the Monte Carlo simulations, which is $5000$ data sets (each one consists of $N$ observations). 

Another measure is based on a formula similar to the one described in \eqref{rmse}, and is intended to find the relative accuracy of the truncated sample's estimates, in comparison to full sample estimates, defined as:
\begin{IEEEeqnarray}{lCr}
	\label{eqprobrmse}
	R_j\close{\Omega}=\rmse{\hat{\beta}_{i,j}^{ts}}{\hat{\beta}_{i,j}^{s}},
\end{IEEEeqnarray}

where $\hat\beta_{i,j}^{ts}$ and $\hat\beta_{i,j}^s$ stand for the substantive (s) equation's $j$'th coefficient estimated using the truncated (t) sample and the full sample, respectively. This measure evaluates the relative model's performance in the truncated sample, with respect to the conventional IV using the full sample.

The last estimates' accuracy measure is the $\delta$ coefficient used for the calculation of the estimators' standard deviations convergence rate $n^\delta$ with respect to the sample size. It depicts the speed of standard deviation's shrinkage resulting from increasing the sample size. This coefficient is calculated based on the following ratio:
\begin{IEEEeqnarray}{lCr}
	\label{ndeltaConverence}
	\delta=\fracBig{\ln{\close{\nicefrac{\BBigger{\sigma_1}}{\BBigger{\sigma_2}}}}}{\ln{\close{\nicefrac{\BBigger{n_2}}{\BBigger{n_1}}}}},
	%\frac{\ln{\frac{\sigma_1}{\sigma_2}}}{\ln{\frac{n_2}{n_1}}}
\end{IEEEeqnarray}

\vspace{-2em}
\begin{IEEEeqnarray}{lCr}
	\nonumber
\end{IEEEeqnarray}

where $\sigma_1$ and $\sigma_2$ are the estimate's standard deviations that are calculated for data sets with $n_1$ and $n_2$ number observations, respectively (calculated for a given estimate).

% two steps JPIV estimator utilizing only a truncated data consisting of participants.
%the $k$'th data set ($k=1,...,5000$) is estimated using the full sample  
% using the complete data distribution function, such that $\beta_1=$ and $\beta_2=$, while the mean of the full sample IV's estimates are $\beta_1=0.996$ and $\beta_2=1.25$. The standard deviation obtained for $\beta_1$ (the endogenous covariate's coefficient) using the IV estimator is almost $3$ times larger than in the $OLS$ estimator and decreases from $0.4392$ to $0.0966$ when the sample size increases from $500$ to $10,000$ observations. 

\linespread{0.1}
\begin{table}[H]
	
	\caption{\label{tabD}Monte Carlo Simulation - Convergence measures}
	
	\begin{center}\scalebox{1}[0.7]{\begin{adjustbox}{width=25em}\begin{tabular}{|c|cccccc|}\hline\hline 
					\multicolumn{1}{|c|}{\multirow{1}{*}{}} & \multicolumn{6}{c|}{Model's estimates} \\ \cline{2-7} 
					\multicolumn{1}{|c|}{\multirow{1}{*}{Parameter}} & \multicolumn{6}{c|}{Number of observations} \\ \cline{2-7}
					%\multicolumn{1}{|c|}{\multirow{1}{*}{estimate}} &
					\multicolumn{1}{|c|}{} & \multicolumn{1}{c|}{500} & \multicolumn{1}{c|}{2000} & \multicolumn{1}{c|}{3000} & \multicolumn{1}{c|}{5000} & \multicolumn{1}{c|}{8000} & \multicolumn{1}{c|}{10000}  \\ \hline
					
					%	\hline\multicolumn{7}{|c|}{Truncated sample OLS estimates}\\\cline{1-7} 	
					\hline\multicolumn{7}{|c|}{\textbf{RMSE measure}}\\\cline{1-7} 				
					\hline\multicolumn{7}{|c|}{$\textbf{Full}$ sample $\textbf{OLS}$ estimates - IV correction}\\\cline{1-7} 	
					${\beta_1}$ & 0.4395 & 0.3069 & 0.2172 & 0.1393 & 0.1115 & 0.1005\\\hline
					${\beta_2}$ & 0.4321 & 0.3013 & 0.2123 & 0.1353 & 0.1092 & 0.0969\\\hline
					
					\hline\multicolumn{7}{|c|}{$\textbf{Truncated}$ sample $\textbf{OLS}$ estimates - IV correction}\\\cline{1-7} 	
					${\beta_1}$ & 1.1702 & 0.9749 & 0.8818 & 0.8392 & 0.8341 & 0.8315\\\hline
					${\beta_2}$ & 0.9590 & 0.7882 & 0.6861 & 0.6363 & 0.6284 & 0.6253\\\hline
					%			\hline\multicolumn{7}{|c|}{$\textbf{Truncated}$ sample \textbf{JPEG \textit{IV}} estimates (First stage)}\\\cline{1-7} 
					%${\delta_1}$ & 0.1785 & 0.1324 & 0.1087 & 0.0870 & 0.0670 & 0.0565\\\hline
					%${\delta_2}$ & 0.0654 & 0.0522 & 0.0451 & 0.0264 & 0.0209 & 0.0187\\\hline
					
					\hline\multicolumn{7}{|c|}{$\textbf{Truncated}$ sample \textbf{JPEG \textit{IV}} estimates}\\\cline{1-7} 
					${\beta_1}$ & 0.7328 & 0.3639 & 0.2968 & 0.2177 & 0.1731 & 0.1575\\\hline
					${\beta_2}$ & 0.6524 & 0.3271 & 0.2658 & 0.1946 & 0.1546 & 0.1402\\\hline
					%${\gamma_2}$ & 2.7116 & 1.0675 & 0.8544 & 0.5979 & 0.3866 & 0.3167\\\hline
					
					\hline\multicolumn{7}{|c|}{}\\\cline{1-7} 				
					\hline\multicolumn{7}{|c|}{$\B{R_j\close{n}}$ \textbf{measure - relative to full sample IV}}\\\cline{1-7} 				
					
					\hline\multicolumn{7}{|c|}{$\textbf{Truncated}$ sample $\textbf{OLS}$ estimates - IV correction}\\\cline{1-7} 	
					${\beta_1}$ & 5.5322 & 1.0441 & 0.9006 & 0.8434 & 0.8354 & 0.8328\\\hline
					${\beta_2}$ & 1.5804 & 0.7607 & 0.6797 & 0.6382 & 0.6266 & 0.6217\\\hline
					
					%\hline\multicolumn{7}{|c|}{Truncated sample model's estimates (First stage SPNLS)}\\\cline{1-7} 
					
					\hline\multicolumn{7}{|c|}{$\textbf{Truncated}$ sample \textbf{JPEG \textit{IV}} estimates}\\\cline{1-7} 
					${\beta_1}$ & 4.5652 & 0.5544 & 0.3176 & 0.2015 & 0.1588 & 0.1422\\\hline
					${\beta_2}$ & 1.4772 & 0.4244 & 0.2675 & 0.1721 & 0.1349 & 0.1210\\\hline
					
					\hline\multicolumn{7}{|c|}{}\\\cline{1-7} 		
					\multicolumn{7}{|c|}{$\B{\delta}$ \textbf{consistency measure} ($\B{n^{\delta}}\equiv$ \textbf{ the convergence rate})}\\         \cline{1-7} 
					\hline\multicolumn{7}{|c|}{$\textbf{Truncated}$ sample $\textbf{OLS}$ estimates - IV correction}\\\cline{1-7} 	
					${\beta_1}$ & - & 0.7627 & 0.3538 & 0.2949 & 0.2254 & 0.1793 \\\hline 	
					
					\hline\multicolumn{7}{|c|}{$\textbf{Truncated}$ sample \textbf{JPEG \textit{IV}} estimates}\\\cline{1-7}  
					${\beta_1}$ & - & 0.5040 & 0.5017 & 0.6072 & 0.4875 & 0.4201
					\\\hline

	\end{tabular}\end{adjustbox} }\end{center} 
	\scriptsize\begin{minipage}{8.8cm}%
		$\myspace$ \\ \textbf{Note:}  The parameters that are used in the data generation process.\\ We examine three different measures for the parameters presented in Table \eqref{tabD}. First, the standardized root mean square error $RMSE$ between each model's estimates and the true parameters is calculated based on equation \eqref{rmse}. Second, the $R_j$ measure is calculated based on equation \eqref{eqprobrmse}. Third, the convergence rate which is measured $n^\delta$ is calculated for both the conventional IV as well as the JPEG IV estimates. This convergence rate measure implies that multiplying the sample size by 2 shrinks the estimators' standard deviations by $2^{\delta}$.
	\end{minipage}		
	\footnotesize
	\renewcommand{\baselineskip}{11pt}				
\end{table}

Table \ref{tabD} entries indicate that the root mean squares error (RMSE) measure of the estimates obtained by employing the conventional IV estimator, using a random sample from the entire data distribution function, gets smaller as the sample size increases, as can be expected. However, applying the same procedure to the truncated data set leads to RMSE measures, which are in the range of 2 to 8-fold larger, given a sample size of $2,000$ to $10,0000$ observations, respectively. This is indeed a huge bias generated by the conventional IV, which is not immune to truncation bias. Additionally, the RMSE measures show negligible improvements as a function of number of observations for the conventional IV, whereas there is a huge improvement of the RMSE, as a function of the number of observations for the JPEG IV estimator provided by our model.
The proximity between the JPEG IV and the full sample IV estimates increases with the sample size, as reflected by the $R_j$ proximity measure. Using the same measure, we find that there is a much smaller improvement in the proximity between the truncated sample conventional IV  and the full sample IV, relative to the improvement in the proximity between the JPEG IV and the full sample IV estimates.%, which is hardly improved as the sample size increases.  
% Similar results are obtained by using the $R_j$ measures, which measures the relative proximity of the JPEG IV and the conventional IV in the truncated sample to the full sample IV estimates.  

It is evident that JPEG IV  is a $\sqrt{n}$ consistent estimator, as depicted by the $\delta$ consistency measure, which is about $0.5$, implying that multiplying the sample size by 2 shrinks the estimators' standard deviations by $2^{\delta}=\sqrt{2}$. It is also evident that the truncated data conventional IV is poorly functioning in terms of consistency, as is shown by entries in Table  \ref{tabD}. %In other words, the consistency is not asymptotically improved.

\section{Conclusion}\label{Section:Conclusion}
%This paper extends the literature on instrumental variables for endogenously truncated data. We introduce a two stage estimation procedure, utilizing a wavelet-based JPEG IV estimator to capture the bias term. 
We provide an analytical proof showing that in an endogenously truncated data the conventional IV estimator does not perform the task it was intended to, but rather introduces an additional unintended bias into the parameter estimates of the substantive equation. The instrumental variable is endogenous by itself in the context of endogenously truncated data due to a comovement between the instrumental variable and the substantive equation's random disturbance, generated by mediating covariates. We offer a truncation-proof JPEG IV, shown to be a proper estimator under endogenous truncation. Employing Monte Carlo simulations attests to the JPEG IV estimator's high accuracy and its $\sqrt{n}$ consistency. These results have been verified by utilizing 2,000,000 different distribution functions (not restricted to the unimodal symmetric family), generating 100 million realizations to construct the covariates' data sets which are not imposed to be i.i.d. The various distribution functions  attest to  a very high accuracy of the model as depicted by the parameter estimates that closely mimic the true parameters.

\vspace{2em}

\parskip = 0pt
%\bibliography{reference2}		
%	\vspace{-5em}

%\bibliographystyle{apalike}

%\pagebreak
\appendix
\section{Appendices}
\vspace{1em}
\subsection{An element-wise thresholding}\label{AppendixA}

In this appendix we show that for any $\gamma\in(1,\infty)$, $S_{\alpha}(\cdot)$ in \eqref{S:Thresholding:Solution} is the solution to the min-max concave penalty function in \eqref{MCP} $\forall\alpha\in(1/\gamma,\infty)$.

%\section*{Appendix A}

\begin{proof}
	Let $\B{\tilde{\delta}}\equiv\B{\delta}^{(iter)}-1/(\alpha n)\B{\Psi}_{_I}^T(\B{u}-\B{\Psi}_{_I}\B{\delta}^{(iter)})$	
	\begin{IEEEeqnarray}{lCr}
		\scaleobj{0.7}{\B{\delta}=\underset{\B{\delta}}{\arg\min}\frac{1}{2}\normsq{\B{\delta}-\B{\tilde{\delta}}}{2}+\frac{1}{\alpha}\Penalty{\B{\delta}} }
	\end{IEEEeqnarray}

	As $\Penalty{\B{\delta}}$ is a separable function of $\B{\delta}$, the minimization problem can be implemented in an element-wise fashion. Let $\delta_k$ be the solution to the following univariate regularized least squares problem:
	\begin{IEEEeqnarray}{lCr}
		\label{Element:wise}
		\scaleobj{0.7}{\delta_k=\begin{cases}
				\underset{\delta}{\arg\min}\frac{1}{2}\normsq{\delta-\tilde{\delta}_k}{2}+	\lambda \abs{\delta} - \frac{\delta^2}{2\gamma} & \text{if }\abs{\delta_k} \le \lambda\gamma\\
				\underset{\delta}{\arg\min}\frac{1}{2}\normsq{\delta-\tilde{\delta}_k}{2}+	\frac{1}{2}\lambda^2\gamma & \text{if }\abs{\delta_k} > \lambda\gamma	
		\end{cases}}
	\end{IEEEeqnarray}

	We obtain the first order condition of \eqref{Element:wise} with respect to $\delta$:
	\begin{IEEEeqnarray}{lCr}
		\label{Univariate:Solution}
		\scaleobj{0.7}{\delta_k = \begin{cases}
				\frac{1}{1-1/(\alpha\gamma)}\braket{\tilde{\delta}_k - \text{sign}(\delta_k)\frac{\lambda}{\alpha}}  & \text{if } \abs{\delta_k} \le \lambda\gamma\\
				\tilde{\delta}_k & \text{if } \abs{\delta_k} > \lambda\gamma
		\end{cases}}
	\end{IEEEeqnarray}
	
\newcommand\Smallabs[1]{\big|{#1} \big|}
Note that if $\Smallabs{\tilde{\delta}_k}>\lambda\gamma$ and $\abs{\delta_k}\le\lambda\gamma$ it implies that either $\text{sign}(\delta_k)=-1$ and $-\lambda\gamma\le\tilde{\delta_k}\le \lambda\gamma-2\lambda/\alpha$ or $\text{sign}(\delta_k)=1$ and $-\lambda\gamma+2\lambda/\alpha\le\tilde{\delta_k}\le \lambda\gamma$. Both cases contradict the fact that $\Smallabs{\tilde{\delta}_k}>\lambda\gamma$. Similarly, if $\Smallabs{\tilde{\delta}_k}\le\lambda\gamma$ and $\abs{\delta_k}>\lambda\gamma$, it contradicts the fact that $\delta_k=\tilde{\delta_k}$ which follows directly from \eqref{Univariate:Solution}. Consequently, $\Smallabs{\tilde{\delta}_k}>\lambda\gamma$ $\Leftrightarrow$ $\abs{\delta_k}>\lambda\gamma$. We get:
	\begin{IEEEeqnarray}{lCr}
		\label{Delta:Cutoffs}
		\delta_k=\begin{cases}
			\frac{\tilde{\delta}_k-\frac{\lambda}{\alpha}}{1-1/(\alpha\gamma)} \hspace{1em} & \text{if }0 < \frac{\tilde{\delta}_k-\frac{\lambda}{\alpha}}{1-1/(\alpha\gamma)} \le \lambda\gamma\\
			\frac{\tilde{\delta}_k+\frac{\lambda}{\alpha}}{1-1/(\alpha\gamma)} \hspace{1em} &\text{if }-\lambda\gamma \le \frac{\tilde{\delta}_k+\frac{\lambda}{\alpha}}{1-1/(\alpha\gamma)} <0
		\end{cases}
	\end{IEEEeqnarray}
	
	%\delta_k  =  \tilde{\delta} & \text{if } \tilde{\delta} > \lambda\gamma
	
	and provided that  $\forall \alpha \in (1/\gamma,\infty)$ \eqref{Delta:Cutoffs} is simplified to:
	\begin{IEEEeqnarray}{lCr}
		\delta_k=\begin{cases}
			\frac{\tilde{\delta}_k-\frac{\lambda}{\alpha}}{1-1/(\alpha\gamma)}  & \text{if }\frac{\lambda}{\alpha} < \tilde{\delta}_k \le  \lambda\gamma\\
			\frac{\tilde{\delta}_k+\frac{\lambda}{\alpha}}{1-1/(\alpha\gamma)}  &\text{if }-\lambda\gamma \le \tilde{\delta}_k < -\frac{\lambda}{\alpha}
		\end{cases}
	\end{IEEEeqnarray}
	
	%Second, we consider the cases where $\abs{\delta_k} > \lambda\gamma$ taking the first order condition of \eqref{Element:wise} with respect to $\delta$:
	%\begin{IEEEeqnarray}{lCr}
	
	%\delta_k = \tilde{\delta}_k
	%\end{IEEEeqnarray}

	After some algebraic manipulation we obtain:
	\begin{IEEEeqnarray}{lCr}
		\scaleobj{0.7}{\delta_k = \begin{cases}
				\frac{1}{1-1/(\alpha\gamma)}\text{sign}(\tilde{\delta}_k)(\abs{\tilde{\delta}_k} - \lambda/\alpha) & \text{if }  \lambda/\alpha < \abs{\tilde{\delta}_k} \le \lambda\gamma \\
				\tilde{\delta}_k & \text{if } \abs{\tilde{\delta}_k} > \lambda\gamma \\ 
		\end{cases}}
	\end{IEEEeqnarray}
\end{proof}	
	
	%Provided that $\abs{\tilde{\delta}_k}\le\lambda\gamma$ and $\alpha \in (1/\gamma,\infty)$, we want to show that  $\tilde{\delta}_k > \lambda/\alpha$ implies $\delta_k>0$ as well as that  $\tilde{\delta}_k<-\lambda/\alpha$ implies $\delta_k < 0$.
	%We use \eqref{Univariate:Solution} to show that if $\tilde{\delta}_k-\lambda/\alpha >0$, after some algebraic manipulation, it follows that $\delta_k > 0$ as both $\lambda$ as well as $\alpha$ are positive. A similar argument is applied to $\tilde{\delta}_k+\lambda/\alpha <0$ which implies  $\delta_k<0$.  Consequently, 
	%\begin{IEEEeqnarray}{lCr}
	
	%\delta_k = S_{\alpha}(\tilde{\delta}_k; \lambda, \gamma) = \begin{cases}
	%\begin{cases}
	%\frac{1}{1-1/(\alpha\gamma)}\text{sign}(\tilde{\delta}_k)(\abs{\tilde{\delta}_k} - \lambda/\alpha) & \text{if }\abs{\tilde{\delta}_k} > \lambda/\alpha\\
	%0 & \text{if }\abs{\tilde{\delta}_k} \le \lambda/\alpha
	%\end{cases} & \text{if } \abs{\tilde{\delta}_k}  \le \lambda\gamma \\
	%\tilde{\delta}_k & \text{if } \abs{\tilde{\delta}_k}  >  \lambda\gamma
	%\end{cases}
	%\end{IEEEeqnarray}
	
	Next we show how to obtain an analytic representation of the transpose of the wavelet inverse transform. For doing so, we based our arguments on the equivalence between the matrices-product and the lifting scheme representations of the wavelet transform. Then using this equivalence, we generate a filter to render the costly matrices building useless. Our objective is to reduce computational complexity.

Next we present the first stage estimates.
	\subsection{First stage estimates}\label{Appendix:First:Stage}
\raggedbottom
\linespread{0.1}
\begin{table}[H]
	
	\caption{\label{tabE}Monte Carlo Simulation - Truncated data set with truncation bias correction}
	
	\begin{center}\scalebox{1}[0.7]{\begin{adjustbox}{width=25em}\begin{tabular}{ccccccccccc|}
					\hline
					\multicolumn{3}{|l|}{\multirow{3}{*}{True Parameter$^{\mathrm{a}}$}} & \multicolumn{2}{l|}{\multirow{3}{*}{Estimate}} & \multicolumn{6}{c|}{Model Setup}                                                                                                                                       \\ \cline{6-11} 
					\multicolumn{3}{|l|}{}                              & \multicolumn{2}{l|}{}                          & \multicolumn{6}{c|}{Sample size}                                                                                                                                       \\ \cline{6-11} 
					\multicolumn{3}{|l|}{}                              & \multicolumn{2}{l|}{}                          & \multicolumn{1}{c|}{500} & \multicolumn{1}{c|}{2000} & \multicolumn{1}{c|}{3000} & \multicolumn{1}{c|}{5000} & \multicolumn{1}{c|}{8000} & \multicolumn{1}{c|}{10000} \\ \hline

					\hline\multicolumn{11}{|c|}{$\textbf{Truncated}$ sample \textbf{JPEG \textit{IV}} estimates (First stage)}\\\cline{1-11} 
					\multicolumn{3}{|c|}{\multirow{3}{*}{$\delta_1=0.5$}} & \multicolumn{2}{c|}{Mean} & 0.4944 & 0.4975 & 0.4973 & 0.4996 & 0.5000 & 0.4999 \\\cline{4-11} \multicolumn{3}{|c|}{} & \multicolumn{2}{c|}{Median}  & 0.4951 & 0.4963 & 0.4971 & 0.4999 & 0.4998 & 0.5000 \\\cline{4-11} \multicolumn{3}{|c|}{} & \multicolumn{2}{c|}{Std}  & 0.1236 & 0.0675 & 0.0514 & 0.0461 & 0.0352 & 0.0298 \\\hline		
					
					\multicolumn{3}{|c|}{\multirow{3}{*}{$\delta_2=1$}} & \multicolumn{2}{c|}{Mean} & 0.9976 & 0.9983 & 0.9994 & 1.0001 & 1.0005 & 0.9999\\\cline{4-11} \multicolumn{3}{|c|}{} & \multicolumn{2}{c|}{Median}  &0.9977 & 0.9989 & 0.9989 & 1.0001 & 1.0006 & 1.0003\\\cline{4-11} \multicolumn{3}{|c|}{} & \multicolumn{2}{c|}{Std}  & 0.0902 & 0.0444 & 0.0360 & 0.0279 & 0.0221 & 0.0197\\\hline
					
					%${\gamma_2=-1}$ & -1.0169 & -1.0332 & -1.0363 & -1.0311 & -1.0111 & -1.0119\\{$\text{median}$}  & -0.9904 & -1.0026 & -1.0057 & -1.0010 & -0.9980 & -0.9960\\{$\text{std}$}  & (0.4276) & (0.4301) & (0.4194) & (0.3847) & (0.3145) & (0.2612)\\\hline

	\end{tabular}\end{adjustbox} }\end{center}  
	\scriptsize\begin{minipage}{8.7cm}%
		$\myspace$ \\ \textbf{Note:}    $^{\mathrm{a}}$ The parameters that are used in the data generation process.\\ We estimate the parameters using the  truncated sample by employing our JPEG IV estimator, correcting for both truncation as well as endogeneity bias. We calculate for these estimates the mean, median and standard deviation (Std.) over all data sets. The standard deviations are obtained using the estimates from the Monte Carlo simulations.
	\end{minipage}		
	\footnotesize
	\renewcommand{\baselineskip}{11pt}			
\end{table} 

%\vspace{1em}

\begin{algorithm}[H]
	\caption{The wavelet-based JPEG penalized regression}\label{PenalizedGradientDecent}
	\begin{algorithmic}[1]
		\tiny
		\Procedure{PenalizedLinearRegression}{u, Grid, Level, $\lambda$, $\gamma$}
		\Require (i) Two real-number vectors \textit{u} and Grid of size $n\times 1$; (ii) Level$\in\curly{1,...,\log_2(n)}$;\\ (iii) $\lambda\in(0,\infty)$ and $\gamma\in(1,\infty)$.	
		\Ensure $\text{Output}\gets$ a real-number coefficient vector $\widehat{\delta}$ of size $n\times 1$;
		
		%\Procedure{Euclid}
		
		\State $\textit{n} \gets \text{length of }\textit{u}$
		
		\State \emph{top}:
		\State $\delta_{\text{old}}$ $\gets$ \Call{ForwardTransform}{u, Grid, Level-1}
		
		\State $\widehat{u}_{\text{old}}$ $\gets$ u %\Call{InverseTransform}{$\delta_{\text{old}}$, Grid, 1}
		\State residual$_{\text{old}}$ $\gets$ u - $\widehat{u}_{\text{old}}$  
		\State Obj$_{\text{old}}$ $\gets$ $\frac{1}{2n}\sum_{i=1}^n$ residual$_{\text{old}}^2$[i]  + $\sum_{i=1}^n$ \Call{MCP.penalty}{$\abs{\delta_{\text{old}}[i]}$,$\lambda$, $\alpha$, $\gamma$)}
		\State v$_{\text{old}} \gets$ \Call{TransInverseTransform}{residual$_{\text{old}}$, Grid, Level-1} 
		\State flag $\gets$ TRUE	 
		\State gap $\gets$ infinite
		\State tolerance $\gets$ $10^{-9}$
		\State maxiter $\gets$ 1000
		\State $\eta$ $\gets$ 1.2
		\While{flag=TRUE and gap $>$ tolerance}
		\State $\alpha$ $\gets$ 1
		\State Iter $\gets$ 1	 
		\Repeat
		\State Update $\gets$ $\delta_{\text{old}}$ + $\frac{1}{\alpha n}$ v$_{\text{old}}$ 
		\State $\delta_{\text{new}}$ $\gets$ $S_{\alpha}$(Update, $\lambda$, $\alpha$, $\gamma$)
		\State $\widehat{u}_{\text{new}}$ $\gets$ \Call{InverseTransform}{$\delta_{\text{new}}$, Grid, Level-1}
		\State residual$_{\text{new}}$ $\gets$ u - $\widehat{u}_{\text{new}}$
		
		\State Obj$_{\text{new}}$ $\gets$ $\frac{1}{2n}\sum_{i=1}^n$ residual$_{\text{new}}^2$[i] + $\sum_{i=1}^n$ \Call{MCP.penalty}{$\abs{\delta_{\text{new}}[i]}$,$\lambda$, $\alpha$, $\gamma$)}
		\State $\alpha \gets \alpha \eta$
		\State flag $\gets$ Obj$_{\text{new}}$ $<$ Obj$_{\text{old}}$
		\Until{flag=TRUE or $iter > maxiter$}
		
		\If {flag = TRUE} 
		\State $\delta_{\text{old}}$ $\gets$ $\delta_{\text{new}}$
		\State Obj$_{\text{old}}$ $\gets$ Obj$_{\text{new}}$
		\State residual$_{\text{old}}$ $\gets$ residual$_{\text{new}}$
		\State v$_{\text{old}} \gets$ \Call{TransInverseTransform}{residual$_{\text{new}}$, Grid, Level-1} 
		\EndIf
		
		\State gap $\gets \normsq{\delta_{\text{new}}-\delta_{\text{old}}}{2}/\normsq{\delta_{\text{old}}}{2}$
		\State Iter $\gets$ Iter + 1		
		\EndWhile
		%$\close{\widehat{u}_{\text{old}},
		\State\Return{$\close{\delta_{\text{old}}}$}
		\EndProcedure
	\end{algorithmic}
\end{algorithm}

\subsection{Algorithms}
\begin{algorithm}[H]
	\tiny
	\caption{The Transposed-Inverse filter}\label{euclid}
	\begin{algorithmic}[1]
		\Procedure{TransInvFilter}{Series, Even, $\pi$, Grid}
		\Require (i) Two real-number vectors: Series of size $n\times 1$ and Grid of size $2n\times 1$; (ii) two scalars: $\text{Even}\in\curly{0,1}$ and $\pi\in\mathbb{R}$.	
		\Ensure $\text{Output}\gets$ Filter, a real-number vector of size $n\times 1$ consisting of the predicted series;			
		\State n $\gets$ length of Series
		\State  O $\gets$ copy the odd elements of Grid 
		\State E $\gets$ copy the even elements of Grid 			
		%        \State OSeries $\gets\curly{\text{OddGrid}, \text{OddGrid}}$
		\If {Even $=$ 1}
		%	\State $(new.u, new.Grid)\gets \text{ForwardTransform}\close{u[1:m], Grid[1:m], Level-1}$
		\State Low $\gets$ O[1:n-1]
		\State High $\gets$ O[2:n]
		\State weights $\gets$ (High-E[1:(n-1)])/(High-Low)
		\State $\varpi_l\gets$ [0, 1-weights]
		\State $\varpi_h\gets$ $\gets$ [weights, 1]
		\State S $\gets$ [Series[1], Series]		
		\State Filter $\gets$ $2\pi$ $\varpi_l\gets$*S[1:n] + $2\pi$ $\varpi_h\gets$*S[2:(n+1)]
		%		\State $j \gets j-1$.
		%		\State $i \gets i-1$.
		%		\State \textbf{goto} \emph{loop}.
		%		\State \textbf{close};
		\Else
		\State Low $\gets$ E[1:(n-1)]
		\State High $\gets$ E[2:n]
		\State weights $\gets$ (High-O[2:n])/(High-Low)
		\State $\varpi_l\gets$ [1, 1-weights]
		\State $\varpi_h\gets$ [weights, 0]
		\State S $\gets$ [Series, Series[n]]		
		\State Filter $\gets$ $2\pi$ $\varpi_l$*S[1:n] + $2\pi$ $\varpi_h$*S[2:(n+1)]		
		\EndIf
		%\State $i \gets i+\max(\textit{delta}_1(\textit{string}(i)),\textit{delta}_2(j))$.
		%\State \textbf{goto} \emph{top}.
		\State\Return{$\close{\text{Filter}}$}
		\EndProcedure
	\end{algorithmic}
\end{algorithm}

%\subsection{Algorithms}

\section*{Acknowledgment}
We would like to thank Boaz Nadler, Yaniv Tenzer and seminar participants in the Weizmann  Institute of Science and Tel-Aviv university for very constructive comments.
%We thank Larry Manevitz for very constructive comments and Omiros Papaspiliopoulos for very constructive conversations. 

	\bibliographystyle{ieeetran}
\linespread{1.1}
	\bibliography{reference2}
\linespread{0.7}	
	\begin{IEEEbiography}[{\includegraphics[width=2in,height=1in,clip,keepaspectratio]{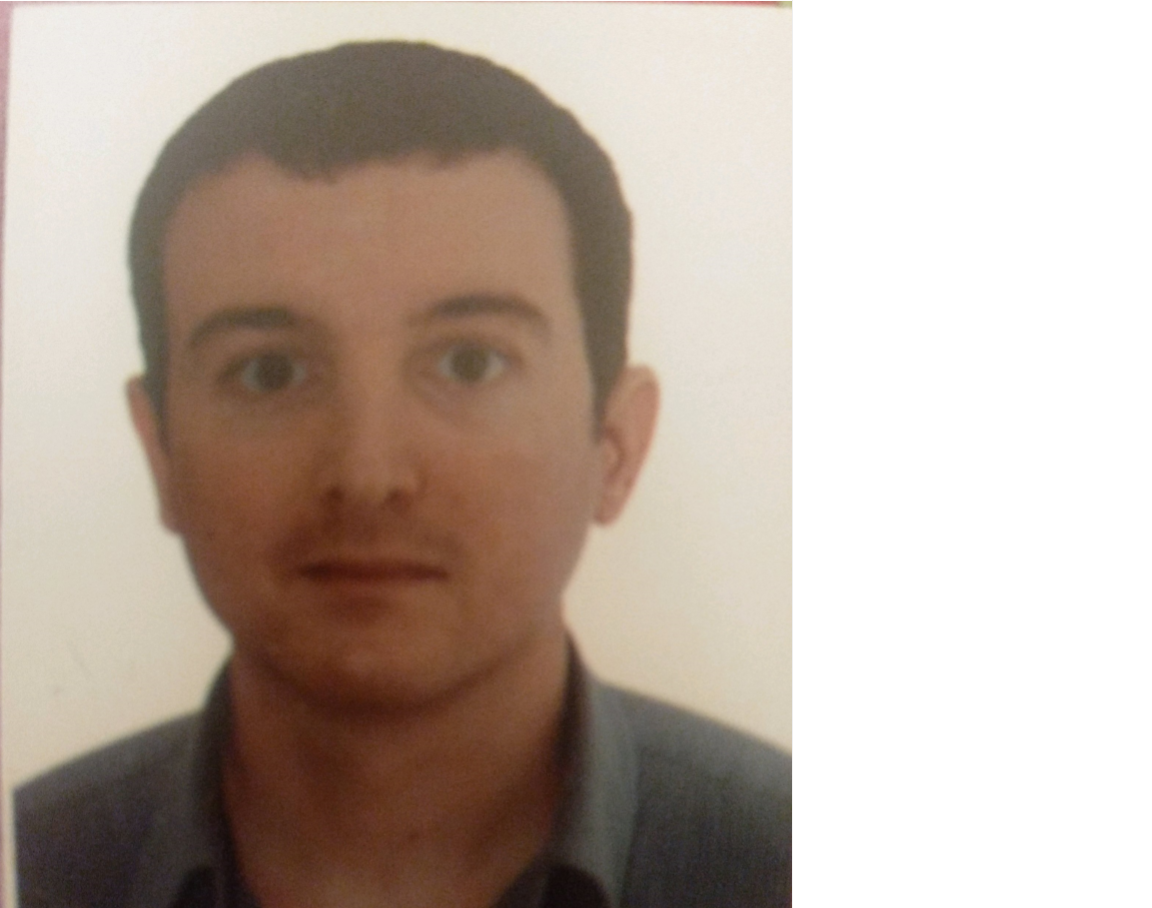}}]{D\lowercase{r}. Nir Billfeld} is a researcher at the university of Haifa, Israel. He received the B.A. in economics and statistics from the university of Haifa (2006), Israel, M.A. in economics from Tel-Aviv university (2010). He received his Ph.D. from the university of Haifa (2019). His work has appeared in IEEE. 
	\end{IEEEbiography}

	\begin{IEEEbiography}[{\includegraphics[width=1in,height=1.25in,clip,keepaspectratio]{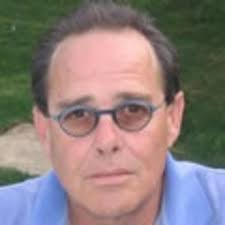}}]{P\lowercase{rof.} Moshe Kim} is professor of economics at the University of Haifa, Israel. He is the founder and former director of Barcelona Banking Summer School at the Universitat Pompeu Fabra, Barcelona, former director of the endowed chair of banking at Humboldt University of Berlin, Senior Distinguished Fellow at the Swedish School of Economics in Helsinki (Hanken), institute research professor at the German Institute for Economic Research (DIW), consultant at the Central Bank of Norway and recently visited NYU Shanghai. He was recently declared high end foreign expert by the Chinese foreign ministry and is a recent recipient of the Outstanding Tutor Award from the Chinese Ministry of Education. He holds a PhD from the University of Toronto. Kim's research interests are econometrics, banking, financial markets, and industrial organization. His books include Microeconometrics of Banking: Methods, Applications, and Results (Oxford University Press, 2009). His work has appeared in IEEE, the Journal of Finance, the Journal of Monetary Economics, the Journal of Financial Intermediation, the Journal of Business and Economics Statistics, the Journal of Money Credit and Banking, International Economic Review, the Journal of Accounting and Economics, the Journal of Public Economics, the Journal of Urban Economics, the Journal of Law and Economics, the Journal of Banking and Finance, the Journal of Industrial Economics, the International Journal of Industrial organization. 
	\end{IEEEbiography}

	\EOD
	
\end{document}